\numberwithin{equation}{section}
\numberwithin{figure}{section}
\newtheoremstyle{theorem} % name
	{\topsep}                    % Space above
	{\topsep}                    % Space below
	{\itshape}                   % Body font
	{}                           % Indent amount
	{\scshape\bfseries}                   % Theorem head font
	{.}                          % Punctuation after theorem head
	{.5em}                       % Space after theorem head
	{}  % Theorem head spec (can be left empty, meaning ?normal?)
\theoremstyle{theorem}
\newtheorem{thm}{Theorem}[section]
\newtheorem*{thm*}{Theorem}
\newtheorem{lemma}[thm]{Lemma}
\newtheorem{prop}[thm]{Proposition}
\newtheorem{coro}[thm]{Corollary}
\theoremstyle{definition}
\newtheoremstyle{remark} % name
	{\topsep}                    % Space above
	{\topsep}                    % Space below
	{}                           % Body font
	{}                           % Indent amount
	{\scshape\bfseries}          % Theorem head font
	{.}                          % Punctuation after theorem head
	{.5em}                       % Space after theorem head
	{}  % Theorem head spec (can be left empty, meaning ?normal?)
\theoremstyle{remark}
\newtheorem{rmk}[thm]{Remark}
\newtheorem{ex}[thm]{Example}
\numberwithin{thm}{section}
\newcommand{\pushright}[1]{\ifmeasuring@#1\else\omit\hfill$\displaystyle#1$\fi\ignorespaces}
\newcommand{\pushleft}[1]{\ifmeasuring@#1\else\omit$\displaystyle#1$\hfill\fi\ignorespaces}
\def\N{{\mathbb N}}
\def\Z{{\mathbb Z}}
\def\R{{\mathbb R}}
\def\C{{\mathbb C}}
\newcommand{\Znneg}{\Z_{\geq0}}
\newcommand{\Zpos}{\Z_{>0}}
\newcommand{\One}{\mathbbm{1}}
\newcommand{\identity}{\mathbb{I}}
\newcommand{\E}{{\mathbb E}}
\renewcommand{\P}{{\mathbb P}}
\newcommand{\indicator}{\mathbbm{1}}
\newcommand{\beq}{\begin{align}}
\newcommand{\eeq}{\end{align}}
\newcommand{\nnbeq}{\begin{align*}}
\newcommand{\nneeq}{\end{align*}}
\newcommand{\black}{\color{black}}
\newcommand{\niceBlue}{\color{NavyBlue}}
\newcommand{\niceRed}{\color{RedOrange}}
\newcommand{\term}[1]{\textbf{#1}}
\newcommand{\mybar}[3]{%
    \mathrlap{\hspace{#2}\overline{\scalebox{#1}[1]{\phantom{\ensuremath{#3}}}}}\ensuremath{#3}
}
\newcommand{\bk}{\mathbf{k}}
\newcommand{\bl}{\bm{\ell}}
\newcommand{\thectt}{\alpha}
\newcommand{\EnvSymFer}{\mathcal{SF}}
\newcommand{\EnvSymFerBar}{\mybar{1}{1.5pt}{\mathcal{SF}}}
\newcommand{\FullEnvSymFer}{\EnvSymFer \otimes \EnvSymFerBar}
\newcommand{\FullSymFer}{\SymFer \oplus \SymFerBar}
\newcommand{\SymFer}{\textnormal{\textbf{SF}}}
\newcommand{\Vir}{\textnormal{\textbf{Vir}}}
\newcommand{\SymFerBar}{\overline{\SymFer}}
\newcommand{\AlgEta}[1]{{\upeta}_{#1}}
\newcommand{\AlgChi}[1]{{\upchi}_{#1}}
\newcommand{\AlgEtaBar}[1]{\mybar{0.97}{-0.6pt}{\upeta}_{#1}}
\newcommand{\AlgChiBar}[1]{\mybar{1}{-0.6pt}{\upchi}_{#1}}
\newcommand{\AlgK}{\mathsf{k}}
\newcommand{\AlgKBar}{\mybar{1}{-0.5pt}{\mathsf{k}}}
\newcommand{\UniEnve}{\mathcal{U}}
\newcommand{\FullFock}{\mathscr{F}{\!\!\textnormal{\footnotesize\itshape ock}}}
\newcommand{\GroundPartner}{\boldsymbol{\omega}}
\newcommand{\Ground}{\boldsymbol{\mathbbm{1}}}
\newcommand{\GroundEta}{\boldsymbol{\theta}}
\newcommand{\GroundChi}{\boldsymbol{\xi}}
\newcommand{\HolCurrentEta}{\boldsymbol{\eta}}
\newcommand{\HolCurrentChi}{\boldsymbol{\chi}}
\newcommand{\AntiHolCurrentEta}{\overline{\boldsymbol{\eta}}}
\newcommand{\AntiHolCurrentChi}{\overline{\boldsymbol{\chi}}}
\newcommand{\AlgL}[1]{\textnormal{L}_{#1}}
\newcommand{\AlgLBar}[1]{\mybar{0.85}{-1pt}{\textnormal{L}}_{#1}}
\newcommand{\AlgC}{\textnormal{C}}
\newcommand{\id}{\textnormal{id}}
\newcommand{\SugL}[1]{{\mathsf{L}}_{#1}}
\newcommand{\SugLBar}[1]{\mybar{0.6}{-0.3pt}{\mathsf{L}}_{#1}}
\newcommand{\dL}[1]{{\mathbf{L}}_{#1}}
\newcommand{\dLBar}[1]{\mybar{0.85}{-1pt}{\mathbf{L}}_{#1}}
\newcommand{\End}{\textnormal{End}}
\newcommand{\spn}{\mathrm{span}}
\newcommand{\ii}{\mathbbm{i}}
\newcommand{\domain}{\Omega}
\newcommand{\SFtag}{\textnormal{SyFe}}
\newcommand{\bigSFCorrFun}[3]{\big\langle #3 \big\rangle_{#1; #2}^{\SFtag}}
\newcommand{\BigSFCorrFun}[3]{\Big\langle #3 \Big\rangle_{#1; #2}^{\SFtag}}
\newcommand{\AnalyFun}[2]{C^{\,\omega}\big( #1 , #2 \big)}
\newcommand{\Conf}[2]{\textnormal{Conf}_{#1}( #2 )}
\newcommand{\Green}{\mathbf{G}}
\newcommand{\sgntr}{\textnormal{sgn}}
\newcommand{\bdry}{\partial}
\newcommand{\HarmOfGreen}{\mathbf{g}}
\newcommand{\field}{\boldsymbol{\varphi}}
\newcommand{\cd}{\textnormal{d}}
\newcommand{\Bij}{\mathfrak{B}}
\newcommand{\Zprimary}{\Z^2_\primary}
\newcommand{\Zdual}{\Z^2_\dual}
\newcommand{\Zdiamond}{\Z^2_\diamond}
\newcommand{\Zmedial}{\Z^2_\medial}
\newcommand{\Zcorner}{\Z^2_\corner}
\newcommand{\fDGFFtheta}{\theta}
\newcommand{\fDGFFxi}{\xi}
\newcommand{\meshsize}{\delta}
\newcommand{\ddomain}[1]{\Omega^{#1}}
\newcommand{\zd}{\mathbf{z}}
\newcommand{\zdbis}{\mathbf{w}}
\newcommand{\xd}{\mathbf{x}}
\newcommand{\xdbis}{\mathbf{y}}
\newcommand{\fDGFFtag}{\textnormal{fDGFF}}
\newcommand{\bigfDGFFCorrFun}[2]{\big\langle #2 \big\rangle_{#1}^{\fDGFFtag}}
\newcommand{\BigfDGFFCorrFun}[2]{\Big\langle #2 \Big\rangle_{#1}^{\fDGFFtag}}
\newcommand{\ActionfDGFF}{\mathbf{S}}
\newcommand{\Grass}{\textnormal{Gr}}
\newcommand{\FieldPoly}{\mathcal{P}}
\newcommand{\NullFieldPoly}{\mathcal{N}}
\newcommand{\Fields}{\mathcal{F}}
\newcommand{\ev}{\textnormal{ev}}
\newcommand{\zu}{\mathbf{u}}
\newcommand{\zubis}{\mathbf{v}}
\newcommand{\medial}{\textnormal{m}}
\newcommand{\primary}{\bullet}
\newcommand{\dual}{*}
\newcommand{\corner}{\textnormal{c}}
\newcommand{\antiholcurrtheta}{\mybar{0.9}{-0.2pt}{\boldsymbol{\upeta}}}
\newcommand{\antiholcurrxi}{\mybar{1.1}{-0.7pt}{\boldsymbol{\upchi}}}
\newcommand{\holcurrtheta}{\boldsymbol{\upeta}}
\newcommand{\holcurrxi}{\boldsymbol{\upchi}}
\newcommand{\pddeebar}{\mybar{0.8}{1pt}{\partial}^\bullet_\sharp}
\newcommand{\pddee}{\partial^\bullet_\sharp}
\newcommand{\dgrad}{\nabla_{\!\sharp}}
\newcommand{\ddeebar}{\mybar{0.8}{1pt}{\partial}_\sharp}
\newcommand{\partialBar}{\mybar{0.8}{1pt}{\partial}}
\newcommand{\ddee}{\partial_\sharp}
\newcommand{\dlaplacian}{\Delta_\sharp}
\newcommand{\Fieldtheta}{\boldsymbol\uptheta}
\newcommand{\Fieldxi}{\boldsymbol\upxi}
\newcommand{\Null}{\textnormal{\small Null}}
\newcommand{\HolCurrModetheta}[1]{\boldsymbol{\eta}_{#1}}
\newcommand{\HolCurrModexi}[1]{\boldsymbol{\chi}_{#1}}
\newcommand{\AntiHolCurrModetheta}[1]{\mybar{1}{0pt}{\boldsymbol{\eta}}_{#1}}
\newcommand{\AntiHolCurrModexi}[1]{\mybar{1}{0pt}{\boldsymbol{\chi}}_{#1}}
\newcommand{\dd}{\textnormal{d}^{\sharp}}
\newcommand{\dint}{\int^{\sharp}}
\newcommand{\dsqint}{\sqint^{\sharp}}
\newcommand{\Kronecker}{\boldsymbol{\delta}}
\newcommand{\Fock}{\FullFock}
\newcommand{\supp}{\textnormal{supp}\,}
\newcommand{\interior}{\textnormal{int}\,}
\newcommand{\interiordiam}{\textnormal{int}_\diamond}
\newcommand{\interiorprim}{\textnormal{int}_\primary}
\newcommand{\interiormed}{\textnormal{int}_\medial}
\newcommand{\fDGFFGround}{\mathbf{1}}
\newcommand{\fDGFFGroundTheta}{\boldsymbol{\uptheta}}
\newcommand{\fDGFFGroundXi}{\boldsymbol{\upxi}}
\newcommand{\fDGFFGroundPartner}{\boldsymbol{\upomega}}
\newcommand{\LinearFields}{\Fields_{\textnormal{lin}}}
\newcommand{\LinearFieldPoly}{\FieldPoly_{\textnormal{lin}}}
\newcommand{\RepHolChi}[1]{{X}_{#1}}
\newcommand{\RepAntiHolChi}[1]{\mybar{0.9}{1.5pt}{{X}}_{#1}}
\newcommand{\RepHolEta}[1]{{H}_{#1}}
\newcommand{\RepAntiHolEta}[1]{\mybar{0.9}{1.5pt}{{H}}_{#1}}
\newcommand{\RepHolChiNoIndex}{{X}}
\newcommand{\RepAntiHolChiNoIndex}{\mybar{0.9}{1.5pt}{{X}}}
\newcommand{\RepHolEtaNoIndex}{{H}}
\newcommand{\RepAntiHolEtaNoIndex}{\mybar{0.9}{1.5pt}{{H}}}
\newcommand{\LattSFAsso}{\mathcal{SF}\oplus\mybar{0.89}{2.5pt}{\mathcal{SF}}}
\newcommand{\noQuo}[1]
        {\, \rotatebox[]{90}{\scalebox{.8}{$\ \circ\,\circ$}}\,#1\,\rotatebox[]{90}{\scalebox{.8}{$\ \circ\,\circ$}}\,}
\newcommand{\no}[1]{\, \rotatebox[]{90}
        {\scalebox{.8}{$\ \bullet\,\bullet$}}\,#1\,\rotatebox[]{90}{\scalebox{.8}{$\ \bullet\,\bullet$}}\,}
\newcommand{\Pair}{\mathfrak{P}}
\newcommand{\degreeField}{\mathsf{deg}}
\newcommand{\Heights}[1]{\mathcal{H}_{#1}}
\newcommand{\Recurrent}[1]{\mathcal{R}_{#1}}
\newcommand{\heightOne}{\mathsf{h}_1}
\newcommand{\dissipation}{\mathsf{d}}
\newcommand{\bigASMseparator}{\rotatebox[]{90}{\scalebox{.65}{\textbf{ ~- - - }}}}
\newcommand{\ASMseparator}{\rotatebox[]{90}{\scalebox{.55}{\hspace{-2pt} \textbf{ ~- - - }}}}
\newcommand{\dheightOne}{\mathsf{H}_1}
\newcommand{\dissField}{\mathsf{D}}
\newcommand{\contHeightOne}{\mathbf{H}_1}
\newcommand{\dDegree}{\mathsf{T}}
\newcommand{\contDegree}{\mathbf{T}}
\newcommand{\OEright}{\!\bullet\!\mathbf{-}}
\newcommand{\OEleft}{\!\mathbf{-}\mspace{-7mu}\bullet\!}
\newcommand{\OEleftsubs}{\!\mathbf{-}\!\bullet}
\newcommand{\contHorEdgeField}{\mathbf{H}}
\newcommand{\horEdgeField}{\mathsf{H}}
\titleformat{\subsubsection}[runin]
    {\normalfont\bfseries}{\thesubsubsection.}{0pt}{\hspace{0.5em}}[.]
\setlist[description]{leftmargin=4.3cm,labelindent=1cm,rightmargin=2.5cm}
\begin{document}

\title{\vspace{1cm}
    \Large\scshape\bfseries
    The fermionic DGFF and its scaling limit logCFT
    \\
    \vspace{30pt}
        }

\author{
	David Adame$\,$-$\hspace{2pt}$Carrillo\footnote{\texttt{david.adame-carrillo@helsinki.fi}}
	$\ $ and $\ $
	Wioletta M. Ruszel\footnote{\texttt{w.m.ruszel@uu.nl}}\vspace{0.1cm}}

\date{\small\itshape
$^*$Department of Mathematics and Statistics, University of Helsinki
\\
$^\dagger$Mathematical Institute, Utrecht University
}

%%%%%%%%%%% title page
\maketitle

\vspace{30pt}

\begin{abstract}
In this paper, we identify the scaling limit of the fermionic discrete Gaussian free field (fDGFF) as a logarithmic conformal field theory (CFT) in two dimensions.
We first establish a one-to-one correspondence between the space of local observables of the fDGFF and the space of local fields of the symplectic fermions CFT, a logarithmic CFT with central charge $c = - 2$.
This correspondence is meaningful in the sense that, when appropriately renormalised, the fDGFF correlation functions converge to corresponding CFT correlation functions in the scaling limit.
As an application to these results, we interpret (the scaling limit of) certain local observables in the uniform spanning tree and the Abelian sandpile model as local fields of the symplectic fermions.
\end{abstract}

\vfill
{\small
\noindent
Keywords:~%
{\it%
fermionic Gaussian free field,
symplectic fermions,
conformal field theory,
local fields, Virasoro structure,
discrete complex analysis,
scaling limit of corelation functions,
uniform spanning tree,
Abelian sandpile model.
}}

\vspace{20pt}

\pagenumbering{gobble}

%%%%%%%%%%% table of contents
\newpage
\tableofcontents

\newpage
\pagenumbering{arabic}
\setcounter{page}{2}

\setlength{\parskip}{5pt}
%%%%%%%%%%% actual contents
%\newpage
\section{Introduction}
\paragraph{Lattice models and (logarithmic) conformal field theory.}
In the past twenty-five years, substantial progress has been made in the rigorous understanding of (the scaling limit of) critical models of statistical mechanics in the framework of conformal field theory (CFT) in two dimensions.
Notable milestones include Kenyon's analysis of the height function of the dimer model ---\cite{Kenyon00}--- as well as the works of Chelkak, Hongler, Izyurov and Smirnov on the Ising model ---\cite{Smi10,Hon10,ChSm12,HoSm13,CHI15,CHI21}---.
In those works, the authors identify the scaling limit of the ---suitably renormalised--- correlation functions of certain observables in the corresponding lattice model as CFT correlation functions.
It is worth noting that, in some cases, the scaling limit of local observables can be made sense of random distributions ---\cite{Kenyon01,CGN-magnetisation_ising}--- but this is not always the case ---\cite{garban_kup}---.
These achievements constituted rigorous confirmation of a prediction made by the physicists Belavin, Polyakov and Zamolodchikov in the 1980s, asserting that probabilistic models at criticality can be described by conformal field theories ---\cite{belA,belB}---.
Despite the extensive body of work in the Physics literature inspired by this conjecture, most models of statistical mechanics continue to lack a complete rigorous treatment of their scaling-limit CFT picture.

In the breakthrough work \cite{HKV}, the authors address this conjecture from a new angle in the context of the Ising model and the discrete Gaussian free field.
The key insight is the following:
employing techniques of discrete complex analysis,
the arguments typically used in the CFT literature can be reproduced at the discrete level in models that are sufficiently integrable.
This way, one can render the space of local observables of the discrete model a representation of the relevant symmetry algebra.
This approach permits a more systematic and exhaustive analysis of the space of local observables of a discrete model.
This program was further applied to fermionic observables of the double-dimer model in~\cite{AdameCarrillo-discrete_symplectic_fermions}.
Furthermore, in~\cite{ABK-DGFF_local_fields}, the authors used this approach to establish a one-to-one correspondence between local observables of the DGFF and local fields in the free boson CFT---hence finding for the first time the full structure of a CFT in a lattice model. 

In certain critical models of statistical mechanics, the correlation functions 
of specific observables exhibit logarithmic divergencies when their insertion points are brought together.
While this behaviour appears to spoil the conformal covariance of the theory, it can still be incorporated in the framework of CFT---theories featuring this phenomenon are referred to as logarithmic conformal field theories (logCFTs).
This class of models has recently attracted considerable attention.
Notably, Ruelle and others found natural observables in the Abelian sandpile model that can be accommodated in a logCFT---see \cite{Ruelle-SM in the large} for a review.
Also critical (site Bernoulli) percolation has seen a notorious breakthrough in its description as a logarithmic CFT in the series of works \cite{camia_perco,CF-percolation_log_OPEa,CF-percolation_log_OPEb,CF-percolation_energy}.
\pagebreak

One of the most well-understood examples of a logCFT is the symplectic fermions, which was first described by Kausch in \cite{Kau00}---see \cite{AdaCar-symplectic_fermions} for a review of the theory in general domains.
This logCFT has been connected to the uniform spanning tree ---\cite{LPW-UST_CFT}--- and the Abelian sandpile model ---\cite{Ruelle-SM in the large}---.
In turn, certain correlation functions in these two probabilistic models are well-known to be computable via a discrete model known as the fermionic discrete Gaussian free field (fDGFF)---see, e.g., \cite{BCHS-fDGFF_forests,CCRR-fDGFF_sandpile_UST}.

For a given mesh size $\meshsize>0$, the fermionic DGFF on a discrete domain $\ddomain{\meshsize}\subset\meshsize\Z^2$ can be roughly understood as generated by two fermionic fiedlds $\xi$ and $\theta$ whose correlation functions are discrete harmonic as long as the fields are inserted away from each other.
The fermionic nature of the fields yields a determinantal structure of the multipoint correlation functions of the model.
See Section~\ref{subsec: fDGFF def} for the precise definition.

\vspace{-10pt}
\paragraph{Contribution and discussion.}
In this paper, inspired by \cite{HKV, ABK-DGFF_local_fields}, we aim at conceiving the fermionic discrete Gaussian free field as a discretised version of the symplectic fermions conformal field theory.
The way this statement is made precise is by establishing a one-to-one correspondence between local observables of the fDGFF and local fields in the CFT such that ---when appropriately renormalised--- the discrete correlation functions of the former converge to the CFT correlation functions of the latter in the scaling limit.

Our first main result concerning the discrete-local-observable/CFT-local-field is stated algebraically.
To that end, let us briefly review some algebraic facts of two-dimensional CFT.
The conformal symmetries of the theory are encoded into the space of fields by the Virasoro algebra,
that is, the infinite-dimensional Lie algebra
\begin{align*}
    \Vir
    \,\coloneqq\,
    \Big( \, \bigoplus_{n\in \mathbb{Z}} \,\mathbb{C}\,\AlgL{n} \Big)
    \oplus
    \mathbb{C} \, \AlgC
\end{align*}
equipped with the Lie brackets
\begin{align*}
    \big[ \AlgL{n}, \AlgL{m} \big]
    \,=\,
    (n-m) \, \AlgL{n+m} + \delta_{n+m} \frac{n^3-n}{12} \, \AlgC\,,
    \mspace{70mu}
    \big[ \AlgC, \AlgL{n} \big] \,=\, 0,
\end{align*}
for all $n, m\in \mathbb{Z}$.
In particular, the space of fields of a two-dimensional CFT is a representation of (two commuting copies of) the Virasoro algebra.
In CFT, the central element $\AlgC$ acts as a fixed scalar $c$ multiple of the identity operator---in the model at hand, we have $c=-2$.

A characteristic feature of logCFTs is their more exotic Virasoro structure; unlike in standard QFT, the energy operator~$(\AlgL{0}+\AlgLBar{0})$ is not diagonalisable.
Geometrically, the operator~$(\AlgL{0}+\AlgLBar{0})$ is the generator of dilations---its (generalised) eigenvalues are called (generalised) scaling dimensions and they determine the behaviour of a given field under rescaling.

On the other hand, in the discrete setting, one should think of local observables of the fDGFF as polynomials on the fields $\xi$ and $\theta$ and their derivatives at a point.
However, at any given positive mesh size $\meshsize > 0$, these polynomials involve the field values on a finite neighborhood of the insertion point---see Section~\ref{subsec: local fields} for their definition.
This space can be rendered a $\Vir\oplus\overline\Vir$ representation adapting the techniques developed in \cite{AdameCarrillo-discrete_symplectic_fermions,ABK-DGFF_local_fields}.
Then, our first main result can be informally stated as follows.

\noindent
\textsc{\textbf{Theorem~1.}}
{\itshape
As representations of $\Vir\oplus\overline\Vir$, we have the isomorphism
\begin{align*}
\bigg\{
    \begin{array}{c}
    \text{local observables of} \\
    \text{the fermionic DGFF}
    \end{array}
\bigg\}
\; \cong \;
\bigg\{
    \begin{array}{c}
    \text{local fields of the} \\
    \text{symplectic fermions}
    \end{array}
\bigg\}\,.
\end{align*}
}

More precisely, we prove that those spaces are isomorphic as representations of the symplectic fermion symmetry algebra, which implies the above statement.
See Theorem~\ref{thm: isomorphism} for the precise statement.

The second main result concerns the scaling limit of correlation functions.
The definition of the discrete correlation functions of the fermionic DGFF can be found in Section~\ref{subsec: fDGFF def}; 
the CFT correlation functions are defined in Section~\ref{subsec: corr fun symplectic fermions}.

The setup of the scaling limit is as follows.
Let $\ddomain{\meshsize}\subset\meshsize\Z^2$ be discrete domains that approximate ---in an appropriate sense--- the simply-connected domain $\domain \subset \C$ as $\meshsize \downarrow 0$, and let $z_1^{\meshsize} , \ldots , z_n^{\meshsize}$ be the points in $\ddomain{\meshsize}$ that are closest to the fixed points $z_1 , \ldots , z_n \in \domain$ respectively.
We also fix a positive constant $\lambda > 0$, whose interpretation from the physics side we shall discuss after the statement of the theorem.
Our second main result can then be stated informally as follows.

\noindent
\textsc{\textbf{Theorem~2.}}
{\itshape
Let $\field_1 , \ldots , \field_n$ be symplectic-fermions local fields with generalised scaling \linebreak[3] dimensions $D_1 , \ldots , D_n$ respectively.
Let $F_i$ and $\tilde{F}_i$ be the fDGFF local observables corres\-ponding ---via Theorem~1--- to the local field $\field_i$ and $[(\AlgL{0}+\AlgLBar{0})-D_i]\field_i$, respectively.
Then, we have
\begin{align*}
    \frac{
    \Big\langle
    \Big(
    F_1(z^\meshsize_1) - \log ( \lambda \meshsize ) \tilde{F}_1(z^\meshsize_1)
    \Big)
    \!\cdots\!
    \Big(
    F_n(z^\meshsize_n) - \log ( \lambda \meshsize ) \tilde{F}_n(z^\meshsize_n)
    \Big)
    \Big\rangle^{\textnormal{fDGFF}}_{\ddomain{\meshsize}}
    \!\!\!\!\!\!\!\!
    }{
    \mathlarger{\mathlarger{\meshsize}}^{\,D_1}
    \;\cdots\;
    \mathlarger{\mathlarger{\meshsize}}^{\,D_n}
    }
    \mspace{4mu}
    \xrightarrow{\mspace{20mu} \meshsize \downarrow 0 \mspace{20mu}}
    \mspace{1mu}
    \Big\langle
    \field_1(z_1) \cdots \field_n(z_n)
    \Big\rangle^{\textnormal{SyFe}}_{\domain;\lambda}
\end{align*}
uniformly for $z_1, \ldots , z_n$ on compacts of $\domain$ and away from one another.
}

The constant $\lambda > 0$ should be interpreted as an arbitrary choice of a linear scale---there is no good reason to pick the square side length $\meshsize$ instead of the square diagonal length $\sqrt{2}\meshsize$ as the characteristic linear scale of our approximation $\ddomain{\meshsize}$.
This fact leads to the scaling limit not being unique, but rather unique up to a choice of a linear scale.
Indeed, notice that we added a dependence on $\lambda$ in the CFT correlation functions.
This non-uniqueness is explained in CFT by the existence of non-trivial self-isomorphism of the space of fields of the symplectic fermions;
see \cite{AdaCar-symplectic_fermions} for more details.
We give explicit examples of this phenomenon in Example~\ref{ex: 1pt fctn log partner} and Section~\ref{subsubsec: ASM}.

These results have applications to the the uniform spanning tree (UST) and the Abelian sandpile model (ASM) via their connection to the fermionic DGFF.
In particular, any local observable in those models whose correlation functions can be computed with fDGFF correlation functions can be identified with a local field in the symplectic fermions CFT.
Moreover, the (appropriately renormalised) scaling limit of its correlation functions can be explicitly computed.
In Section~\ref{subsec: apps to stat mech}, we discuss these matters for the following observables:
the degree field and the open-horizontal-edge field in the UST,
and the height-one field and the dissipation field in the ASM.

\vspace{-10pt}
\paragraph{Organisation of the paper.}
In Section~\ref{sec:background}, we provide the specific background that is required for the precise formulation of our results.
In particular, we discuss the symplectic fermions CFT and present the tools of discrete complex analysis that we employ in our treatment.
Section~\ref{sec: local_fields} is devoted to the definition of the fDGFF and its space of local fields.
Furthermore, we define certain operators ---the current modes--- in the space of local fields that are key in our approach.
Our two main results ---Theorems~1~and~2 above--- are stated in precise terms and discussed in Section~\ref{sec:results}.
We complement these results with concrete applications to two probabilistic models---namely, the Abelian sandpile model and the uniform spanning tree.
Most proofs throughout the text are postponed to Section~\ref{sec: proofs}.
\vfill
\noindent
\textbf{Acknowledgments:}
The authors thank Kalle~Kyt\"ol\"a for many insightful discussions and Aapo~Pajala for the code to make certain ---otherwise long and tedious--- computations in this paper.
The authors also enjoyed and benefited from discussing topics  related to this work with Philippe~Ruelle and Dirk~Schuricht. Furthermore the authors are thankful to Federico Camia for providing useful references and for the discussions.
DAC is grateful for the hospitality of the Utrecht University during his visits.
DAC was supported by the Research Council of Finland;
project 346309:~Finnish Centre of Excellence in Randomness and Structures (FiRSt).
Part of this research was performed while the authors were visiting the Institute for Pure and Applied Mathematics (IPAM), which is supported by the National Science Foundation (Grant No.~DMS-1925919).
WMR is supported by OCENW.KLEIN.083 grant and the Vidi grant VI.Vidi.213.112 from the Dutch Research Council (NWO).

\newpage
\section{Background}
\label{sec:background}
In this section, we review notions that, while possibly familiar to some readers, are essential in the derivation of our results.
In addition, this section is used to establish the notation used throughout the remainder of the text.

In Section~\ref{subsec: symplectic fermions} we review the symplectic fermions CFT, this is based on \cite{AdaCar-symplectic_fermions}.
In particular, in Section~\ref{subsec: log fock space}, we give an explicit construction of its space of local fields as the logarithmic Fock space and, in Section~\ref{subsec: corr fun symplectic fermions}, we characterise its correlation functions on general domains of the complex plane.
The tools of discrete complex analysis that we use in our analysis are presented in Section~\ref{subsec: tools discret C-anal}.

\vspace{-10pt}
\subsection{The symplectic fermions CFT}
\label{subsec: symplectic fermions}
The symplectic fermions is a conformal field theory that possesses a symmetry algebra that is more elementary than the Virasoro algebra.
Its symmetry algebra we call the \term{symplectic fermions algebra}---it is the vector space
\begin{align*}
    \FullSymFer
    \,\coloneqq\,
    \left( \bigoplus_{k\in\Z} \C\, \AlgEta{k} \right)
    \oplus
    \left( \bigoplus_{k\in\Z} \C\, \AlgChi{k} \right)
    \oplus
    \C\, \AlgK
    \oplus
    \left( \bigoplus_{k\in\Z} \C\, \AlgEtaBar{k} \right)
    \oplus
    \left( \bigoplus_{k\in\Z} \C\, \AlgChiBar{k} \right)
    \oplus
    \C\, \AlgKBar
\end{align*}
equipped with the Lie superbrackets
\begin{align*}
    \big\{ \AlgEta{k},\AlgChi{\ell}\big\}
    \,=\,
    k\, \delta_{k+\ell} \, \AlgK\,, \phantom{\Big\vert}
    \mspace{100mu}
    \big\{ \AlgEta{k},\AlgEta{\ell} \big\}
    \,=\,
    \big\{ \AlgChi{k},\AlgChi{\ell} \big\}
    \,=\, & \,
    0 
    \,=\,
    \big[\, \AlgK , \FullSymFer \,\big]\,,
    \\
    \phantom{\bigg\vert}
    \big\{ \AlgEtaBar{k},\AlgChiBar{\ell}\big\}
    \,=\,
    k\, \delta_{k+\ell} \, \AlgKBar\,, \phantom{\Big\vert}
    \mspace{100mu}
    \big\{ \AlgEtaBar{k},\AlgEtaBar{\ell} \big\}
    \,=\,
    \big\{ \AlgChiBar{k},\AlgChiBar{\ell} \big\}
    \,=\, & \,
    0 
    \,=\,
    \big[\, \AlgKBar , \FullSymFer \,\big]\,,
\end{align*}
and
\begin{align*}
    \big\{ \AlgChi{k},\AlgChiBar{\ell} \big\}
    \,=\,
    \big\{ \AlgChi{k},\AlgEtaBar{\ell} \big\}
    \,=\,
    \big\{ \AlgEta{k},\AlgChiBar{\ell} \big\}
    \,=\,
    \big\{ \AlgEta{k}, & \,\AlgEtaBar{\ell} \big\}
    \,=\,
    0
\end{align*}
for $k,\ell \in \mathbb{Z}$.

\vspace{-10pt}
\subsubsection{Space of fields}
\label{subsec: log fock space}
The space of fields of the theory is constructed as a Fock space of the symmetry algebra.
We first consider the associative algebra
\begin{align*}
\FullEnvSymFer
\,=\,
\UniEnve \big(\, \SymFer \oplus \SymFerBar \,\big)
\,\big/\,
(\, \AlgK-1\,,\,\AlgKBar-1 \,)
\end{align*}
where the generators $\AlgK$ and $\AlgKBar$ are set to $1$.
The \term{logarithmic Fock space} of the symplectic fermions is then defined as the quotient
\begin{align*}
\FullFock
\, \coloneqq \,
\FullEnvSymFer
\, \big/ \,
\big(\, \AlgEta{0} - \AlgEtaBar{0}\,,\; \AlgChi{0} - \AlgChiBar{0}\,,\; \AlgEta{k}\,,\; \AlgChi{k}\,,\; \AlgEtaBar{k}\,,\; \AlgChiBar{k} \,\colon\, k>0  \,\big)
\end{align*}
of the associative algebra $\FullEnvSymFer$ by the left ideal generated by $\AlgEta{k}$, $\AlgChi{k}$, $\AlgEtaBar{k}$ and $\AlgChiBar{k}$ with positive indices $k$, and the elements $\AlgEta{0} - \AlgEtaBar{0}$ and $\AlgChi{0} - \AlgChiBar{0}$.
Note that $\FullFock$ is canonically a representation of $\FullSymFer$ and a ($\FullEnvSymFer$)-module.

The logarithmic Fock space serves as the space of local fields of the symplectic fermions CFT; in Section~\ref{subsec: corr fun symplectic fermions} below we will construct the correlation functions of this theory.
There is a number of fields that play distinguished roles in the construction.
First, we have the \term{ground states}.
These are the bosonic states
\begin{align*}
    \phantom{\Big\vert}
    \Ground
    \,&\,\coloneqq \,
    [\,\AlgChi{0}\AlgEta{0}\,]\,,
    & \textnormal{(\textbf{identity field})}
    \\
    \phantom{\Big\vert}
    \GroundPartner
    \,&\,\coloneqq \,
    [\,1\,]\,,
    & \textnormal{ (\textbf{logarithmic partner} of $\Ground$)}
\end{align*}
and the \term{ground fermions}
\begin{align*}
    \GroundEta
    \,\coloneqq \,
    -[\,\AlgEta{0}\,]\,,
    \mspace{50mu}
    \textnormal{ and }
    \mspace{50mu}
    \GroundChi
    \,\coloneqq \,
    -[\,\AlgChi{0}\,]\,.
\end{align*}
Further, we have the \term{holomorphic currents}
\begin{align*}
    \HolCurrentChi
    \,\coloneqq\,
    \AlgChi{-1}\Ground\,,
    \mspace{50mu}
    \textnormal{ and }
    \mspace{50mu}
    \HolCurrentEta
    \,\coloneqq\,
    \AlgEta{-1}\Ground\,,
\end{align*}
and the \term{antiholomorphic currents}
\begin{align*}
    \AntiHolCurrentChi
    \,\coloneqq\,
    \AlgChiBar{-1}\Ground\,,
    \mspace{50mu}
    \textnormal{ and }
    \mspace{50mu}
    \AntiHolCurrentEta
    \,\coloneqq\,
    \AlgEtaBar{-1}\Ground\,.
\end{align*}

\noindent
The vector space $\FullFock$ admits the Poincar\'e--Birkhoff--Witt-type basis 
\begin{align}\label{eq: fock basis}
\AlgEta{-k_r}\cdots\AlgEta{-k_2}\AlgEta{-k_1}
\AlgChi{-\ell_s}\cdots\AlgChi{-\ell_2}\AlgChi{-\ell_1}
\AlgEtaBar{-\bar{k}_{\bar{r}}}\cdots\AlgEtaBar{-\bar{k}_2}\AlgEtaBar{-\bar{k}_1}
\AlgChiBar{-\bar{\ell}_{\bar{s}}}\cdots\AlgChiBar{-\bar{\ell}_2}\AlgChiBar{-\bar{\ell}_1}
\GroundPartner
\end{align}
\vspace{-25pt}
\begin{align}
\nonumber
\text{ with } \qquad
& \phantom{\Big\vert}r,s,\bar{r},\bar{s}\in \Znneg \\
\nonumber
\text{ and the orderings } \qquad 
& \phantom{\Big\vert}{0 \leq k_1 < k_2 < \cdots < k_r}\,,\\
\nonumber
& \phantom{\Big\vert}{0 \leq \ell_1 < \ell_2 < \cdots < \ell_s}\,,\\
\nonumber
& \phantom{\Big\vert}{0 < \bar{k}_1 < \bar{k}_2 < \cdots < \bar{k}_{\bar{r}}}\,, \text{ and }\\
\nonumber
& \phantom{\Big\vert}{0 < \bar{\ell}_1 < \bar{\ell}_2 < \cdots < \bar{\ell}_{\bar{s}}}\,.
\end{align}

\begin{rmk}\label{rmk: subrepresentations}
Although the space~$\FullFock$ is not an irreducible representation of $\FullSymFer$, it is cyclically generated by the action of the generators of $\FullSymFer$ on the vector~$\GroundPartner$.
From the explicit basis~\eqref{eq: fock basis} and the anticommutation  relations ---Lie superbrackets--- of the generators,
one can argue that the only proper subrepresentations of $\FullFock$ are the ones cyclically generated by the other ground states ---$\Ground$, $\GroundChi$, and $\GroundEta$---.
\hfill$\diamond$
\end{rmk}

The logarithmic Fock space can be rendered a Virasoro algebra representation via the so-called Sugawara construction.
That is the object of the following lemma, which can be found as a special case of \cite[Proposition 5.1]{Kac-VAs_for_beginners}; a computational proof can be found in~\cite[Theorem~5.2]{AdameCarrillo-discrete_symplectic_fermions}.
Let $[\,\cdot,\cdot\,]$ denote the usual commutator of linear operators, that is $[A,B]\coloneqq AB-BA$.

\begin{lemma}
\label{lemma: sugawara}
The linear operators on $\FullFock$ given by the formal sums
\begin{align*}
\SugL{n}
\coloneqq
\sum_{k \,\geq\, n/2}
\AlgChi{n-k}\AlgEta{k}
-
\sum_{k \,<\, n/2}
\AlgEta{k}\AlgChi{n-k}
\mspace{30mu}
\text{ and }
\mspace{30mu}
\SugLBar{n}
\coloneqq
\sum_{k \,\geq\, n/2}
\AlgChiBar{n-k}\AlgEtaBar{k}
-
\sum_{k \,<\, n/2}
\AlgEtaBar{k}\AlgChiBar{n-k}\,,
\end{align*}
for $n\in\Z$, are well-defined and constitute two commuting representations of the Virasoro algebra with central charge $-2$.
That is, they satisfy
\begin{align*}
    & \,
    \big[ \,\SugL{n}, \SugL{m} \big]
    \,=\,
    (n-m) \, \SugL{n+m} +  \frac{c}{12} \,(n^3-n) \, \delta_{n+m}  \, \id_{\FullFock}\,,
    \\
    \phantom{\Bigg\vert}
    & \,
    \big[ \,\SugLBar{n}, \SugLBar{m} \big]
    \,=\,
    (n-m) \, \SugLBar{n+m} +  \frac{c}{12} \,(n^3-n) \, \delta_{n+m}  \, \id_{\FullFock}\,,
    \mspace{20mu} \text{and}
    \\
    & \,
    \big[ \,\SugL{n}, \SugLBar{m} \big]
    \,=\,
    0
\end{align*}
with $c=-2$.
\end{lemma}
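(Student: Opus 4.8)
The plan is to prove the relations by a bootstrap rather than by brute force: first compute how each $\SugL{n}$ acts on the generating currents, then use the Jacobi identity to pin down the non-central part of $[\SugL{n},\SugL{m}]$, and finally extract the central term by evaluating on the ground state. Before any of this I would check that the formal sums are genuine endomorphisms of $\FullFock$: on a fixed vector of the PBW basis \eqref{eq: fock basis}, the modes $\AlgEta{k},\AlgChi{k}$ with $k>0$ act as annihilation operators, so in each of the two sums defining $\SugL{n}$ only finitely many summands survive — the normal ordering places an annihilation mode on the right once the summation index is large in absolute value. Hence $\SugL{n}$ is well defined, and likewise $\SugLBar{n}$.

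The first substantive step is the action on currents. Since $\SugL{n}$ is even (a sum of products of two odd generators), the map $[\SugL{n},\,\cdot\,]$ is a derivation of the associative product, and for odd $A,B,C$ one has $[AB,C]=A\{B,C\}-\{A,C\}B$. Feeding in the only nonvanishing bracket $\{\AlgEta{k},\AlgChi{\ell}\}=k\,\delta_{k+\ell}$ (after $\AlgK\mapsto 1$), the Kronecker delta collapses each sum to a single term, so the normal-ordering boundary $k=n/2$ plays no role and I obtain the clean relations $[\SugL{n},\AlgEta{m}]=-m\,\AlgEta{n+m}$ and $[\SugL{n},\AlgChi{m}]=-m\,\AlgChi{n+m}$; that is, both currents behave as weight-one primaries. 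The identical computation in the barred sector gives the same relations for $\AlgEtaBar{m},\AlgChiBar{m}$, and because every mixed bracket between an unbarred and a barred generator vanishes, $\SugL{n}$ commutes with all barred currents and conversely.

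With the current relations in hand I would set $D_{n,m}:=[\SugL{n},\SugL{m}]-(n-m)\SugL{n+m}$ and show it lies in the commutant of all currents. This is a short Jacobi check: using $[\SugL{m},\AlgEta{p}]=-p\,\AlgEta{m+p}$ twice, $[[\SugL{n},\SugL{m}],\AlgEta{p}]$ reduces to $p(m-n)\,\AlgEta{n+m+p}$, which is exactly $[(n-m)\SugL{n+m},\AlgEta{p}]$, so $[D_{n,m},\AlgEta{p}]=0$, and likewise for $\AlgChi{p}$ and the barred modes. Since $\FullFock$ is cyclically generated from $\GroundPartner$ by the currents (Remark~\ref{rmk: subrepresentations}), $D_{n,m}$ is determined by $D_{n,m}\GroundPartner$, which must be annihilated by all positive modes and therefore lies in the four-dimensional ``vacuum sector'' $\spn\{\GroundPartner,\GroundEta,\GroundChi,\Ground\}$. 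Parity (evenness of $D_{n,m}$) together with the $\SugL{0}$-grading, which $D_{n,m}$ shifts by $n+m$, immediately forces $D_{n,m}=0$ whenever $n+m\neq 0$, and $D_{n,-n}\GroundPartner\in\spn\{\GroundPartner,\Ground\}$.

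It then remains to evaluate $D_{n,-n}$ on $\GroundPartner$. Here I would compute $[\SugL{n},\SugL{-n}]\GroundPartner$ directly from the definitions, using $\SugL{n}\GroundPartner=0$ for $n>0$ and $\SugL{0}\GroundPartner=\Ground$, and compare with $2n\,\SugL{0}\GroundPartner=2n\,\Ground$. The main obstacle — and the only genuinely delicate point — is exactly this evaluation: one must verify that the $\Ground$-coefficient comes out to be precisely $2n$, so that no spurious nilpotent (``logarithmic'') term survives in $D_{n,-n}$ and it reduces to a pure scalar $a(n)\,\id_{\FullFock}$. General structure then finishes the job: antisymmetry gives $a(-n)=-a(n)$, and the Jacobi identity for the triple $\SugL{n},\SugL{m},\SugL{p}$ constrains $a$ to the two-parameter family $a(n)=\alpha n^{3}+\beta n$; the value $a(1)=0$, immediate from the $n=1$ computation where $\SugL{1}\GroundPartner=0$, yields $\beta=-\alpha$, so $a(n)=\alpha(n^{3}-n)$, and a single further evaluation at $n=2$ fixes the normalization to $a(n)=\tfrac{c}{12}(n^{3}-n)$ with $c=-2$. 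The barred relations follow verbatim, and $[\SugL{n},\SugLBar{m}]=0$ follows from the sector-wise commutativity established above.
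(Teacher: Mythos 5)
The paper does not actually prove this lemma: it invokes \cite[Proposition 5.1]{Kac-VAs_for_beginners} (the general Sugawara construction) and points to the direct computational verification in \cite[Theorem 5.2]{AdameCarrillo-discrete_symplectic_fermions}, where the commutators are expanded by hand. Your bootstrap is therefore a genuinely different, self-contained route, and its skeleton is sound: the well-definedness argument, the weight-one-primary relations $[\SugL{n},\AlgEta{m}]=-m\,\AlgEta{n+m}$ and $[\SugL{n},\AlgChi{m}]=-m\,\AlgChi{n+m}$, the Jacobi check that $D_{n,m}$ commutes with all currents, and the reduction to the vacuum via cyclicity of $\GroundPartner$ are all correct. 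What your approach buys is that the only explicit computations left are $\SugL{1}\SugL{-1}\GroundPartner=2\,\Ground$ and $\SugL{2}\SugL{-2}\GroundPartner=4\,\Ground-\GroundPartner$; what the cited direct computation buys is independence from the module-theoretic input (the PBW basis, cyclicity, and the classification of singular vectors).

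Two points need tightening. First, the step ``annihilated by all positive modes $\Rightarrow$ lies in $\spn\{\GroundPartner,\GroundEta,\GroundChi,\Ground\}$'' deserves a line of justification: it follows by applying positive modes to a PBW-leading monomial and using the nondegeneracy of the pairings $\{\AlgEta{k},\AlgChi{-k}\}=k\neq0$, and it is precisely here that spurious singular vectors in positive degree are excluded. Second, and more substantively, your treatment of the possible nilpotent part is circular as written: you require one to ``verify that the $\Ground$-coefficient comes out to be precisely $2n$'' for all $n$ before the polynomial argument is applied to the scalar $a(n)$, but verifying this for general $n$ is exactly the brute-force computation the bootstrap was meant to avoid, while verifying it only at $n=1,2$ does not by itself exclude a nilpotent contribution at other $n$. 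The clean fix is to observe that the commutant of the current action is the two-dimensional algebra $\spn\{\id_{\FullFock},N\}$, with $N$ the module endomorphism determined by $\GroundPartner\mapsto\Ground$, to write $D_{n,-n}=a(n)\,\id_{\FullFock}+b(n)\,N$, and to note that antisymmetry plus the Jacobi identity force \emph{both} $a$ and $b$ into the family $\alpha n^{3}+\beta n$; the two evaluations at $n=1,2$ then yield $b\equiv0$ and $a(n)=-\tfrac{1}{6}(n^{3}-n)=\tfrac{c}{12}(n^{3}-n)$ with $c=-2$ simultaneously. With that adjustment the argument is complete.
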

\vspace{-5pt}

The operators $\SugL{n}$ are called \term{holomorphic Virasoro modes} and the operators $\SugLBar{n}$ are called \term{antiholomorphic Virasoro modes}.

\vspace{-10pt}
\subsubsection{Correlation functions}
\label{subsec: corr fun symplectic fermions}
In a field theory, the physically significant quantities are correlation functions.
Given ${n\geq1}$ fields~$\field_1,\ldots,\field_n$ in a field theory, their correlation function is a $\C$-valued  multivariable function denoted by
\begin{align*}
    (z_1,\ldots,z_n)
    \longmapsto
    \Big\langle
    \field_1(z_1)\cdots\field_n(z_n)
    \Big\rangle
    \,
\end{align*}
where the points $z_i$ live in some ambient space where the theory takes place.
In this section, we present the characterisation of these functions for the symplectic fermions CFT in general domains of $\C$---we refer the reader to \cite{AdaCar-symplectic_fermions} for a more detailed discussion.
For the rest of the paper, we take $\domain\subset\C$ to be a (non-empty) proper, simply-connected open subset of the complex plane.

A notable feature of this theory is the existence of a family ---parametrised by a complex number $\thectt\in\C$--- of non-trivial isomorphisms $\FullFock\to\FullFock$.
This fact suggests that the correlation functions of the theory cannot be unique, but rather unique up to the action of such isomorphism on the space of fields.
Although this fact seems a priori burdensome, we argue that it is a distinctive and celebrated property of logarithmic theories---we will see the necessity of this ambiguity when considering scaling limits in Section~\ref{subsec: scaling limit}, in which an arbitrary choice of scale appears naturally.
An instance of this feature will be discussed in the context of a particular probabilistic model ---the Abelian sandpile--- in Example~\ref{ex: dissipation field}.

Fixing a constant $\thectt\in\C$, the \term{correlation functions} of the symplectic fermions are a collection of linear maps
\begin{align*}
    \Big\{\;
    \bigSFCorrFun{\domain}{\thectt}{\cdots}
    \; \colon \;
    \FullFock^{\,\otimes n} \longrightarrow \AnalyFun{\Conf{n}{\domain}}{\C}
    \,\Big\}_{n\in\Zpos}\,,
\end{align*}
where $\AnalyFun{\Conf{n}{\domain}}{\C}$ is the set of $\C$-valued real-analytic functions on the \term{$n$-point configuration space}
\begin{align*}
	\Conf{n}{\domain}
	\coloneqq
	\Big\{
	(z_1,\ldots,z_n)\in\domain^n \,\big\vert\, z_i \neq z_j \text{ for } i\neq j
	\Big\}\,
\end{align*}
of the domain~$\domain$.

The correlation functions of the symplectic fermions can be obtained following a bootstrap procedure---this is exploited in the Theorem~\ref{thm: characterization of CF} below to characterise them.
The idea of the bootstrap approach is to rely on general consistency conditions that are valid for general two-dimensional CFTs ---rather than on a specific Lagrangian--- to construct the correlation functions of the theory.
To start the bootstrap approach, we fix the correlation functions of the ground states.
In particular, following the dictates of CFT ---see \cite{AdaCar-symplectic_fermions} for more details--- a sensitive choice is
\begin{align*}
    \BigSFCorrFun{\domain}{\thectt}{\GroundChi(z) \GroundEta(w)}
    \,=\,
    4\pi\,\Green_\domain(z,w)\,,
\end{align*}
where $\Green_\domain$ is Green's function of the Laplacian operator~$\Delta=\partial^2_x+\partial^2_y$ with Dirichlet boundary conditions.
We take the normalisation of $\Green_\domain$ such that we have
\begin{align*}
    \Green_\domain(z,w)
    \,=\,
    - \frac{\log \vert\, z-w \,\vert}{2\pi}
    + \HarmOfGreen_\domain(z,w)
\end{align*}
with $\HarmOfGreen_\domain \colon \domain^2\to\R$ harmonic on both variables separately.
Then, the rest of correlation functions can be obtained by identifying the fields that carry the symplectic-fermion algebraic structure in their Fourier modes.
These fields are the currents.

In what follows, we use the Wirtinger derivatives $\partial$ and $\partialBar$.
That is, for a function of complex variable $ z = x + \ii y$, with $x,y\in\R$, we have $\partial_z = \partial_x - \ii \partial_y$ and $\partialBar_z = \partial_x + \ii \partial_y$.

% %%%%%%%%%%%% if we want to define OPEs
% {\red
% {\blue\noindent COPY-PASTED!}

% Expansions of the type
% \begin{align*}
% 	\BigSFCorrFun{\domain}{\thectt}{\field_1(z_1)\field_2(z_2)\cdots}
% 	\;=\;
% 	\sum_{i} f_{i}(z_1-z_2)\BigSFCorrFun{\domain}{\thectt}{\field_i(z_2)\cdots}
% \end{align*}
% that are convergent for $\vert z_1 - z_2\vert$ small enough.
% These are known as operator product ex\-pan\-sions; we describe them in more detail in Section~\ref{subsec: OPE}
% }

\setlist[description]{leftmargin=2.6cm,labelindent=0.1cm,rightmargin=0cm}
\begin{thm}[{\cite[Theorem 3.1]{AdaCar-symplectic_fermions}}]\label{thm: characterization of CF}
For each complex number $\thectt\in\C$, there exists a unique collection
of correlation functions with the following properties:
\begin{description}[style=sameline]

%%%%%%%%%%Correlations of the fermions
	\item[(FER)]
	The correlation functions of the ground fermions are
	\begin{align*}
		\BigSFCorrFun{\domain}{\thectt}{\GroundChi(z_1)\,\GroundEta(w_1)\cdots\GroundChi(z_n)\,\GroundEta(w_n)}
		\, = \,
		(4\pi)^n
		\displaystyle\sum_{\sigma\in\mathcal{S}_n} \sgntr(\sigma)
		\displaystyle\prod_{i=1}^n
		\Green_\domain\big(z_i,w_{\sigma(i)}\big)
	\end{align*}
	for $(z_1,\ldots,z_n,w_1\ldots,w_n)\in\Conf{2n}{\domain}$,
	they vanish when the number of
	\linebreak[4]
	$\GroundChi$-inser\-tions and $\GroundEta$-insertions do not coincide,
	and they satisfy
	\begin{align*}
		\BigSFCorrFun{\domain}{\thectt}{\cdots\GroundChi(z)\,\GroundEta(w)\cdots}
		\,=
		-\,\BigSFCorrFun{\domain}{\thectt}{\cdots\GroundEta(w)\,\GroundChi(z)\cdots}\,.
	\end{align*}

%%%%%%%%%%Action of L_{-1} and LBar_{-1}
	\item[(DER)]
	For the Virasoro modes~$\SugL{-1}$ and $\SugLBar{-1}$ we have
	\begin{align*}
		\BigSFCorrFun{\domain}{\thectt}{\big[\SugL{-1}\field\big](z)\cdots}
		\  = \ 
		\partial_z\,\BigSFCorrFun{\domain}{\thectt}{\field(z)\cdots}	
	\end{align*}
	and
	\begin{align*}
		\;\BigSFCorrFun{\domain}{\thectt}{\big[\mspace{1mu}\SugLBar{-1}\field\big](z)\cdots}
		\  = \ 
		\partialBar_z\,\BigSFCorrFun{\domain}{\thectt}{\field(z)\cdots}\,,
	\end{align*}
	for any field~$\field\in\FullFock$.

%%%%%%%%%%OPE of the currents		
\item[(CUR)] % previously (CUR OPE)
Within correlation functions, we have
\begin{align*}
    %\mspace{-80mu}
    \HolCurrentChi(z)\field(w)
    \,=\,
    \sum_{k\in\Z}\,
    \frac{\,\big[\AlgChi{k}\field\big](w)\,}{\,(z-w)^{k+1}}\,,
    \mspace{20mu}
    &
    \mspace{59mu}
    \AntiHolCurrentChi(z)\field(w)
    \,=\,
    \sum_{k\in\Z}\,
    \frac{\,\big[\AlgChiBar{k}\field\big](w)\,}{\,(\overline{z}-\overline{w})^{k+1}}\,,
    \\
    \phantom{.}
    \\
    \HolCurrentEta(z)\field(w)
    \,=\,
    \sum_{k\in\Z}\,
    \frac{\,\big[\AlgEta{k}\field\big](w)\,}{\,(z-w)^{k+1}}\,,
    \mspace{20mu}
    &
    \text{ and }
    \mspace{20mu}
    \AntiHolCurrentEta(z)\field(w)
    \,=\,
    \sum_{k\in\Z}\,
    \frac{\,\big[\AlgEtaBar{k}\field\big](w)\,}{\,(\overline{z}-\overline{w})^{k+1}}
\end{align*}
for any field $\field\in\FullFock$ for small enough $\vert z - w \vert$.
%%%%%%%%%%OPE of the fermions
\item[(LOG)]
We have
\begin{align*}
    \BigSFCorrFun{\domain}{\thectt}{ \GroundChi(z)\GroundEta(w) \cdots }
    \,=\, &\,
    \log \frac{1}{\vert z-w\vert^2}
    \BigSFCorrFun{\domain}{\thectt}{ \Ground(w) \cdots }
    \,-\,
    \BigSFCorrFun{\domain}{\thectt}{ \big[\GroundPartner + \thectt \, \Ground\big](w) \cdots }
    \\
    &\,
    \,+\, 
    o\big(\vert z - w\vert\big) \phantom{\Bigg\vert}
    \,
\end{align*}
as $\vert z - w \vert \to 0$.
\end{description}
\end{thm}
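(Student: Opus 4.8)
The plan is to establish existence and uniqueness together, by a bootstrap recursion that expresses every correlation function through those of the ground states, using that $\FullFock$ is cyclically generated from $\GroundPartner$ (Remark~\ref{rmk: subrepresentations}). By multilinearity it is enough to treat inputs from the Poincar\'e--Birkhoff--Witt basis~\eqref{eq: fock basis}, and I would induct on the total number of strictly negative (creation) modes appearing across all the insertions; the base case, in which there are none, is that every inserted field is one of the ground states $\Ground,\GroundPartner,\GroundChi,\GroundEta$.

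For uniqueness in the base case, the ground-fermion correlators are given outright by (FER) in terms of the propagator $4\pi\Green_\domain$, and the bosonic ground states are reached from (LOG): fusing a pair $\GroundChi(z)\,\GroundEta(w)$ and letting $z\to w$, the coefficient of $\log|z-w|^{-2}$ reads off $\langle\Ground(w)\cdots\rangle$ while the finite part reads off $\langle(\GroundPartner+\thectt\,\Ground)(w)\cdots\rangle$; since the left-hand side is known from (FER), iterating this fusion determines all ground-state correlators, the fixed value of $\thectt$ fixing the split between $\GroundPartner$ and $\Ground$. In the inductive step, suppose some insertion is a genuine descendant, say $\Psi_i=\AlgChi{-k}\Psi_i'$ with $k\geq1$ (the $\AlgEta{-k},\AlgChiBar{-k},\AlgEtaBar{-k}$ cases being analogous). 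I would recover it as a Laurent coefficient of a current insertion: by (CUR),
\[
\big\langle(\AlgChi{-k}\Psi_i')(z_i)\cdots\big\rangle
=\oint_{z_i}\frac{\cd z}{2\pi\ii}\,(z-z_i)^{-k}\,
\big\langle\HolCurrentChi(z)\,\Psi_i'(z_i)\cdots\big\rangle .
\]
By (CUR) once more, the map $z\mapsto\langle\HolCurrentChi(z)\,\Psi_i'(z_i)\cdots\rangle$ has at each insertion point a prescribed singular part governed by the action of the nonnegative modes $\AlgChi{\geq0}$, which strictly decreases the creation-mode count and is hence known by induction; together with (DER) — which fixes the global harmonic completion forced by $\Green_\domain$ and identifies $\SugL{-1}$-descendants with $\partial_z$-derivatives — this pins down the current correlator on all of $\domain$, and the contour integral returns the descendant. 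Iterating exhausts the basis.

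For existence, I would build the collection explicitly and verify the four axioms. One starts from the determinantal formula (FER) — the fermionic Wick rule with propagator $4\pi\Green_\domain$ — and extends it to $\FullFock$ by declaring that the currents act through the Laurent expansions (CUR); concretely this is a generalized Wick formula whose contractions are derivatives of $\Green_\domain$ and whose bosonic zero-mode sector supplies the logarithmic and constant pieces of (LOG). Then (FER) holds by construction; (DER) follows because $\partial_z$ and $\partialBar_z$ realize the $\SugL{-1}$ and $\SugLBar{-1}$ actions once one uses that $\Green_\domain$ is harmonic off the diagonal; (CUR) is precisely the statement that the mode expansion of a current--field contraction reproduces the superalgebra action of Section~\ref{subsec: symplectic fermions}; and (LOG) is the short-distance expansion $4\pi\Green_\domain(z,w)=\log|z-w|^{-2}+4\pi\HarmOfGreen_\domain(z,w)$, the chosen $\thectt$ being absorbed into the normalization of $\GroundPartner$ relative to $\Ground$.

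The hard part is the logarithmic sector. One must check that the recursive extraction is genuinely well-defined — independent of the order in which creation modes are peeled off and compatible with the Lie-superalgebra relations — despite the non-diagonalizability of $\SugL{0}+\SugLBar{0}$, which forces $\GroundPartner$ to mix with $\Ground$. Equivalently, one has to verify that the finite part extracted in (LOG) is a bona fide real-analytic function carrying the correct harmonic completion $\HarmOfGreen_\domain$, that the $o(|z-w|)$ remainder is controlled, and that the only residual freedom is the one-parameter family labeled by $\thectt$. Reconciling the two routes to a given descendant — via a current mode versus via $\SugL{-1}$ through (DER) — is where the principal consistency check lies.
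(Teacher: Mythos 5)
You should be aware that the paper does not actually prove this statement: Theorem~\ref{thm: characterization of CF} is imported from \cite{AdaCar-symplectic_fermions} (the text explicitly says ``A proof of this theorem can be found in \cite{AdaCar-symplectic_fermions}''), so there is no in-paper proof to compare against. The only guidance the paper gives is the surrounding bootstrap discussion in Section~\ref{subsec: corr fun symplectic fermions} together with Remarks~\ref{rmk: integral formula} and~\ref{rmk: general wick formula}, and your proposal does follow exactly that route: reduce to the PBW basis, peel off creation modes by the contour-integral form of \textbf{(CUR)}, and anchor the recursion at the ground-state correlators, which are fixed by \textbf{(FER)} together with the fusion \textbf{(LOG)}. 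Your uniqueness half is therefore the intended argument in outline, and your base-case extraction of $\langle\Ground\cdots\rangle$ and $\langle(\GroundPartner+\thectt\,\Ground)\cdots\rangle$ from the $\log$ and finite parts of a fused fermion pair is exactly Remark~\ref{rmk: general wick formula} run backwards.

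Two points keep this from being a proof. First, in the inductive step you claim that the prescribed singular parts of $z\mapsto\big\langle\HolCurrentChi(z)\,\field'(z_i)\cdots\big\rangle$ at the insertion points, ``together with \textbf{(DER)}'', pin down this function on all of $\domain$. Knowing the singular parts of a holomorphic function at finitely many interior points determines nothing about it globally; the step that actually closes the induction is the algebraic identity $\HolCurrentChi=\SugL{-1}\GroundChi$ (and its three analogues), which via \textbf{(DER)} rewrites the current correlator as $\partial_z$ of a correlator in which the current insertion is replaced by the ground fermion $\GroundChi(z)$ --- a correlator with strictly smaller total creation count, hence known by induction. You gesture at this (``identifies $\SugL{-1}$-descendants with $\partial_z$-derivatives'') but never state the identity, and your phrase about \textbf{(DER)} ``fixing the global harmonic completion'' is not something \textbf{(DER)} does. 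Second, the existence half is asserted rather than proven: you correctly name the genuine difficulties --- well-definedness of the mode extraction independent of the order in which creation modes are peeled off, compatibility of the \textbf{(CUR)} and \textbf{(DER)} routes to the same descendant, real-analyticity of the finite part in \textbf{(LOG)}, and the nesting/sign conventions when several modes sit at one insertion point (cf.\ the radially ordered contours of Remark~\ref{rmk: integral formula}) --- but you defer all of them. Since these consistency checks in the logarithmic sector are precisely the content of the theorem, the proposal is a correct strategy with the hard part left open rather than a complete proof.
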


\noindent
A proof of this theorem can be found in \cite{AdaCar-symplectic_fermions}.
The following remarks are corollaries to this result and will be used later in the text.

\begin{rmk}
\label{rmk: integral formula}
Consider, for $i=1,\ldots,n$, the $\FullFock$ elements
\begin{align*}
\field_i
\,\coloneqq\,
\AlgChi{-k_{\alpha_i}^{(i)}}\cdots\AlgChi{-k_1^{(i)}}
\AlgEta{-\ell_{\beta_i}^{(i)}}\cdots\AlgEta{-\ell_1^{(i)}}
\AlgChiBar{-\bar{k}_{\bar\alpha_i}^{(i)}}\cdots\AlgChiBar{-\bar{k}_1^{(i)}}
\AlgEtaBar{-\bar\ell_{\bar\beta_i}^{(i)}}\cdots\AlgEtaBar{-\bar\ell_1^{(i)}}
\GroundPartner
\in \FullFock\,,
\end{align*}
in the basis~\eqref{eq: fock basis}.
By property \textbf{(CUR)}, their correlation functions can be obtained as
\begin{align*}
    &
    \BigSFCorrFun{\domain}{\thectt}{
    \field_1(z_1)
    \,\cdots\,
    \field_n(z_n)
    }
    \,=\,
    \oint \!\cdots\! \oint
    \prod_{i=1}^n
    \Bigg(
    \prod_{a=1}^{\alpha_i}
    \frac{\cd \zeta_{i;a}^{\chi}}{2\pi\ii}
    \prod_{b=1}^{\beta_i}
    \frac{\cd \zeta_{i;b}^{\eta}}{2\pi\ii}
    \prod_{\bar a=1}^{\bar\alpha_i}
    \frac{\cd \overline{\zeta}_{i;\bar a}^{\bar \chi}}{\overline{2\pi\ii}}
    \prod_{\bar b=1}^{\bar\beta_i}
    \frac{\cd \overline{\zeta}_{i;\bar b}^{\bar \eta}}{\overline{2\pi\ii}}
    \Bigg)
    \times
    \\
    &
    \mspace{60mu}
    \times
    \prod_{i=1}^n
    \Bigg(
    \prod_{a=1}^{\alpha_i}
    \, \big( \zeta_{i;a}^{\chi} - z_i \big)^{-k_a^{(i)}}
    \prod_{b=1}^{\beta_i}
    \, \big( \zeta_{i;b}^{\eta} - z_i \big)^{-\ell_b^{(i)}}
    \prod_{\bar a=1}^{\bar\alpha_i}
    \, \Big( \overline{ \zeta_{i;\bar a}^{\bar \chi} - z_i}\Big)^{-\bar{k}_{\bar a}^{(i)}}
    \prod_{\bar b=1}^{\bar\beta_i}
    \, \Big(\overline{\zeta_{i;\bar b}^{\bar\eta} - z_i}\Big)^{-\bar{\ell}_{\bar b}^{(i)}}
    \Bigg)
    \times
    \\
    &
    \times\!
    \bigg\langle
    \HolCurrentChi\big( \zeta_{1;1}^{\chi} \big) \cdots \HolCurrentChi\big( \zeta_{1;\alpha_1}^{\chi} \big)
    \HolCurrentEta\big( \zeta_{1;1}^{\eta} \big) \cdots \HolCurrentEta\big( \zeta_{1;\beta_1}^{\eta} \big)
    \AntiHolCurrentChi\big( \zeta_{1;1}^{\bar \chi} \big) 
    \cdots
    \AntiHolCurrentChi\big( \zeta_{1;\bar\alpha_1}^{\bar \chi} \big)
    \AntiHolCurrentEta\big( \zeta_{1;1}^{\bar \eta} \big)
    \cdots
    \AntiHolCurrentEta\big( \zeta_{1;\bar\beta_1}^{\bar \eta} \big)\,
    \GroundPartner(z_1)
    \cdots\phantom{\Bigg\vert}
    \\
    &
    \mspace{10mu}
    \cdots
    \HolCurrentChi\big( \zeta_{n;1}^{\chi} \big) \cdots \HolCurrentChi\big( \zeta_{n;\alpha_n}^{\chi} \big)
    \HolCurrentEta\big( \zeta_{n;1}^{\eta} \big) \cdots \HolCurrentEta\big( \zeta_{n;\beta_n}^{\eta} \big)
    \AntiHolCurrentChi\big( \zeta_{n;1}^{\bar \chi} \big) 
    \cdots
    \AntiHolCurrentChi\big( \zeta_{n;\bar\alpha_n}^{\bar \chi} \big)
    \AntiHolCurrentEta\big( \zeta_{n;1}^{\bar \eta} \big)
    \cdots
    \AntiHolCurrentEta\big( \zeta_{n;\bar\beta_n}^{\bar \eta} \big)\,
    \GroundPartner(z_n)
    \bigg\rangle_{\domain;\thectt}^{\SFtag}
\end{align*}
where the integrals are performed along non-intersecting contours in the following way:
for each $i = 1 , \ldots , n$, the variables $\zeta_{i;a}^{\chi}$, $\zeta_{i;b}^{\eta}$, $\zeta_{i;\bar a}^{\bar \chi}$ and $\zeta_{i;\bar b}^{\bar \eta}$ are integrated along nested contours around the point $z_i$ in the radial order
\begin{align*}
    \zeta_{i;1}^{\chi} , \ldots , \zeta_{i;\alpha_i}^{\chi},
    \zeta_{i;1}^{\eta} , \ldots , \zeta_{i;\beta_i}^{\eta},
    \zeta_{i;1}^{\bar \chi} , \ldots , \zeta_{i;\bar\alpha_i}^{\bar \chi},
    \zeta_{i;1}^{\bar \eta} , \ldots , \zeta_{i;\bar\beta_i}^{\bar \eta}
\end{align*}
in the inward direction.
\hfill$\diamond$
\end{rmk}

\begin{ex}\label{ex: 1pt fctn log partner}
We have
\begin{align*}
    \BigSFCorrFun{\domain}{\thectt}{\GroundPartner(z)}
    \,=\,
    - \, 4\pi\,\HarmOfGreen_{\domain}(z,z) - \thectt \,,
\end{align*}
where $\HarmOfGreen_\domain$ is the harmonic part of the Green's function~$\Green_\domain$.
Note that this one-point function is related to the conformal radius~$\mathrm{R}(z;\domain)$ of the domain~$\domain$ from the point~$z\in\domain$ since we have $\mathrm{R}(z;\domain) = \exp\big(2\pi\HarmOfGreen_\domain(z,z)\big)$.
\hfill$\diamond$
\end{ex}

\begin{rmk}
\label{rmk: general wick formula}
The correlation functions of the ground states $\GroundChi$, $\GroundEta$, and $\GroundPartner$ satisfy the following formula:
Take three non-intersecting collections $\underline{z}  \coloneqq\{z_1,\ldots,z_n\}$, $\underline{w}  \coloneqq\{w_1,\ldots,w_n\}$ and $\underline{x}  \coloneqq\{x_1,\ldots,x_k\}$ of distinct points in the domain~$\domain$.
Let ${\Bij(\underline{ z}  ,\underline{ w}  ;\underline{ x}  )}$ denote the set of bi\-jec\-tions from the set~${\underline{ z}  \cup\underline{ x}  }$ onto the set~${\underline{w}  \cup\underline{ x}  }$.
We have
\begin{align*}
    \BigSFCorrFun{\domain}{\thectt}
    {\GroundChi(z_1)\GroundEta(w_1)&\cdots\GroundChi(z_n)\GroundEta(w_n)
    \GroundPartner(x_k)\cdots\GroundPartner(x_1)}
    \\
    & = \phantom{\bigg\vert}
    (-1)^k\sum_{b\in\Bij(\underline{z}  ,\underline{w}  ;\underline{x}  )}
    (-1)^b(-1)^{F_b}
    \prod_{\underset{b(y)\neq y}{y\in\underline{z}  \cup\underline{x}  }}
    \BigSFCorrFun{\domain}{\thectt}
    {\GroundChi(y)\GroundEta\big(b(y)\big)}
    \prod_{\underset{b(x)= x}{x\in\underline{x}  }}
    \BigSFCorrFun{\domain}{\thectt}
    {\GroundPartner(x)}\,,
\end{align*}
where $(-1)^b$ is the signature of the bijection $b$, and $F_b$ is the number of fixed points of $b$.
\hfill$\diamond$
\end{rmk}

\vspace{-10pt}
\subsection{Tools of discrete complex analysis}
\label{subsec: tools discret C-anal}
Let us start by defining the lattices on which our discrete complex analysis tools take place; see Figure~\ref{fig: lattices}.
For $\delta > 0$, we define
\begin{itemize}
    \item the \term{primary lattice}~$\meshsize\Zprimary\coloneqq\meshsize\Z^2$,
    \item  the \term{dual lattice}~$\meshsize\Zdual\coloneqq(\meshsize\Z+\frac{\meshsize}{2})^2$, and
    \item the \term{medial lattice}~$\meshsize\Zmedial\coloneqq\meshsize\Z\times(\meshsize\Z+\frac{\meshsize}{2})\cup(\meshsize\Z+\frac{\meshsize}{2})\times\meshsize\Z$.
\end{itemize}
Naturally, a medial vertex in $\meshsize\Z\times(\meshsize\Z+\frac{\meshsize}{2})$ we call a \term{vertical edge}, whereas a medial vertex in $(\meshsize\Z+\frac{\meshsize}{2})\times\meshsize\Z$ we call a \term{horizontal edge}.
Finally, we also define the \term{diamond lattice}~$\meshsize\Zdiamond\coloneqq\meshsize\Zprimary\cup\meshsize\Zdual$.

\begin{figure}[t!]
\centering
\begin{overpic}[scale=0.467, tics=10]{./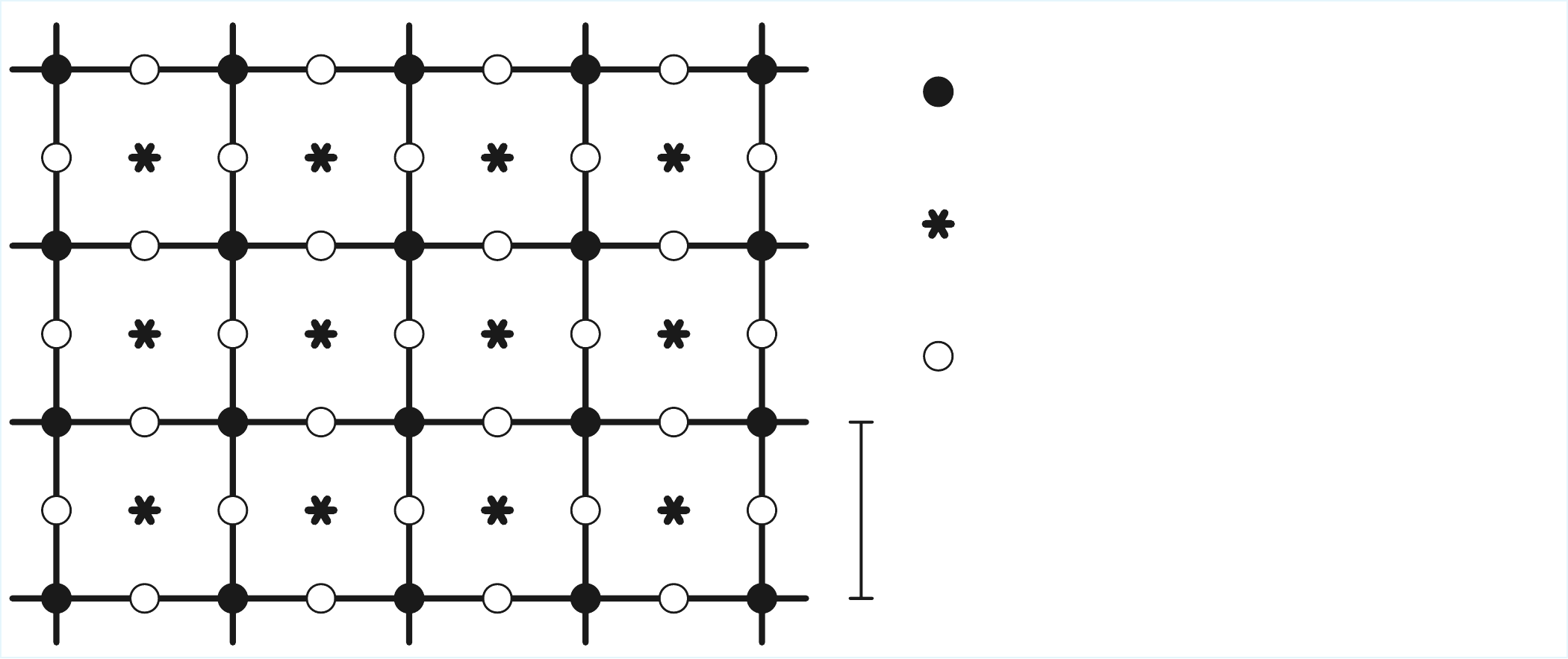}
    \put(64,35){\large $\meshsize\Zprimary$ \hspace{3pt} (\term{primary lattice)}}
    \put(64,26.5){\large $\meshsize\Zdual$ \hspace{3pt} (\term{dual lattice)}}
    \put(64,18){\large $\meshsize\Zmedial$ \hspace{3pt} (\term{medial lattice)}}
    \put(57,8.5){\large $\meshsize$}
\end{overpic}
\centering
\caption{Lattices involved in our tools of discrete complex analysis.} 
\label{fig: lattices}
\end{figure}

\vspace{-10pt}
\subsubsection{Discrete differential operators}
We start by defining discrete versions of the Wir\-tinger derivatives.
Given a function $f \colon \meshsize\Zdiamond \to \C$ on the diamond lattice, its \term{holomorphic} and \term{antiholomorphic discrete deri\-vatives} are the functions \mbox{$\ddee f , \ddeebar f \colon \meshsize\Zmedial \to \C$} on the medial lattice defined by the formulae
\begin{align*}
\phantom{\Bigg\vert}
\ddee f(\zd) := \; &
    \frac{f(\zd+\frac{\meshsize}{2})-f(\zd-\frac{\meshsize}{2})}{2} - \ii\, \frac{f(\zd+\ii\frac{\meshsize}{2})-f(\zd-\ii\frac{\meshsize}{2})}{2}\,,
    \quad \text{ and }
	\\
\phantom{\Bigg\vert}
\ddeebar f(\zd) := \; &
    \frac{f(\zd+\frac{\meshsize}{2})-f(\zd-\frac{\meshsize}{2})}{2} + \ii\, \frac{f(\zd+\ii\frac{\meshsize}{2})-f(\zd-\ii\frac{\meshsize}{2})}{2}\,.
\end{align*}
Similarly, for a function $f$ on the medial lattice, its holomorphic and antiholomorphic derivatives $\ddee f$ and $\ddeebar f$ are functions on the diamond lattice defined by the same formulae as above.

If a function $f$ satisfies $\ddeebar f(\zd) = 0$, it is said to be \term{discrete holomorphic} at~$\zd$.
Similarly, if it satisfies $\ddee f(\zd) = 0$, then it is said to be \term{discrete antiholomorphic} at $\zd$.

In our analysis, we also need versions of the holomorphic and antiholomorphic derivatives for functions defined only on the primary lattice.
For a function~$f \colon \meshsize\Zprimary \to \C$ on the primary lattice,
we define the functions $\pddee f, \pddeebar f \colon \meshsize\Zmedial \to \C$ by
\begin{align*}
    \phantom{-}
    \pddee f(\zd) \,\coloneqq\,
	f \Big( \zd+\frac{\meshsize}{2} \Big)
    -
    f \Big( \zd-\frac{\meshsize}{2} \Big)
    \mspace{30mu}
    \textnormal{ and }
    \mspace{30mu}
    \pddeebar f(\zd) \,\coloneqq\,
	f \Big( \zd+\frac{\meshsize}{2} \Big)
    -
    f \Big( \zd-\frac{\meshsize}{2} \Big)
\end{align*}
when $\zd\in\Zmedial$ is a horizontal edge,
and by
\begin{align*}
    \pddee f(\zd) \,\coloneqq\,
    -\ii \, \bigg( f \Big( \zd+\ii\frac{\meshsize}{2} \Big)
    - 
    f \Big( \zd-\ii\frac{\meshsize}{2} \Big) \bigg)
    \mspace{20mu}
    \textnormal{ and }
    \mspace{20mu}
    \pddeebar f(\zd) \,\coloneqq\,
    \ii \, \bigg( f \Big( \zd+\ii\frac{\meshsize}{2} \Big)
    -
    f \Big( \zd-\ii\frac{\meshsize}{2} \Big) \bigg)
\end{align*}
when $\zd\in\Zmedial$ is a vertical edge.

Lastly, we define the (combinatorially normalised) \term{discrete Laplacian} operator $\dlaplacian$.
For a function~$f$ on any of the lattices, the function~$\dlaplacian f$ on the same lattice is defined by
\begin{align*}
\dlaplacian f(\zd) := \; & 
    f(\zd + \meshsize) + f(\zd + \ii \meshsize) + f(\zd - \meshsize) + f(\zd - \ii \meshsize) - 4 \, f(\zd)\,.
\end{align*}
Naturally, we say that a function~$f$ is \textbf{discrete harmonic} at $\zd$ if it satisfies $\dlaplacian f ( \zd ) = 0$.

\begin{rmk}\label{rmk: factorisation laplacian}
The discrete Laplacian admits the following factorisations in terms of discrete differential operators:
\begin{align*}
       \ddee\ddeebar \,=\,  \ddeebar\ddee \,=\, \frac{1}{4}\,\dlaplacian \,,
\end{align*}
and, for a function~$f \colon \Zdiamond\to\C$,
\begin{align*}
    \ddee\pddeebar f(\zd) \,=\,  \ddeebar\pddee f(\zd) \,=\,
    \begin{cases} 
    \mspace{10mu}\frac{1}{2}\,\dlaplacian f(\zd) & \mspace{10mu} \textnormal{ if } \zd\in\Zprimary\,, \textnormal{ and } \\
    & \\
    \mspace{35mu} 0 & \mspace{10mu} \textnormal{ if } \zd\in\Zdual\,.
    \end{cases} 
\end{align*}
We will use extensively these factorisations in our computations in later sections.
\hfill$\diamond$
\end{rmk}

\vspace{-10pt}
\subsubsection{Discrete integration}
\label{subsubsec: discrete integration}
The second tool that we use in out treatment is the notion of discrete integration on the square lattice with unit mesh size ---that is, when we set $\meshsize=1$--- that was introduced in~\cite{HKV}.
For that purpose, consider the \term{corner lattice} $\Zcorner \coloneqq (\Z\pm\frac{1}{4})^2$,
whose vertices we call \term{corners}.
A sequence~$(c_0,\ldots,c_n)$ of consecutively nearest corners is called a \term{corner path}.

Given two functions~$f \colon \Zdiamond \to \C$ and $g \colon \Zmedial \to \C$ and a corner path~$\gamma=(c_0,\ldots,c_n)$, we define
\begin{align*}
    \dint_\gamma f(\zu_\diamond) g(\zu_\medial) \dd \zu
    \,\coloneqq\,
    \sum_{j=1}^n \, (c_j - c_{j-1})
    f(\zu_j^\diamond) g(\zu_j^\medial)\,, &
    \mspace{20mu}\textnormal{ and }
    \\
    \dint_\gamma f(\zu_\diamond) g(\zu_\medial)\dd{\overline{\zu}}
    \,\coloneqq\,
    \sum_{j=1}^n \, \overline{(c_j - c_{j-1})}
    f(\zu_j^\diamond) g(\zu_j^\medial) \,, &
\end{align*}
where the diamond vertex~$\zu_j^\diamond \in \Zdiamond$ and the medial vertex~$\zu_j^\medial \in \Zmedial$ are the unique ones that are closest to both $c_j$ and $c_{j-1}$ as shown in Figure~\ref{fig: discrete contour}.
Note that these formulae are also well-defined when one of the integrands ---either $f$ or $g$--- takes values in a complex vector space.

\begin{figure}[t!]
\centering
\begin{overpic}[scale=0.467, tics=10]{./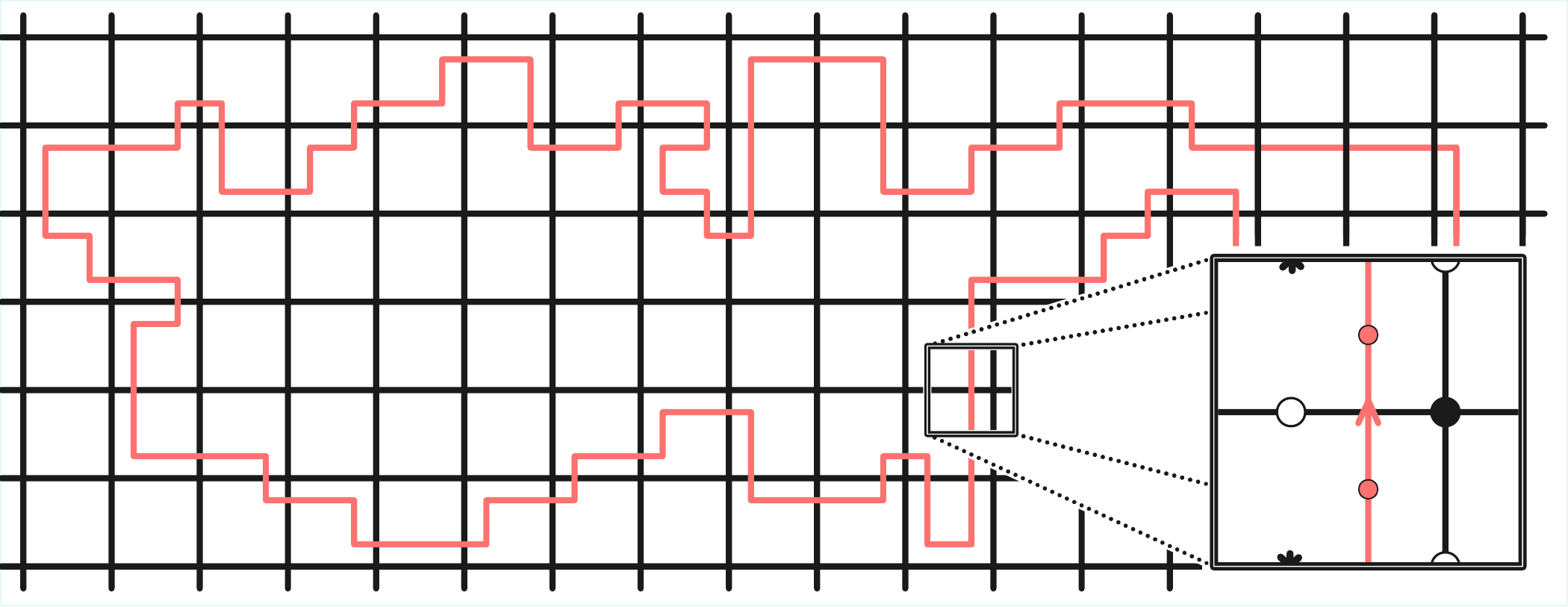}
	\put(78.8,14){$\zu^{\medial}_j$}
	\put(93,14){$\zu^{\diamond}_j$}
	\put(87.7,18.3){$c_j$}
	\put(87.7,5.4){$c_{j-1}$}
    \put(-2,34.5){
		\pgfsetfillopacity{1}
		\colorbox{white}{\centering \Large
		\parbox{0pt}{\pgfsetfillopacity{1}}\color{black} \hspace{0pt}$\Z^2$}} 
\end{overpic}
\centering
\caption{
A corner contour and, zoomed in, 
an integration step $(c_{j-1},c_j)$
\\
with its closest medial vertex $\zu_j^\medial \in \Zmedial$ and diamond vertex $\zu_j^\diamond \in \Zdiamond$.}
\label{fig: discrete contour}
\end{figure}

A \term{corner contour} is a corner path $\gamma=(c_0,\ldots,c_n)$ whose corners are all distinct except $c_0=c_n$.
The set of vertices in $\Zdiamond\cup\Zmedial$ that are enclosed by a corner contour~$\gamma$ is called the \term{interior} of $\gamma$, and it is denoted by~$\interior\gamma$.
We also use the notations~${\interiordiam \gamma \coloneqq \interior \gamma \cap \Zdiamond}$ and
${\interiormed \gamma \coloneqq \interior \gamma \cap \Zmedial}$.
A corner contour is said to be \term{positively oriented} if it encircles its interior in a counterclockwise fashion.
We use the notation~$\dsqint_{\gamma}$ for discrete integration along a positively-oriented corner contour~$\gamma$.

The following property of this notion of discrete integration are the ones that make it suitable for our analysis and will appear throughout the present text in our computations.
For a positively-oriented corner contour~$\gamma$,
and any two functions~$f \colon \Zdiamond \to \C$ and $g \colon \Zmedial \to \C$, we have the \term{discrete Stokes' formulae}
\begin{align*}
& \dsqint_\gamma f(\zu_\diamond)g(\zu_\medial)\dd{\zu}
= \; \phantom{-}\,  \ii \, \sum_{\zubis_\diamond\in\interiordiam\gamma} f(\zubis_\diamond)\,
\ddeebar g(\zubis_\diamond)
	  \,+\, \ii \, \sum_{\zubis_\medial\in\interiormed\gamma} \ddeebar f(\zubis_\medial)\, g(\zubis_\medial) \, , 
\\
& \dsqint_\gamma f(\zu_\diamond)g(\zu_\medial)\dd{\overline \zu}
= \;  -\, \ii \, \sum_{\zubis_\diamond\in\interiordiam\gamma} f(\zubis_\diamond)\,
\ddee g(\zubis_\diamond)
	  \,-\, \ii \, \sum_{\zubis_\medial\in\interiormed\gamma} \ddee f(\zubis_\medial)\, g(\zubis_\medial) \, . &
\end{align*}
Note that, from the discrete Stokes' formula, it follows that one can deform the corner contour of integration without changing the value of the integral as long as the integrands are discrete (anti)holomorphic in the appropriate domain.
\vspace{-10pt}

% {\red
% \vspace{10pt}
% {\blue\noindent COPY-PASTED! Do wee need this at all? Merge with our text if eventually needed.}

% The \term{discrete integration by parts} equalities
% \begin{align*}
% \dsqint_{\gamma} f(\zu_\diamond) \, \ddee h(\zu_\medial) \, \dd{\zu}
% = \; &
% - \dsqint_{\gamma} h(\zu_\diamond) \, \ddee f(\zu_\medial) \, \dd{\zu} \, ,
% \mspace{30mu} \text{ and }
% \\
% \dsqint_{\gamma} f(\zu_\diamond) \, \ddeebar h(\zu_\medial) \, \dd{\overline{\zu}}
% =\; &
% - \dsqint_{\gamma} h(\zu_\diamond) \, \ddeebar f(\zu_\medial) \, \dd{\overline{\zu}}
% \end{align*}
% hold whenever ${f, h \colon \Zdiamond \to \C}$ are two discrete holomorphic functions on a discrete neighborhood of a corner contour~$\gamma$.
% }

\vspace{-10pt}
\subsubsection{Discrete complex monomials}
\label{subsubsec: discrete monomials}
The last piece we need in our analysis are the discrete analogues of the complex monomials $z\mapsto z^n$ for all $n\in\Z$ introduced in~\cite{HKV} with a small modification---see~\cite{AdameCarrillo-discrete_symplectic_fermions, ABK-DGFF_local_fields} for the details and discussion of this modification.

\begin{prop}[{\cite[Proposition~2.1]{HKV}}]\label{prop: monomials}
There exists a unique family of $\C$-valued functions $\{\zu\mapsto \zu^{[n]}\}_{n\in\Z}$ on
$\Zdiamond\cup\Zmedial$ that satisfies the following properties:
\begin{enumerate}
\item For all $n\in\Z$, the function $\zu \mapsto \zu^{[n]}$ has the same square-grid symmetries
as the Laurent monomial $z \mapsto z^n$,
i.e., $(\ii \zu)^{[n]}=\ii^n \zu^{[n]}$ and $\overline{\zu}^{[n]} = \overline{\zu^{[n]}}$ for all
$\zu \in \Zdiamond\cup\Zmedial$.
\item For all $\zu\in\Zdiamond\cup\Zmedial$, $\zu^{[0]}=1$ and, for all $n\in\Z$, $\ddee \zu^{[n]} = n\,\zu^{[n-1]}$.
\item For each $\zu\in\Zdiamond\cup\Zmedial$, there exists an $N\in\N$ such that $\zu^{[n]}=0$ for all $n\geq N$.
\item For $n<0$, we have $\zu^{[n]} \to 0$ as $\vert \zu\vert\rightarrow \infty$.
\item The first negative-power monomial has the following explicit
failure of discrete holomorphicity near the origin
\begin{align*}
    \frac{1}{2\pi}\ddeebar \zu^{[-1]}
    =
    \frac{1}{2} \, \delta_{\zu,0}
    +
    \frac{1}{4}
    \sum_{\vert\zubis\vert = \frac{1}{2}} 
    \delta_{\zu,\zubis}
    +
    \frac{1}{8}\sum_{\vert\zubis\vert = \frac{1}{\sqrt{2}}}
    \mspace{-5mu}
    \delta_{\zu,\zubis}\,.
\end{align*}
\item \label{property: poles}
For any $n\geq 0$ and all $\zu\in\Zdiamond\cup\Zmedial$ we have $\ddeebar \zu^{[n]}=0$. 
For any $n<0$, we have $\ddeebar \zu^{[n]}=0$ except at finitely many
points~$\zu \in \Zdiamond\cup\Zmedial$.
%\footnote{This finite set of failure of discrete holomorphicity is made more explicit in the considerations below the proposition.}
\item For any $n,m\in\Z$, we have the discrete residue formula
\begin{align*}
    \frac{1}{2\pi\ii}
    \dsqint_{\gamma} \zu^{[n]}_\diamond \, \zu^{[m]}_\medial \, \dd \zu 
    \; = \; \delta_{n+m+1}\,,
\end{align*}
for any large enough positively-oriented corner contour $\gamma$ that encircles the origin.
\item For any $n \in \Z$, as~${|\zu| \to \infty}$, the discrete monomial has the asymptotics
\begin{align*}
\zu^{[n]} = \zu^n + o(|\zu|^n) \,,
\end{align*}
where in $\zu^n$ we interpret $\zu\in\C$ via the embedding  $\Z^2 \subset \C$.
\end{enumerate}
\end{prop}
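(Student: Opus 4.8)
The plan is to construct the family $\{\zu\mapsto\zu^{[n]}\}_{n\in\Z}$ by induction on $|n|$ outward from $n=0$, handling nonnegative and negative indices separately, and then to obtain uniqueness from a discrete Liouville-type argument. By their definitions $\ddee$ and $\ddeebar$ interchange functions on $\Zdiamond$ with functions on $\Zmedial$; writing $D_n$ and $M_n$ for the restrictions of $\zu^{[n]}$ to $\Zdiamond$ and $\Zmedial$, the recursion of property~(2) splits into the coupled pair $\ddee D_n=n\,M_{n-1}$ and $\ddee M_n=n\,D_{n-1}$, with $D_0\equiv M_0\equiv1$. The engine throughout is the discrete Stokes' formula: for a discrete-holomorphic integrand the contour integral $\dsqint_\gamma$ vanishes, so a discrete-holomorphic function admits a path-independent discrete primitive, itself discrete holomorphic and unique up to an additive element of $\ker\ddee$.

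For $n\ge0$ I would build the monomials upward. Given $\zu^{[n]}$, which is discrete holomorphic by the inductive hypothesis (property~(6)), the function $(n{+}1)\zu^{[n]}$ is again discrete holomorphic and hence admits a discrete primitive by the Stokes/Morera step above; I fix the residual additive ambiguity by imposing the square-grid symmetry of property~(1), which forces the correct parity and removes the free constant, yielding $\zu^{[n+1]}$. Discrete holomorphicity and the symmetry propagate by construction, and the vanishing property~(3)---that at each fixed vertex only finitely many monomials are nonzero---propagates by tracking the region on which each monomial vanishes; the far-field asymptotics $\zu^{[n]}=\zu^n+o(|\zu|^n)$ of property~(8), as $|\zu|\to\infty$ with $n$ fixed, follow by comparing the discrete primitive with the integral of the continuous monomial and estimating the accumulated discretisation error inductively.

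For $n<0$ the substance is concentrated entirely in the single object $\zu^{[-1]}$; all lower monomials are then obtained by differentiation, since property~(2) rearranges to $\zu^{[n-1]}=\tfrac1n\,\ddee\zu^{[n]}$ for $n\neq0$, giving $\zu^{[-m-1]}=-\tfrac1m\,\ddee\zu^{[-m]}$. I would define $\zu^{[-1]}$ as a suitable normalisation of the discrete Cauchy kernel, namely $\ddee$ of the full-plane lattice Green's function of $\dlaplacian$; via the factorisation $\ddee\ddeebar=\tfrac14\dlaplacian$ of Remark~\ref{rmk: factorisation laplacian}, its prescribed failure of holomorphicity (property~(5)) is exactly the statement that $\dlaplacian$ of the lattice Green's function is a point mass at the origin, and its decay (property~(4)) is the decay of the lattice Green's function. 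Differentiation then preserves both the decay and the finite support of the defect---using $\ddeebar\ddee=\ddee\ddeebar$ to see that $\ddeebar\zu^{[-m-1]}=-\tfrac1m\ddee(\ddeebar\zu^{[-m]})$ stays finitely supported (property~(6))---and reproduces the correct far-field behaviour $\zu^{-m}+o(|\zu|^{-m})$ (property~(8)) from that of $\zu^{[-1]}$. The residue formula (property~(7)) follows by evaluating $\dsqint_\gamma$ through the discrete Stokes' formula: the integrand is discrete holomorphic away from the origin, so only the localised defect of property~(5) contributes, and a short computation returns $\delta_{n+m+1}$.

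For uniqueness, suppose two families satisfy all the properties and let $e_n$ be their difference. For $n\ge0$ we have $e_0\equiv0$ and $\ddee e_n=n\,e_{n-1}$, so inductively each $e_n$ is discrete holomorphic with the symmetry of property~(1) and, by property~(8), grows slower than $|\zu|^n$; a discrete Liouville argument then forces $e_n\equiv0$. For $n<0$ the defect of $\zu^{[-1]}$ is prescribed identically by property~(5), so $e_{-1}$ is discrete holomorphic on all of $\Zdiamond\cup\Zmedial$ and decays by property~(4), hence vanishes by discrete Liouville, and the differentiation recursion then yields $e_{-m}\equiv0$ for every $m$. I expect the genuine difficulty to lie in the construction and control of $\zu^{[-1]}$: pinning down \emph{simultaneously} the exactly prescribed holomorphicity defect near the origin (property~(5)), the global decay (property~(4)), and the sharp far-field asymptotics (property~(8)) requires the full-plane discrete potential theory and the precise asymptotic expansion of the lattice Green's function, in contrast to the soft primitive-and-symmetrise argument that settles the nonnegative indices.
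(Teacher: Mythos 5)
First, note that the paper itself does not prove this proposition: it is imported from \cite{HKV} (Proposition~2.1 there, with the small modification discussed in \cite{AdameCarrillo-discrete_symplectic_fermions, ABK-DGFF_local_fields}), so there is no in-paper argument to compare against. Measured against the proof in the cited source, your outline reproduces its architecture faithfully: nonnegative powers by iterated discrete primitives normalised by the square-grid symmetry, $\zu^{[-1]}$ as $\ddee$ of the full-plane potential kernel so that property~(5) is exactly the factorisation $\ddee\ddeebar=\tfrac14\dlaplacian$ applied to $\dlaplacian\Green_{\Z^2}=-\delta_0$, the remaining negative powers by repeated application of $\ddee$, the residue formula via discrete Stokes localised at the defect, and uniqueness by a discrete Liouville argument. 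You also correctly identify that the quantitative content for negative indices sits in the asymptotic expansion of the lattice Green's function.

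Two points on the nonnegative side are thinner than the actual difficulty, and you should not file them under ``soft primitive-and-symmetrise.'' First, the additive ambiguity of a discrete primitive is a function that is simultaneously in $\ker\ddee$ and $\ker\ddeebar$, i.e.\ separately constant on the sublattices of $\Zdiamond\cup\Zmedial$; the symmetry $(\ii\zu)^{[n]}=\ii^n\zu^{[n]}$ kills such constants only when $n\not\equiv 0\pmod 4$, so for $n\equiv 0\pmod 4$ an additional normalisation near the origin is needed (this is precisely where the modification relative to \cite{HKV} lives). Second, property~(3) is not bookkeeping: one must show that the neighbourhood of the origin on which $\zu^{[n]}$ vanishes actually \emph{grows} without bound in $n$ --- ``tracking the region'' as you describe only shows it does not shrink. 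This growth is what later makes the current modes $\HolCurrModexi{n}$, etc.\ eventually annihilate every local field, so it cannot be waved through; establishing it (together with the compatibility of the three simultaneously imposed normalisations: symmetry, vanishing near $0$, and the asymptotics of property~(8)) is where the cited proof does genuine work for $n\geq 0$, comparable to the Green's-function analysis you single out for $n<0$.
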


\newpage
\section{Local fields of the fDGFF}
\label{sec: local_fields}
In this section, we introduce the discrete framework underlying our results.
More precisely, we construct a space of local fields for the fermionic discrete Gaussian free field ---see also \cite{abdesselam, CCRR-fDGFF_sandpile_UST}--- which we subsequently show to carry a representation of the symplectic fermions algebra in a meaningful way.

We start in Section 3.1 by defining the fermionic discrete Gaussian free field in general discrete domains of $\meshsize\Z^2$.
Then, in Section 3.2, we give the definition of local fields of the model.
Finally, in Section 3.3 we define certain operators on the space of local fields ---the current modes--- that enable the algebraic analysis of this space.

\vspace{-10pt}
\subsection{The fermionic discrete Gaussian free field}
\label{subsec: fDGFF def}
In this section we define the discrete model of interest, the \term{fermionic discrete Gaussian free field} (fDGFF).
For simplicity, we consider our model in discrete domains whose boundary is a polygonal Jordan curve $\gamma$ on $\C$ consisting of vertical and horizontal segments and with corners on $\meshsize\Z^2\subset \C$.
Then, we define a \term{discrete} (\term{Jordan}) \term{domain} as $\ddomain{\meshsize}\coloneqq \overline{\interior {\gamma}}\cap\meshsize\Z^2$ and its boundary as $\bdry\ddomain{\meshsize}\coloneqq \gamma\cap\meshsize\Z^2 \subset \ddomain{\meshsize}$.
We will consider discrete domains~$\ddomain{\meshsize}$ whose interior~$\ddomain{\meshsize}\setminus\bdry\ddomain{\meshsize}$, denoted by $\interior{\ddomain{\meshsize}}$, is path-connected via nearest-neighbour paths.

Given a set $X$, the \term{Grassmann algebra generated by} $X$ is the exterior algebra of the vector space~$\bigoplus_{x\in X} \C \, x$.
We use the notation $x_1x_2\cdots x_n$ for the product of $n$ algebra elements ---typically denoted as $x_1\wedge x_2\wedge\cdots\wedge x_n$ in the literature of exterior algebras---.
Recall that the product in the Grassmann algebra is anticommutative, which in our notation reads as $x_1 x_2 + x_2 x_1 = 0$ for all $x_1,x_2\in X$.

Let $\ddomain{\meshsize}\subset\meshsize\Z^2$ be a discrete domain.
Let $\Grass_{\ddomain{\meshsize}}[\fDGFFxi,\fDGFFtheta]$ denote Grassmann algebra generated by the set $\{\fDGFFxi(\zd) , \fDGFFtheta(\zd) \colon \zd\in\interior{\ddomain{\meshsize}}\}$; this space constitutes the observables of the fDGFF on $\ddomain{\meshsize}$.
Then, the (normalized) correlation functions of the fDGFF on $\ddomain{\meshsize}$ (with Dirichlet boundary conditions) are linear maps of the type
\begin{align*}
    \bigfDGFFCorrFun{\ddomain{\meshsize}}{\cdot}
    \;\colon\;
    \text{Gr}_{\ddomain{\meshsize}}[\fDGFFxi,\fDGFFtheta]
    \;\longrightarrow\;
    \C\,.
\end{align*}
In Remark~\ref{rmk: 2n pt function fDGFF} below, we give explicit formulae for all the corelation functions of the fDGFF on any discrete domain~$\ddomain{\meshsize}$.
However, the correlation functions are typically defined via the Berezin notion of integration.
To define it, we take a total order on the set $\ddomain{\meshsize}$ and consider the vector-space basis
\begin{align*}
    \bigg\{
    \prod_{\zd\in S_1}^{\rightarrow}
    \fDGFFxi(\zd)
    \prod_{\zdbis\in S_2}^{\rightarrow}
    \fDGFFtheta(\zdbis)
    \bigg\}_{S_1,S_2\subset \interior{\ddomain{\meshsize}}}
\end{align*}
where the arrow ($\rightarrow$) indicates that the order-dependent product is taken increasingly from \pagebreak left to right in the total order of $\ddomain{\meshsize}$.
Then, the \term{Berezin integral} over the Grassmann algebra $\Grass_{\ddomain{\meshsize}}[\fDGFFxi,\fDGFFtheta]$ is the linear map
\begin{align*}
    \iint \cdot \; \cd\fDGFFxi\cd\fDGFFtheta
    \;\colon\; 
    \Grass_{\ddomain{\meshsize}}[\fDGFFxi,\fDGFFtheta]
    \longrightarrow
    \C
\end{align*}
that satisfies
\begin{align*}
    \iint
    \Bigg(
    \prod_{\zd\in\interior{\ddomain{\meshsize}}} \!\!
    \fDGFFxi(\zd)\fDGFFtheta(\zd)
    \Bigg)
    \cd\fDGFFxi\cd\fDGFFtheta
    \, = \,1\,,
\end{align*}
and takes the value $0$ on the all basis elements except the one indexed by $S_1 = S_2 = \interior{\ddomain{\meshsize}}$ ---since it is proportional to $\prod_{\zd\in\ddomain{\meshsize}}\fDGFFxi(\zd)\fDGFFtheta(\zd)$---.

Then, to define the correlation functions of the fDGFF on $\ddomain{\meshsize}$, we consider the quadratic action
\begin{align*}
    \ActionfDGFF_{\ddomain{\meshsize}}
    \big[\fDGFFxi,\fDGFFtheta\big]
    \;\coloneqq\;
    \frac{1}{4\pi} \!\!
    \sum_{\zd\in\interior{\ddomain{\meshsize}}} \!\!
    \fDGFFxi(\zd)
    \Delta_{\ddomain{\meshsize}}^{\texttt{D}}
    \fDGFFtheta (\zd)
\end{align*}
where $\fDGFFtheta$ and $\fDGFFxi$ are conceived as maps $\interior{\ddomain{\meshsize}}\rightarrow\Grass_{\ddomain{\meshsize}}[\fDGFFxi,\fDGFFtheta]$,
and $\Delta_{\ddomain{\meshsize}}^{\texttt{D}}$ denotes the Laplacian operator with Dirichlet boundary conditions on $\ddomain{\meshsize}$, that is,
\begin{align*}
    \Delta_{\ddomain{\meshsize}}^{\texttt{D}}
    \fDGFFtheta (\zd)
    \,\coloneqq \!
    \sum_{\underset{\vert\zd-\zdbis\vert=\meshsize}{\zdbis\in\interior{\ddomain{\meshsize}}}} \!\!\!
    \fDGFFtheta(\zdbis)
    \, - \, 4\,\fDGFFtheta(\zd)\,.
\end{align*}
Then, for any element $F\in\Grass_{\ddomain{\meshsize}}[\fDGFFxi,\fDGFFtheta]$ we define its \term{fDGFF correlation function} as
\begin{align*}
    \bigfDGFFCorrFun{\ddomain{\meshsize}}{F\;\!}
    \;\coloneqq\;
    \frac{1}{\mathcal{Z}_{\ddomain{\meshsize}}}
    \int\!\!\!\int
    F\;
    e^{-\ActionfDGFF_{\ddomain{\meshsize}}[\fDGFFxi,\fDGFFtheta]}
    \cd\fDGFFxi\cd\fDGFFtheta\,,
\end{align*}
where $\mathcal{Z}_{\ddomain{\meshsize}} \coloneqq \iint e^{-\ActionfDGFF_{\ddomain{\meshsize}}[\fDGFFxi,\fDGFFtheta]} \cd\fDGFFxi\cd\fDGFFtheta$.

\noindent
For proofs of the statements in the following remark, we refer the reader to \cite{abdesselam}.

\begin{rmk}
\label{rmk: 2n pt function fDGFF}
The correlation function of the monomial~$\fDGFFxi(\zd_1)\cdots\fDGFFxi(\zd_n)\fDGFFtheta(\zdbis_1)\cdots\fDGFFtheta(\zdbis_m)$ vani\-shes unless $n=m$.
For the two-point function, we have
\begin{align*}
    \BigfDGFFCorrFun{\ddomain{\meshsize}}{\fDGFFxi(\zd)\fDGFFtheta(\zdbis)}
    \;=\;
    4\pi\,
    \Green_{\ddomain{\meshsize}}(\zd,\zdbis)\,,
\end{align*}
where $\Green_{\ddomain{\meshsize}}$ is ---up to a negative sign--- the inverse of the Dirichlet Laplacian, that is, it satisfies
\begin{align*}
    \big(\Delta_{\ddomain{\meshsize}}^{\texttt{D}}\big)_{(1)}
    \Green_{\ddomain{\meshsize}} (\zd,\zdbis)
    \,=\,
    -\delta_{\zd,\zdbis}
\end{align*}
for $\zd,\zdbis\in\interior{\ddomain{\meshsize}}$, where the subindex $(\,\cdot\,)_{(1)}$ indicates the Laplacian operator is taken with respect to the first variable of $\Green_{\ddomain{\meshsize}}$.
Finally, the $2n$-point functions satisfy the \term{fermionic Wick's formula}, i.e.~we have
\begin{align*}
    \BigfDGFFCorrFun{\ddomain{\meshsize}}{\fDGFFxi(\zd_1)\fDGFFtheta(\zdbis_1)\cdots\fDGFFxi(\zd_n)\fDGFFtheta(\zdbis_n)}
    \;=\;
    \sum_{\sigma\in\mathfrak{S}_n}
    (-1)^\sigma
    \prod_{i=1}^n
    \BigfDGFFCorrFun{\ddomain{\meshsize}}{\fDGFFxi(\zd_i)\fDGFFtheta(\zdbis_{\sigma(i)})}
\end{align*}
where $\mathfrak{S}_n$ denotes the symmetric group of order~$n$ and $(-1)^\sigma$ is the signature of the permutation~$\sigma$.
\hfill$\diamond$
\end{rmk}

We extend the correlation functions of the fDGFF to the boundary $\bdry\ddomain{\meshsize}$ by declaring $\bigfDGFFCorrFun{\ddomain{\meshsize}}{\fDGFFxi(\zd_1)\fDGFFtheta(\zdbis_1)\cdots\fDGFFxi(\zd_n)\fDGFFtheta(\zdbis_n)} = 0$ whenever at least one of the insertion points $\zd_1,\ldots,\zdbis_n$ sits on $\bdry\ddomain{\meshsize}$.
Note this does not spoil the discrete harmonicity of the correlation functions near the boundary since $\Green_{\ddomain{\meshsize}}$ is the Dirichlet Green's function.

\noindent
The following remark will be useful when we discuss scaling limits in Section~\ref{subsec: scaling limit}

\begin{rmk}
\label{rmk: discrete Wick with log corrections}
In a similar fashion as in Remark~\ref{rmk: general wick formula}, for three non-intersecting collections
$\underline{\zd} \coloneqq\{\zd_1 ,\ldots,\zd_n \}$,
$\underline{\zdbis} \coloneqq\{\zdbis_1 ,\ldots,\zdbis_n \}$,
and $\underline{\xd} \coloneqq\{\xd_1 ,\ldots,\xd_k \}$
of points in the discrete domain~$\ddomain{\meshsize}$,
the set of bijections from the set~${\underline{\zd} \cup\underline{\xd} }$ onto the set~${\underline{\zdbis} \cup\underline{\xd} }$ is denoted by ${\Bij(\underline{\zd} ,\underline{\zdbis} ;\underline{\xd} )}$.
For any fixed constant $c\in\C$, a straightforward manipulation of the fermionic Wick formula ---Remark~\ref{rmk: 2n pt function fDGFF}--- leads to
\begin{align*}
    &
    \BigfDGFFCorrFun{\ddomain{\meshsize}}
    {\fDGFFxi(\zd_1 ) \, \fDGFFtheta(\zdbis_1 )
    \cdots
    \fDGFFxi(\zd_n ) \, \fDGFFtheta(\zdbis_n ) \,
    \big( \, \fDGFFxi(\xd_1 ) \, \fDGFFtheta(\xd_1 ) + c \, \big)
    \cdots
    \big( \, \fDGFFxi(\xd_k ) \, \fDGFFtheta(\xd_k ) + c \, \big)}
    \phantom{\Bigg\vert}
    \\
    & 
    \mspace{40mu}
    \phantom{\Bigg\vert}
    =
    (-1)^k \!\!
    \sum_{b\in\Bij(  \underline{\zd} ,  \underline{\zdbis} ;  \underline{\xd} )} \!\!\!
    (-1)^b(-1)^{F_b}
    \prod_{\underset{b(y)\neq y}{y\in  \underline{\zd} \cup  \underline{\xd} }}
    \BigfDGFFCorrFun{\ddomain{\meshsize}}{
    \fDGFFxi(\xdbis ) \, \fDGFFtheta\big( b(\xdbis ) \big)}
    \prod_{\underset{b(\xd )= \xd }{\xd \in\underline{\xd} }}
    \bigg(\BigfDGFFCorrFun{\ddomain{\meshsize}}{
    \fDGFFxi(\xd ) \, \fDGFFtheta(\xd )}
    +\, c \,
    \bigg),
\end{align*}
where $(-1)^b$ is the signature of the bijection $b$, and $F_b$ is the number of fixed points of $b$.
\hfill$\diamond$
\end{rmk}

\vspace{-10pt}
\subsection{Local fields of the fDGFF}
\label{subsec: local fields}
We now define the local observables in the discrete setting that we aim to analyse and put in one-to-one correspondence with the local fields of a CFT.

\vspace{-10pt}
\subsubsection{Field polynomials}
The space of \term{field polynomials}\footnote{Note that, actually, they are \textit{not} polynomials since they are built with anticommuting variables.} of the fDGFF is
\begin{align*}
    \FieldPoly
    \;\coloneqq\;
    \Grass_{\Z^2}[\Fieldxi,\Fieldtheta]\,,
\end{align*}
that is, a field polynomial is an element of the Grassmann algebra over the set of generators $\{\Fieldxi(\zu),\Fieldtheta(\zu)\,\colon \zu\in\Z^2\}$.
When necessary, we will stress the product in $\FieldPoly$ with a dot ($\,\cdot\,$).
The \term{support} of a field polynomial $P\in\FieldPoly$, denoted by $\supp P$ is the smallest subset $S\subset \Z^2$ such that we have
\begin{align*}
    P
    \,\in\,
    \Grass_S[\Fieldxi,\Fieldtheta]
    \,\subset\,
    \Grass_{\Z^2}[\Fieldxi,\Fieldtheta]
    \,=\,
    \FieldPoly\,.
\end{align*}

The way to obtain an fDGFF observable is via the \term{evaluation map}:
given a discrete domain~$\ddomain{\meshsize}$ and point~$\zd\in\ddomain{\meshsize}$ therein, we define the map
\begin{align*}
    \ev^{\ddomain{\meshsize}}_{\zd}
    \;:\;
    \FieldPoly
    \;\longrightarrow\;
    \Grass_{\ddomain{\meshsize}}[\fDGFFxi,\fDGFFtheta]\,,
\end{align*}
as the unique associative algebra homomorphism that on the generators acts as
\begin{align*}
    \text{ev}^{\ddomain{\meshsize}}_{\bf{z}}\big(
    \Fieldxi(\zu)\big)
    \,=\,
    \fDGFFxi(\zd + \meshsize\zu)
    \mspace{40mu}
    \text{ and }
    \mspace{40mu}
    \text{ev}^{\ddomain{\meshsize}}_{\bf{z}}\big(
    \Fieldtheta(\zu)\big)
    \,=\,
    \fDGFFtheta(\zd + \meshsize\zu)
\end{align*}
where $\fDGFFxi(\zd + \meshsize\zu)$ and $\fDGFFtheta(\zd + \meshsize\zu)$ are interpreted as $0$ if $\zd + \meshsize\zu$ sits outside of ${\ddomain{\meshsize}}$.

\begin{ex}
Given a medial point $\zu_\medial\in\Z^2_\medial$, the field polynomials
\begin{align*}
    \holcurrxi(\zu_\medial)
    \,\coloneqq\,
    \pddee \Fieldxi(\zu_\medial)\,,
    &
    \mspace{40mu}
    \phantom{\Big\vert}
    \holcurrtheta(\zu_\medial)
    \,\coloneqq\,
    \pddee \Fieldtheta(\zu_\medial)\,,
    \mspace{50mu}
    \text{ and }
    \\
    \antiholcurrxi(\zu_\medial)
    \,\coloneqq\,
    \pddeebar \Fieldxi(\zu_\medial)\,,
    &
    \mspace{40mu}
    \phantom{\Big\vert}
    \antiholcurrtheta(\zu_\medial)
    \,\coloneqq\,
    \pddeebar \Fieldtheta(\zu_\medial)
\end{align*}
are called, respectively, the \term{holomorphic} and \term{antiholomorphic currrents}.
\hfill$\diamond$
\end{ex}

\vspace{-10pt}
\subsubsection{Null field polynomials}
A field polynomial~$P\in\FieldPoly$ is said to be \term{null}
if, for any discrete domain~$\ddomain{\meshsize}$ and any insertion point~$\zd\in\ddomain{\meshsize}$, we have
\begin{align*}
    \BigfDGFFCorrFun{\ddomain{\meshsize}}{\,\ev^{\ddomain{\meshsize}}_{\zd}(P)\,
    \fDGFFxi(\zd_1)\cdots\fDGFFxi(\zd_n)\,
    \fDGFFtheta(\zdbis_1)\cdots\fDGFFtheta(\zdbis_m)
    }
    \,=\,0
\end{align*}
whenever the distances~$\vert\,\zd-\zd_i\vert$, $\vert\,\zd-\zdbis_i\vert$, and $\cd(\zd,\bdry \ddomain{\meshsize})$ are large enough.
We let $\NullFieldPoly$ denote the subspace of null field polynomials.
Note that, while $\NullFieldPoly$ is a vector subspace, it is \emph{not} an ideal; furthermore, note the product of two null field polynomials is not necessarily null.

In the following, we provide some examples of field polynomials which are null and not null.
Let us start with a field polynomial that is \textit{not} null.

\begin{ex}\label{ex: 1 is not 0}
The field polynomial $1\in\FieldPoly$ is not null since we have, for example,
\begin{align*}
    \BigfDGFFCorrFun{\ddomain{\meshsize}}{\,\ev^{\ddomain{\meshsize}}_{\zd}(1)\,
    \fDGFFxi(\zd_1)\fDGFFtheta(\zdbis_1)}
    \,=\,
    \BigfDGFFCorrFun{\ddomain{\meshsize}}{\,\fDGFFxi(\zd_1)\fDGFFtheta(\zdbis_1)}
    \,=\,
    4\pi\,\Green_{\ddomain{\meshsize}}(\zd_1,\zdbis_1)
    \,\neq\,0
\end{align*}
for any discrete domain $\ddomain{\meshsize}$ and any points $\zd\in\ddomain{\meshsize}$ and $\zd_1,\zdbis_1\in\interior{\ddomain{\meshsize}}$.
\hfill$\diamond$
\end{ex}

Let us now see examples of field polynomials that \emph{are} null.

\begin{ex}\label{ex: Laplacian is null}
For any $\zu\in\Z^2$, the field polynomials
\begin{align*}
    \phantom{\bigg\vert}
    \dlaplacian\Fieldxi(\zu)
    \,\coloneqq\,
    \sum_{\underset{\vert\zubis-\zu\vert=1}{\zubis\in\Z^2}}
    \!\!\Fieldxi(\zubis)
    -4\Fieldxi(\zu)
    \mspace{50mu}
    \textnormal{ and }
    \mspace{50mu}
    \dlaplacian\Fieldtheta(\zu)
    \,\coloneqq\,
    \sum_{\underset{\vert\zubis-\zu\vert=1}{\zubis\in\Z^2}}
    \!\!\Fieldtheta(\zubis)
    -4\Fieldtheta(\zu)
    \phantom{\bigg\vert}
\end{align*}
are null.
This can be straightforwardly verified from Remark~\ref{rmk: 2n pt function fDGFF}.
Moreover, using Wick's formula, one can see that, if $P\in\FieldPoly$ is a field polynomial satisfying $\zu \notin \supp P$, the field polynomials $\dlaplacian\Fieldxi(\zu)\cdot P$ and $\dlaplacian\Fieldtheta(\zu)\cdot P$ are null, too.
\hfill$\diamond$
\end{ex}

\begin{ex}
By virtue of the factorisation of the Laplacian ---{Remark~\ref{rmk: factorisation laplacian}}--- and Example~\ref{ex: Laplacian is null}, we have that the field polynomials
\begin{align*}
	\ddeebar\holcurrxi(\zu)\,,
	\mspace{50mu}
	\ddeebar\holcurrtheta(\zu)\,,
	\mspace{50mu}
	\ddee\antiholcurrxi(\zu)\,,
	\mspace{40mu}
	\textnormal{and}
	\mspace{40mu}
	\ddee\antiholcurrtheta(\zu)
\end{align*}
are null for any $\zu\in\Z^2$.
\hfill$\diamond$
\end{ex}

\begin{ex}\label{ex: cubic null field}
The field polynomial~$\dlaplacian\Fieldxi(\zu)\Fieldtheta(0)\Fieldxi(0) + 4\pi \delta_{0,\zu}\Fieldxi(0)$ is null.
To see this, we evaluate it at some $\zd$ in some discrete domain~$\ddomain{\meshsize}$ and test it against distant insertions.
Using Wick's formula and the multilinearity of correlation functions, we have
\begin{align*}
    \BigfDGFFCorrFun{\ddomain{\meshsize}}
    {
    \Big( \dlaplacian &\,\fDGFFxi(\zd + \meshsize\zu) \fDGFFtheta(\zd) + 4\pi \delta_{0,\zu} \Big) \, \fDGFFxi(\zd)
    \cdot
    \fDGFFxi(\zd_1)\cdots\fDGFFxi(\zd_n)\,
    \fDGFFtheta(\zdbis_1)\cdots\fDGFFtheta(\zd_m)
    }
    \phantom{\bigg\vert}
    \\
    &
    \,=
    \bigg( \big(\dlaplacian\big)_{\zu}
    \BigfDGFFCorrFun{\ddomain{\meshsize}}
    {
    \fDGFFxi(\zd+\meshsize\zu) \fDGFFtheta(\zd) 
    }
    +
    4\pi \delta_{0,\zu} \,\bigg)\,
    \BigfDGFFCorrFun{\ddomain{\meshsize}}
    {
    \fDGFFxi(\zd)
    \,
    \fDGFFxi(\zd_1)\cdots\fDGFFxi(\zd_n)\,
    \fDGFFtheta(\zdbis_1)\cdots\fDGFFtheta(\zdbis_m)
    }
    \\
    &
    \mspace{20mu}
    -\,
    \big(\dlaplacian\big)_{\zu}
    \BigfDGFFCorrFun{\ddomain{\meshsize}}
    {
    \fDGFFxi(\zd+\meshsize\zu) \fDGFFxi(\zd) 
    }
    \,
    \BigfDGFFCorrFun{\ddomain{\meshsize}}
    {
    \fDGFFtheta(\zd)
    \,
    \fDGFFxi(\zd_1)\cdots\fDGFFxi(\zd_n)\,
    \fDGFFtheta(\zdbis_1)\cdots\fDGFFtheta(\zdbis_m)
    }
    \phantom{\Bigg\vert}
    \\
    &
    \mspace{20mu}
    -\,
    \sum_i (-1)^{i}
    \big(\dlaplacian\big)_{\zu}
    \BigfDGFFCorrFun{\ddomain{\meshsize}}
    {
    \fDGFFxi(\zd+\meshsize\zu) \fDGFFxi(\zd_i) 
    }
    \,
    \BigfDGFFCorrFun{\ddomain{\meshsize}}
    {
    \fDGFFxi(\zd) \, \fDGFFtheta(\zd)
    \cdots \widehat{\fDGFFxi(\zdbis_i)} \cdots
    }
    \phantom{\Bigg\vert}
    \\
    &
    \mspace{20mu}
    -\,
    \sum_j (-1)^{n+j}
    \big(\dlaplacian\big)_{\zu}
    \BigfDGFFCorrFun{\ddomain{\meshsize}}
    {
    \fDGFFxi(\zd+\meshsize\zu) \fDGFFtheta(\zdbis_j) 
    }
    \,
    \BigfDGFFCorrFun{\ddomain{\meshsize}}
    {
    \fDGFFxi(\zd) \, \fDGFFtheta(\zd)
    \cdots \widehat{\fDGFFtheta(\zdbis_j)} \cdots
    }
    \phantom{\Bigg\vert}\,,
\end{align*}
where the hat ($\,\widehat{\phantom{m}}\;\!$) indicates that a term does not appear in the product.
Note that in the right-hand side of the above equation, the first factor in each term vanishes, proving the nullity of the given field polynomial.
\hfill$\diamond$
\end{ex}

\vspace{-10pt}
\subsubsection{Local fields}
The space of local observables that is actually meaningful is the quotient
\begin{align*}
    \Fields
    \,\coloneqq\,
    \FieldPoly/\NullFieldPoly\,.
\end{align*}
The elements of $\Fields$ we refer to as \term{local fields} of the fDGFF.
We use the notation $P+\Null$ for the equivalence class of $P\in\FieldPoly$ in the quotient $\Fields$.

\begin{ex}
\label{ex: rep of ground states}
In Theorem~\ref{thm: isomorphism} below, we prove that the space of local fields~$\Fields$ is isomorphic to the logarithmic Fock space of the symplectic fermions.
Through that isomorphism the local fields
\begin{align*}
    \mspace{76mu}
    \fDGFFGround \,\coloneqq\, 1 + \Null
    \mspace{50mu}
    \textnormal{ and }
    \mspace{50mu}
    \fDGFFGroundPartner \, \coloneqq - \Fieldxi(0)\Fieldtheta(0) + \Null
\end{align*}
correspond to the identity field and its logarithmic partner, respectively,
and the local fields
\begin{align*}
    \phantom{\Big\vert}
    \fDGFFGroundXi \,\coloneqq\, \Fieldxi(0) + \Null
    \mspace{50mu}
    \textnormal{ and }
    \mspace{50mu}
    \fDGFFGroundTheta \, \coloneqq  \Fieldtheta(0) + \Null
\end{align*}
correspond to the fermionic ground states.
\hfill$\diamond$
\end{ex}

\vspace{-20pt}
\subsection{Current modes}
\label{subsec: current modes}
\vspace{-5pt}
Our goal is to understand the structure of the space~$\Fields$ in relation to the scaling limit of (the evaluation of representatives of) its elements within correlation functions.
We take CFT as a source of inspiration for that.
In the symplectic fermions CFT, the currents ---that is the fields~$\HolCurrentChi$, $\HolCurrentEta$, $\AntiHolCurrentChi$, and $\AntiHolCurrentEta$--- are fields that, in a sense, carry the symmetry algebra as their Fourier modes.
More precisely, inside correlation functions and for $\vert z - w \vert $ small enough, we have the convergent series
\begin{align*}
    \HolCurrentChi(z)
    \field(w)
    \, = \,
    \sum_{k\in\Z} \,
    \frac{ \big[ \HolCurrModexi{k} \field \big] (w) }
    {(z-w)^{k+1}}\,,
\end{align*}
and similar formulae for the other currents---see Property~\textbf{(CUR)} in Theorem~\ref{thm: characterization of CF}.
Series expansions of this type are called \term{operator pro\-duct expansions} (OPEs) and their existence is a fundamental pillar of two-dimensional conformal field theory.

In the case of the symplectic fermions, due to the holomorphicity of $\HolCurrentChi$ within correlation functions, that means that, in the continuum theory, we can obtain the correlation functions of the field~$[\HolCurrModexi{k} \field]$ as by contour integrating $\HolCurrentChi$ around $\field$ with the appropriate coefficient.
Although such converging expansion are not available in the discrete, we can define operators that mimic the extraction of Laurent coefficients with the tools of discrete complex analysis that we described in Section~\ref{subsec: tools discret C-anal}.

\begin{prop}
\label{prop: current modes}
Let $P+\Null\in\Fields$ be a local field.
The linear operators
\begin{align}
	\label{eq: chi mode}
	\HolCurrModexi{k}\big( P + \Null\big) &
	\,\coloneqq\,
	\dsqint_{\gamma}
	\frac{\dd\zu}{2\pi \ii}\,
	\zu^{[k]}_\diamond\,
	\holcurrxi(\zu_\medial) \cdot P 
	+ \Null\,,
	\\
	\label{eq: eta mode}
	\HolCurrModetheta{k}\big( P + \Null\big) &
	\,\coloneqq\,
	\dsqint_{\gamma}
	\frac{\dd\zu}{2\pi \ii}\,
	\zu^{[k]}_\diamond\,
	\holcurrtheta(\zu_\medial) \cdot P 
	+ \Null\,,
	\\
	\label{eq: chi bar mode}
	\AntiHolCurrModexi{k}\big( P + \Null\big) &
	\,\coloneqq\,
	-
	\dsqint_{\gamma}
	\frac{\dd\overline{\zu}}{2\pi \ii}\,
	\overline{\zu}^{[k]}_\diamond\,
	\antiholcurrxi(\zu_\medial) \cdot P 
	+ \Null\,,
	\mspace{20mu} \text{ and }
	\\
	\label{eq: eta bar mode}
	\AntiHolCurrModetheta{k}\big( P + \Null\big) &
	\,\coloneqq\,
	-
	\dsqint_{\gamma}
	\frac{\dd\overline{\zu}}{2\pi \ii}\,
	\overline{\zu}^{[k]}_\diamond\,
	\antiholcurrtheta(\zu_\medial) \cdot P 
	+ \Null\, ,
\end{align}
where $\gamma$ is a large enough discrete corner contour around the origin, are well-defined.
\end{prop}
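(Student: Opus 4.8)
The plan is to reduce the claim that each operator is well defined to two statements: first, that for a fixed representative $P\in\FieldPoly$ the class in $\Fields$ of the contour integral does not depend on the choice of large corner contour $\gamma$; and second, that the assignment $P\mapsto\dsqint_\gamma\frac{\dd\zu}{2\pi\ii}\zu^{[k]}_\diamond\holcurrxi(\zu_\medial)\cdot P$ maps $\NullFieldPoly$ into itself, so that it descends to the quotient $\Fields=\FieldPoly/\NullFieldPoly$. I will argue for $\HolCurrModexi{k}$; the operator $\HolCurrModetheta{k}$ is handled identically with $\Fieldtheta$ in place of $\Fieldxi$, and the antiholomorphic operators $\AntiHolCurrModexi{k},\AntiHolCurrModetheta{k}$ by the conjugate version of every step below, using $\dd\overline\zu$, the monomials $\overline\zu^{[k]}$, and the second discrete Stokes' formula with $\ddee$ in place of $\ddeebar$ (noting $\ddee\antiholcurrxi=\ddee\pddeebar\Fieldxi=\tfrac12\dlaplacian\Fieldxi$ on $\Zprimary$ by Remark~\ref{rmk: factorisation laplacian}).

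For the contour independence I would take two large positively oriented corner contours $\gamma'\subset\gamma$, both enclosing the origin and $\supp P$, and apply the discrete Stokes' formula to the integrand $f(\zu_\diamond)g(\zu_\medial)$ with $f=\zu^{[k]}$ and $g(\zu_\medial)=\holcurrxi(\zu_\medial)\cdot P$ over the region between them. The term carrying $\ddeebar f=\ddeebar\zu^{[k]}$ vanishes on that region by Proposition~\ref{prop: monomials}(\ref{property: poles}), since the region avoids the origin; and by the factorisation of Remark~\ref{rmk: factorisation laplacian} one has $\ddeebar g=\ddeebar\pddee\Fieldxi\cdot P=\tfrac12\dlaplacian\Fieldxi\cdot P$ at vertices of $\Zprimary$ (and $0$ on $\Zdual$). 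Hence $\dsqint_\gamma-\dsqint_{\gamma'}$ is a finite sum of terms proportional to $\dlaplacian\Fieldxi(\zu_\diamond)\cdot P$ with $\zu_\diamond\notin\supp P$, each null by Example~\ref{ex: Laplacian is null}. As $\NullFieldPoly$ is a vector subspace, the two contour integrals agree modulo $\NullFieldPoly$, which is the first statement.

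For the descent it suffices, by linearity, to show that $N\in\NullFieldPoly$ implies the contour integral lies in $\NullFieldPoly$, and I would test it against a product of distant insertions. Since $\ev^{\ddomain{\meshsize}}_{\zd}$ is an algebra homomorphism and the discrete integral is a finite sum, the tested quantity becomes $\dsqint_\gamma\frac{\dd\zu}{2\pi\ii}\zu^{[k]}_\diamond\,\bigfDGFFCorrFun{\ddomain{\meshsize}}{\ev^{\ddomain{\meshsize}}_{\zd}(\holcurrxi(\zu_\medial))\,\ev^{\ddomain{\meshsize}}_{\zd}(N)\,\fDGFFxi(\zd_1)\cdots\fDGFFtheta(\zdbis_m)}$. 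The point is that $\ev^{\ddomain{\meshsize}}_{\zd}(\holcurrxi(\zu_\medial))$ is a finite difference of $\fDGFFxi$-values at lattice points within $O(1)$ of $\zd+\meshsize\zu_\medial$, hence a combination of $\fDGFFxi$-insertions all lying at distance of order the contour radius from $\zd$. If I enlarge the contour — which is now legitimate, by the contour independence just established — so that these points, together with the $\zd_i$ and $\zdbis_j$, lie beyond the nullity threshold of $N$ and far from $\bdry\ddomain{\meshsize}$, then every correlation function in the sum is exactly of the form appearing in the definition of nullity of $N$, and so vanishes. Thus the contour integral is null for that contour, and by contour independence for every large contour, completing the argument.

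The step I expect to be the genuine obstacle is this descent, and the subtlety is an interplay of quantifiers rather than a hard estimate: the defining contour is only required to be ``large enough'', whereas the threshold past which the correlators of a given null $N$ vanish depends on $N$, so one cannot test nullity against an a priori fixed contour. The resolution is precisely to prove contour independence first and then, for each fixed $N$, enlarge the contour beyond that $N$-dependent threshold. One must also keep track of the discrete geometry — verifying that the insertions produced by $\pddee$ acting near the contour genuinely sit far from $\zd$ and from the boundary, so that Wick's formula (Remark~\ref{rmk: 2n pt function fDGFF}) and the definition of $\NullFieldPoly$ apply — which is the one place where the placement of the corner contour relative to $\supp N$ and to the test insertions really matters.
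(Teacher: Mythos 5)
Your proposal is correct and follows essentially the same route as the paper's proof: contour independence via the discrete Stokes' formula, the discrete holomorphicity of the monomials $\zu^{[k]}$, and the nullity of $\dlaplacian\Fieldxi(\zu)\cdot P$ for $\zu\notin\supp P$; and descent to the quotient via the fact that the product of a null field polynomial with a field polynomial of distant support is null. The paper merely cites this last fact (deferring details to Lemmata~4.2--4.3 of \cite{AdameCarrillo-discrete_symplectic_fermions}), whereas you prove it inline by testing against distant insertions and correctly resolve the quantifier issue by establishing contour independence first.
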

\pagebreak
\begin{proof}
The operators are well-defined if they do not depend on the choice of corner contour~$\gamma$ and the choice of field-polynomial representative~$P$.
A straightforward application of the discrete Stokes' formula ---see Section~\ref{subsubsec: discrete integration}--- combined with the discrete holomorphicity of the discrete monomials ---Proposition~\ref{prop: monomials}--- and the null fields in Example~\ref{ex: Laplacian is null} is enough to see that the operators are well-defined.
To see that the operator does not depend on the representative $P$, one just needs to remember that the product of two field polynomials with supports that are at a large enough distance is null if one of the two field polynomials is null.
For a more detailed proof see Lemmata~4.2 and 4.3 in \cite{AdameCarrillo-discrete_symplectic_fermions}.
\end{proof}

We refer to the operators~\eqref{eq: chi mode}-\eqref{eq: eta mode} and \eqref{eq: chi bar mode}-\eqref{eq: eta bar mode} as the \term{holomorphic and antiholomorphic} \term{current modes}, respectively.
They constitute a representation of two anticommuting copies of the symplectic fermions algebra, as we state in the following proposition.

Let $\{\cdot,\cdot\}$ denote the usual anticommutator of operators, that is $\{A,B\}\coloneqq AB+BA$.

\begin{prop}\label{prop: anticommutation relations}
For all $k,\ell \in \Z$ the current modes satisfy
\begin{align*}
    \{\HolCurrModetheta{k}, \HolCurrModexi{\ell}\}
    \,=\,
    k \, \Kronecker_{k+\ell} \, \id_{\Fields}
    \,=\,
    \{\AntiHolCurrModetheta{k}, \AntiHolCurrModexi{\ell}\}\,,
\end{align*}
\begin{align*}
    \{\HolCurrModetheta{k}, \HolCurrModetheta{\ell}\}
    \,=\,
    \{\HolCurrModexi{k}, \HolCurrModexi{\ell}\}
    \,=\,
    \{\AntiHolCurrModetheta{k}, \AntiHolCurrModetheta{\ell} \}
    \,=\,
    \{\AntiHolCurrModexi{k}, \AntiHolCurrModexi{\ell}\}
    \,=\, 0\,,
\end{align*}
and
\begin{align*}
    \{\HolCurrModetheta{k}, \AntiHolCurrModetheta{\ell}\}
    \,=\,
    \{\HolCurrModexi{k}, \AntiHolCurrModexi{\ell}\}
    \,=\,
    \{\HolCurrModetheta{k}, \AntiHolCurrModexi{\ell}\}
    \,=\,
    \{\AntiHolCurrModetheta{k}, \HolCurrModexi{\ell}\}
    \,=\, 0\,.
\end{align*}
\end{prop}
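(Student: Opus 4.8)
The plan is to reduce everything to the single non-trivial relation $\{\HolCurrModetheta{k},\HolCurrModexi{\ell}\}=k\,\Kronecker_{k+\ell}\,\id_\Fields$ and to obtain it by the discrete ``radial ordering'' computation of \cite{AdameCarrillo-discrete_symplectic_fermions}, the other identities then following from the same machinery. Fixing a representative $P$ supported near the origin, I would write both compositions $\HolCurrModetheta{k}\HolCurrModexi{\ell}$ and $\HolCurrModexi{\ell}\HolCurrModetheta{k}$ as double corner-contour integrals of the integrand $\zu^{[k]}_\diamond\,\zdbis^{[\ell]}_\diamond\,\holcurrtheta(\zu_\medial)\cdot\holcurrxi(\zdbis_\medial)\cdot P$ against $\tfrac{\dd\zu}{2\pi\ii}\tfrac{\dd\zdbis}{2\pi\ii}$, the two orderings differing only in which variable is integrated along the outer contour. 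Since $\holcurrtheta=\pddee\Fieldtheta$ and $\holcurrxi=\pddee\Fieldxi$ are odd elements of $\FieldPoly$, they genuinely anticommute there, so after moving $\holcurrxi$ past $\holcurrtheta$ the two terms carry the same integrand and the anticommutator becomes the difference of the two nested configurations $|\zu|>|\zdbis|$ and $|\zdbis|>|\zu|$.

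Next I would deform contours. Holding the outer variable $\zu$ fixed on a large contour, the difference of the two $\zdbis$-contours (one enclosing the origin inside $\zu$, one enclosing both the origin and $\zu$) collapses --- by the discrete Stokes' formulae of Section~\ref{subsubsec: discrete integration} together with property~\ref{property: poles} of Proposition~\ref{prop: monomials}, which guarantees $\ddeebar\zdbis^{[\ell]}=0$ away from the origin --- to a single small contour around $\zu$ alone. Applying discrete Stokes once more to this inner integral localises it to the failure of discrete holomorphicity of $\holcurrxi$: by Remark~\ref{rmk: factorisation laplacian} one has $\ddeebar\holcurrxi(\zdbis)=\tfrac12\dlaplacian\Fieldxi(\zdbis)$ at primary points and $0$ at dual points, so the inner integral reduces to a finite sum of terms $\zdbis^{[\ell]}_\diamond\,\tfrac12\dlaplacian\Fieldxi(\zdbis)\cdot\holcurrtheta(\zu)\cdot P$ over primary $\zdbis$ near $\zu$.

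By Example~\ref{ex: Laplacian is null} each such term is null unless the Laplacian centre $\zdbis$ meets the support of $\holcurrtheta(\zu)=\pddee\Fieldtheta(\zu_\medial)$, i.e.\ unless $\zdbis$ is one of the two primary endpoints of the medial edge $\zu_\medial$; this is exactly the discrete analogue of the $\eta$--$\chi$ contraction. On those collision points I would use the cubic null-field relation of Example~\ref{ex: cubic null field} to rewrite $\dlaplacian\Fieldxi(\zdbis)\cdot\Fieldtheta(\zdbis)\cdot(\cdots)$ as $-4\pi$ times the same field polynomial with the matching $\Fieldtheta$ removed, modulo $\Null$. The two endpoints enter $\pddee\Fieldtheta(\zu_\medial)$ with opposite signs, so their combination reconstitutes a discrete derivative; summing the resulting residues over the outer $\zu$-contour against $\zu^{[k]}$ and invoking $\ddee\zu^{[k]}=k\,\zu^{[k-1]}$ (property~2 of Proposition~\ref{prop: monomials}) together with the discrete residue formula (property~7) produces precisely the factor $k\,\Kronecker_{k+\ell}$, while the contraction leaves $P$ unchanged --- giving $k\,\Kronecker_{k+\ell}\,\id_\Fields$.

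The remaining identities follow from the same computation with the contraction either absent or silent. For $\{\HolCurrModetheta{k},\HolCurrModetheta{\ell}\}$ and $\{\HolCurrModexi{k},\HolCurrModexi{\ell}\}$ the residue would force $\dlaplacian\Fieldtheta$ to contract against a $\Fieldtheta$ (resp.\ $\dlaplacian\Fieldxi$ against a $\Fieldxi$); since the fDGFF admits no $\fDGFFtheta$--$\fDGFFtheta$ nor $\fDGFFxi$--$\fDGFFxi$ contraction (Remark~\ref{rmk: 2n pt function fDGFF}) the corresponding field polynomial stays null and the anticommutator is $0$ in $\Fields$. For the mixed holomorphic/antiholomorphic brackets the two integrations are carried in $\dd\zu$ and $\dd\overline{\zu}$ respectively, and after the antiholomorphic residue is extracted the surviving integrand is discrete holomorphic in the remaining variable, so its closed-contour integral vanishes; alternatively these reduce verbatim to the computations in \cite{AdameCarrillo-discrete_symplectic_fermions}. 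The main obstacle is the bookkeeping in the third paragraph: correctly tracking the lattice geometry of the collision (horizontal versus vertical medial edges, and the resulting sign and position of the two $\Fieldtheta$-endpoints) so that the discrete derivative is assembled with the right orientation, and checking that the combinatorial constants (the $4\pi$ from Example~\ref{ex: cubic null field}, the $\tfrac12$ from Remark~\ref{rmk: factorisation laplacian}, and the normalisations in \eqref{eq: chi mode}--\eqref{eq: eta bar mode}) combine to yield exactly $k$ rather than a rescaled coefficient.
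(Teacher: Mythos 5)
Your proposal follows essentially the same route as the paper: the paper disposes of the same-sector relations by citing the radial-ordering/contour-deformation computation of \cite[Prop.~4.4]{AdameCarrillo-discrete_symplectic_fermions} --- which is precisely the argument you spell out (contour difference $\to$ Stokes $\to$ $\tfrac12\dlaplacian\Fieldxi$ at primary points $\to$ the cubic null field of Example~\ref{ex: cubic null field} $\to$ discrete residue) --- and proves the cross-sector vanishing by the same contour-difference computation you sketch, reducing it to the contour-independent integral $\dsqint_\gamma \overline{\zubis}^{[\ell]}_\diamond\,\pddeebar\zubis^{[k]}_\medial\,\dd\overline{\zubis}$ whose vanishing it in turn cites from \cite{ABK-DGFF_local_fields}. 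The only soft spot is your closing justification for the mixed brackets (``the surviving integrand is discrete holomorphic, so the integral vanishes''), which is looser than the explicit residual integral the paper actually has to kill, but you correctly flag the alternative of deferring to the cited computations, so there is no genuine gap.
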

\begin{proof}
The proof that each sector ---holomorphic and antiholomorphic--- constitutes a representation of the symplectic fermions algebra follows the same lines as the proof of Proposition~4.4 in \cite{AdameCarrillo-discrete_symplectic_fermions}.
For the anticommutativity of the sectors with one another, we write:
\begin{align*}
\big[\{\HolCurrModexi{k}, \AntiHolCurrModetheta{\ell}\}\big]
\big( P + \Null\, \big)
=\; &\,
-\frac{1}{(2\pi \ii)^2}
\bigg[
\dsqint_{\gamma_+}\,
\dd\zu\,
\zu^{[k]}_\diamond\,
\holcurrxi(\zu_\medial)
\dsqint_{\gamma}\,
\dd\overline{\zubis}\,
\overline{\zubis}^{[\ell]}_\diamond\,
\antiholcurrtheta(\zubis_\medial)
\\
&
\mspace{110mu}
+\;
\dsqint_{\gamma}\,
\dd\overline{\zubis}\,
\overline{\zubis}^{[\ell]}_\diamond\,
\antiholcurrtheta(\zubis_\medial)
\dsqint_{\gamma_-}\,
\dd\zu\,
\zu^{[k]}_\diamond\,
\holcurrxi(\zu_\medial)
\bigg]
\cdot P \,+\, \Null
\\
= \;&\,
\frac{1}{(2\pi \ii)^2}
\dsqint_{\gamma}\,
\dd\overline{\zubis}\,
\overline{\zubis}^{[\ell]}_\diamond\,
\bigg[
\dsqint_{\gamma_+}\,
-
\dsqint_{\gamma_-}
\bigg]
\dd\zu\,
\zu^{[k]}_\diamond\,
\holcurrxi(\zu_\medial)
\antiholcurrtheta(\zubis_\medial)
\cdot P \,+\, \Null
\phantom{\Bigg\vert}
\\
= \;&\,
\frac{\ii}{(2\pi \ii)^2}
\dsqint_{\gamma}\,
\dd\overline{\zubis}\,
\overline{\zubis}^{[\ell]}_\diamond\,
\sum_{\underset{\zu_\diamond\notin\interiordiam\gamma_-}{\zu_\diamond\in\interiordiam\gamma_+}}
\zu^{[k]}_\diamond\,
\ddeebar\holcurrxi(\zu_\diamond)
\antiholcurrtheta(\zubis_\medial)
\cdot P \,+\, \Null
\phantom{\Bigg\vert}
\\
= \;&\,
\frac{\ii}{2(2\pi \ii)^2}
\dsqint_{\gamma}\,
\dd\overline{\zubis}\,
\overline{\zubis}^{[\ell]}_\diamond\,
\sum_{\underset{\zu_\primary\notin\interiorprim\gamma_-}{\zu_\primary\in\interiorprim\gamma_+}}
\zu^{[k]}_\primary\,
\dlaplacian\Fieldxi(\zu_\primary)
\pddeebar\Fieldtheta(\zubis_\medial)
\cdot P \,+\, \Null
\end{align*}
\begin{align*}
\mspace{100mu}
= \;&\,
-
\frac{4\pi\ii}{2(2\pi \ii)^2}
\dsqint_{\gamma}\,
\dd\overline{\zubis}\,
\overline{\zubis}^{[\ell]}_\diamond\,
\sum_{\underset{\zu_\primary\notin\interiorprim\gamma_-}{\zu_\primary\in\interiorprim\gamma_+}}
\zu^{[k]}_\primary\,
\big(\pddeebar\big)_{\zubis_\medial}\big(\delta_{\zu_\primary,\zubis_\medial}\big)
\cdot P \,+\, \Null
\\
= \;&\,
-
\frac{4\pi\ii}{2(2\pi \ii)^2}
\bigg[
\dsqint_{\gamma}\,
\dd\overline{\zubis}\,
\overline{\zubis}^{[\ell]}_\diamond\,
\pddeebar\zubis^{[k]}_\medial
\bigg]
\, P \,+\, \Null\,.
\end{align*}
The discrete integral in square brackets vanishes for all values of $k$ and $\ell$; see the proof of Proposition~4.2 in \cite{ABK-DGFF_local_fields}.
\end{proof}

In other words, the current modes render the space of local fields of the fDGFF a representation of the symmetry algebra of the symplectic fermions CFT.

\begin{coro}
The map $\FullSymFer\to\End(\Fields)$ given by (the linear extension of)
\begin{align*}
    &
    \phantom{\Big\vert}
    \AlgEta{k} \,\longmapsto\, \HolCurrModetheta{k}\,,
    \mspace{90mu}
    \AlgChi{k} \,\longmapsto\, \HolCurrModexi{k}\,,
    \mspace{90mu}
    \AlgK \,\longmapsto\, \id_\Fields\,,
    \\
    &
    \phantom{\Big\vert}
    \AlgEtaBar{k} \,\longmapsto\, \AntiHolCurrModetheta{k}\,,
    \mspace{90mu}
    \AlgChiBar{k} \,\longmapsto\, \AntiHolCurrModexi{k}\,,
    \mspace{20mu}
    \text{ and }
    \mspace{30mu}
    \AlgKBar \,\longmapsto\, \id_\Fields
\end{align*}
is a Lie superalgebra homomorphism.
\end{coro}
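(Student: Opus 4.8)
The plan is to verify directly that the stated assignment—call it $\rho \colon \FullSymFer \to \End(\Fields)$—satisfies the three requirements of a Lie superalgebra homomorphism: it is linear, it respects the $\Z/2$-grading, and it intertwines the superbrackets. Linearity holds by construction, since $\rho$ is defined as the linear extension of its values on generators. The remaining two conditions are where the content lies, and for both of them the essential computation has already been performed in Proposition~\ref{prop: anticommutation relations}; the corollary is in effect a repackaging of that result.

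First I would fix the parity conventions on both sides. In $\FullSymFer$ the generators $\AlgEta{k}$, $\AlgChi{k}$, $\AlgEtaBar{k}$, and $\AlgChiBar{k}$ are odd, while the central elements $\AlgK$ and $\AlgKBar$ are even. On the target side, the fermion-number grading on the Grassmann algebra $\FieldPoly$ descends to a $\Z/2$-grading on $\Fields$, and hence on $\End(\Fields)$; with respect to this grading each current mode $\HolCurrModetheta{k}$, $\HolCurrModexi{k}$, $\AntiHolCurrModetheta{k}$, $\AntiHolCurrModexi{k}$ is an \emph{odd} operator—inserting a single current changes the fermion parity of a field polynomial—whereas $\id_\Fields$ is even. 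Thus $\rho$ sends odd generators to odd operators and even generators to even operators, so the grading is respected.

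Next I would check preservation of the superbracket on pairs of generators, splitting into two cases. For two odd generators the superbracket is by definition the anticommutator, so the required identities are precisely the relations
\begin{align*}
    \{\HolCurrModetheta{k}, \HolCurrModexi{\ell}\}
    \,=\,
    k \, \Kronecker_{k+\ell} \, \id_{\Fields}
    \,=\,
    \{\AntiHolCurrModetheta{k}, \AntiHolCurrModexi{\ell}\}
\end{align*}
together with the vanishing of all the remaining anticommutators among current modes, which is exactly the content of Proposition~\ref{prop: anticommutation relations}; these match the defining superbrackets of $\FullSymFer$ term by term, once one observes that $k\,\Kronecker_{k+\ell}\,\id_\Fields$ is the image under $\rho$ of $k\,\delta_{k+\ell}\,\AlgK$ (and similarly for the barred sector). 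For any bracket involving $\AlgK$ or $\AlgKBar$, the left-hand side is $\rho(0)=0$ because these elements are central in $\FullSymFer$, while the right-hand side vanishes because $\id_\Fields$ is central in $\End(\Fields)$; such brackets are therefore trivially preserved.

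Since every defining bracket of $\FullSymFer$ is one of the two types just treated, $\rho$ intertwines all brackets, and together with the parity check this establishes that $\rho$ is a Lie superalgebra homomorphism. There is no substantial obstacle here: the analytic and combinatorial work is entirely absorbed into Proposition~\ref{prop: anticommutation relations}. The only point demanding care is the super-grading bookkeeping—making sure the bracket in the target is read as an anticommutator exactly when both arguments are odd and as a commutator otherwise—but this is conceptual rather than computational.
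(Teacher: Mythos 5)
Your proposal is correct and follows the same route as the paper, which states this corollary without proof as an immediate consequence of Proposition~\ref{prop: anticommutation relations}; your write-up simply makes explicit the routine verification (linearity, parity bookkeeping, and matching the anticommutators and central brackets) that the paper leaves to the reader. No gaps.
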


We now present some remarks and examples that will be useful to prove the isomorphism between $\Fields$ and $\FullFock$.

\begin{rmk}
The current modes of index $0$ satisfy
\begin{align*}
    \HolCurrModexi{0}
    \,=\,
    \AntiHolCurrModexi{0}
    \mspace{30mu}
    \text{ and }
    \mspace{30mu}
    \HolCurrModetheta{0}
    \,=\,
    \AntiHolCurrModetheta{0}
\end{align*}
as linear operators $\Fields\to\Fields$.
This can be checked by a straightforward application of the discrete Stokes' formula ---Section~\ref{subsubsec: discrete integration}---.
\hfill$\diamond$
\end{rmk}

\begin{rmk}
\label{rmk: action currents on ground states}
The ground states satisfy the relations
\begin{align*}
    \HolCurrModexi{0}\fDGFFGroundPartner
    \, = \,
    -\fDGFFGroundXi\,,
    \mspace{150mu} 
    \HolCurrModetheta{0}\fDGFFGroundPartner
    \, = \,&\,
    -\fDGFFGroundTheta\,,
    \phantom{\Big\vert}
    \\
    \HolCurrModexi{0}\fDGFFGroundTheta
    \, = \,
    \fDGFFGround\,,
    \mspace{65mu}
    \text{ and }
    \mspace{66mu}
    \HolCurrModetheta{0}\fDGFFGroundXi
    \, = \,&\,
    -\fDGFFGround\,.
    \phantom{\Big\vert}
\end{align*}
Let us explicitly write down the computation for the first of these relations.
We have, for a sufficiently large corner contour~$\gamma$ encircling the origin,
\begin{align*}
    \HolCurrModexi{0}\fDGFFGroundPartner
    \, = \, & \,
    \dsqint_{\gamma}
    \frac{\dd\zu}{2\pi \ii}\,
    \zu^{[0]}_\diamond\,
    \holcurrxi(\zu_\medial) \cdot 
    \Big(-\Fieldxi(0)\Fieldtheta(0)\Big)
    + \Null
    \\
    \, = \, & \,
    -\frac{1}{2\pi}
    \sum_{\zu_\diamond\in\interiordiam\gamma}
    \ddee\holcurrxi(\zu_\diamond)\Fieldxi(0)\Fieldtheta(0)
    + \Null
\end{align*}
\begin{align*}
    \, = \, & \,
    -\frac{1}{4\pi}
    \sum_{\zu_\primary\in\interiorprim\gamma}
    \dlaplacian\Fieldxi(\zu_\primary)\Fieldxi(0)\Fieldtheta(0)
    + \Null
    \\
    \, = \, & \,
    -\frac{1}{4\pi}
    \sum_{\zu_\primary\in\interiorprim\gamma}
    4\pi\delta_{0,\zu_\primary}\Fieldxi(0)
    + \Null
    \\
    \, = \, & \,
    -\Fieldxi(0)
    + \Null\phantom{\Big\vert}
    \\
    \, = \, & \,
    -\fDGFFGroundXi\,,\phantom{\Big\vert}
\end{align*}
where we have used the discrete Stokes' formula ---Section~\ref{subsubsec: discrete integration}---, the factorisation of the discrete Laplacian operator ---Remark~\ref{rmk: factorisation laplacian}---, and the null field from Example~\ref{ex: cubic null field}.
\hfill$\diamond$
\end{rmk}

\begin{ex}
With a simple application of the discrete Stokes' formula and the discrete monomials value $0^{[k]}=0$ for $k>0$ ---Section~\ref{subsubsec: discrete monomials}---, one can check that all four ground states ($\fDGFFGround$, $\fDGFFGroundXi$, $\fDGFFGroundTheta$, and $\fDGFFGroundPartner$) are annihilated by the current modes with positive index.
That is, we have
\begin{align*}
    \HolCurrModexi{k}\fDGFFGroundPartner
    \,=\,
    \AntiHolCurrModexi{k}\fDGFFGroundPartner
    \, = \,
    \HolCurrModetheta{k}\fDGFFGroundPartner
    \,=\,
    \AntiHolCurrModetheta{k}\fDGFFGroundPartner
    \, = \, 0\,,
\end{align*}
and similarly for $\fDGFFGround$, $\fDGFFGroundXi$, and $\fDGFFGroundTheta$\,.
\hfill $\diamond$
\end{ex}

\vspace{-10pt}
\subsubsection{The Sugawara construction}
\label{subsubsec: discrete Sugawara}
In Section~\ref{subsec: log fock space}, we discussed how the logarithmic Fock space of the sympletic fermions becomes a representation of the Virasoro algebra via the so-called Sugawara construction.
In the case of the space~$\Fields$, the following property enables this construction to work.

\begin{lemma}
For any local field~$P+\Null$, there exists $N\in\Zpos$ such that for all $n\geq N$ we have
\begin{align*}
    \HolCurrModexi{n} \big(\, P + \Null \,\big)
    \, = \,
    \HolCurrModetheta{n} \big(\, P + \Null \,\big)
    \, = \,
    \AntiHolCurrModexi{n} \big(\, P + \Null \,\big)
    \, = \,
    \AntiHolCurrModetheta{n} \big(\, P + \Null \,\big)
    \, = \,
    0\,.
\end{align*}
\end{lemma}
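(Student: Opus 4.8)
The plan is to show that each current mode of sufficiently high positive index annihilates a given local field by analysing the defining contour-integral formula and exploiting the finite support of any field-polynomial representative. Recall from Proposition~\ref{prop: current modes} that, for a representative $P$ of the class $P+\Null$,
\begin{align*}
    \HolCurrModexi{n} \big(\, P + \Null \,\big)
    \,=\,
    \dsqint_{\gamma}
    \frac{\dd\zu}{2\pi \ii}\,
    \zu^{[n]}_\diamond\,
    \holcurrxi(\zu_\medial) \cdot P
    + \Null\,,
\end{align*}
and similarly for the other three modes. The key structural input is that $P$ has finite support $\supp P \subset \Z^2$, so one may fix once and for all a single large corner contour $\gamma$ enclosing $\supp P$, and then seek $N$ depending only on this contour.

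First I would invoke property~(3) of the discrete monomials in Proposition~\ref{prop: monomials}: for each fixed $\zu$ in the finite set $\interiordiam\gamma$, there exists $N_{\zu}\in\N$ with $\zu^{[n]}_\diamond = 0$ for all $n \geq N_{\zu}$. Since $\interiordiam\gamma$ is a finite set, the maximum $N \coloneqq \max_{\zu\in\interiordiam\gamma} N_{\zu}$ is finite, and for every $n \geq N$ the monomial $\zu^{[n]}_\diamond$ vanishes identically on $\interiordiam\gamma$. However, the current mode is defined as a contour integral over $\gamma$, not as a sum over its interior, so one cannot immediately conclude vanishing from this alone. The natural remedy is to first rewrite the contour integral as a sum over the interior using the discrete Stokes' formula of Section~\ref{subsubsec: discrete integration}. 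Applying that formula to $f(\zu_\diamond) = \zu^{[n]}_\diamond$ and $g(\zu_\medial) = \holcurrxi(\zu_\medial)\cdot P$, and using that $\holcurrxi = \pddee\Fieldxi$ together with the factorisation $\ddeebar\pddee = \tfrac12\dlaplacian$ on the primary lattice (Remark~\ref{rmk: factorisation laplacian}), converts the integral into a finite sum whose summands each carry a factor $\zu^{[n]}_\diamond$; for $n\geq N$ every such factor vanishes on the interior, so the whole expression is the zero element of $\Fields$.

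The main obstacle I anticipate is handling the two distinct kinds of interior contributions that the Stokes' formula produces --- the diamond-vertex sum weighted by $\ddeebar g$ and the medial-vertex sum weighted by $\ddeebar f = \ddeebar \zu^{[n]}_\diamond$ --- and verifying that the finitely-many monomial values that survive after differentiation are genuinely controlled by a single threshold $N$. Concretely, the term $\ddeebar \zu^{[n]}$ relates to $\zu^{[n-1]}$-type quantities via property~(2) of Proposition~\ref{prop: monomials} (which governs $\ddee$, with the analogous vanishing behaviour for $\ddeebar$ on nonnegative powers by property~(\ref{property: poles})), so one must track the monomial indices carefully and possibly enlarge $N$ by a bounded amount to absorb the index shift. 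The remaining subtlety is representative-independence: replacing $P$ by $P + Q$ with $Q$ null changes the integrand by $\holcurrxi(\zu_\medial)\cdot Q$, and since $\supp Q$ is also finite one simply takes $\gamma$ large enough to enclose both supports, after which the same monomial-vanishing argument applies; the well-definedness already established in Proposition~\ref{prop: current modes} guarantees the answer in $\Fields$ is unaffected. The arguments for $\HolCurrModetheta{n}$, $\AntiHolCurrModexi{n}$, and $\AntiHolCurrModetheta{n}$ are verbatim the same with $\holcurrtheta$, $\antiholcurrxi$, $\antiholcurrtheta$ and the conjugate monomials $\overline{\zu}^{[n]}_\diamond$ in place of their holomorphic counterparts, so it suffices to treat one mode and take $N$ to be the maximum of the four resulting thresholds.
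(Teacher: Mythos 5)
Your proof is correct and rests on the same key fact as the paper's, namely property~(3) of Proposition~\ref{prop: monomials}: the neighbourhood of the origin on which $\zu^{[n]}$ vanishes grows with $n$, so a single threshold $N$ works for any fixed contour. The only difference is that your detour through the discrete Stokes' formula is not needed --- the paper simply picks $N$ large enough that $\zu^{[N]}$ vanishes at the diamond vertices on and around $\gamma$ itself, making the integrand of \eqref{eq: chi mode}--\eqref{eq: eta bar mode} identically zero term by term --- but your version closes correctly as well, since the interior diamond sum is killed by the vanishing monomial and the medial sum by $\ddeebar\,\zu^{[n]}=0$ for $n\geq 0$ (property~(\ref{property: poles})), with no index shift to absorb.
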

\begin{proof}
Fix a corner contour~$\gamma$ for the action ---\eqref{eq: chi mode}--\eqref{eq: eta bar mode}--- of any current mode on $P+\Null$.
Since the discrete neighbourhood of the origin where the discrete monomials $\zu\mapsto\zu^{[k]}$ vanishes grows with $k$, we just need to pick $N$ big enough such that $\zu^{[N]}$ vanishes (in the interior of and) around the contour~$\gamma$.
Then, by formulae~\eqref{eq: chi mode}--\eqref{eq: eta bar mode}, it is clear that the action of $\HolCurrModexi{n}$, $\HolCurrModetheta{n}$, $\AntiHolCurrModexi{n}$, and $\AntiHolCurrModetheta{n}$ on $P+\Null$ vanishes for $n\geq N$.
\end{proof}

Formally, the arguments in the proof of Lemma~\ref{lemma: sugawara} prove the following statement too.

\begin{lemma}
\label{lemma: discrete sugawara}
The linear operators~\mbox{$\Fields\rightarrow\Fields$} given by the formal sums
\begin{align*}
\dL{n}
\coloneqq
\sum_{k \,\geq\, n/2}
\HolCurrModexi{n-k}\HolCurrModetheta{k}
-
\sum_{k \,<\, n/2}
\HolCurrModetheta{k}\HolCurrModexi{n-k}
\mspace{30mu}
\text{ and }
\mspace{30mu}
\dLBar{n}
\coloneqq
\sum_{k \,\geq\, n/2}
\AntiHolCurrModexi{n-k}\AntiHolCurrModetheta{k}
-
\sum_{k \,<\, n/2}
\AntiHolCurrModetheta{k}\AntiHolCurrModexi{n-k}\,,
\end{align*}
for $n\in\Z$, are well-defined and constitute two commuting representations of the Virasoro algebra with central charge $c=-2$.
\end{lemma}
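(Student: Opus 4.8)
The plan is to reduce the statement entirely to its Fock-space counterpart, Lemma~\ref{lemma: sugawara}, by observing that $\dL{n}$ and $\dLBar{n}$ are assembled from the current modes in exactly the same normal-ordered fashion in which $\SugL{n}$ and $\SugLBar{n}$ are assembled from the abstract symplectic-fermion modes. The only ingredients entering the proof of Lemma~\ref{lemma: sugawara} are the superbracket relations among the modes and a local-finiteness property, and both transfer verbatim to $\Fields$: the relations by Proposition~\ref{prop: anticommutation relations}, and the finiteness by the preceding lemma.

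First I would establish well-definedness. Fix a local field $P+\Null$. In the first sum defining $\dL{n}$, namely $\sum_{k\geq n/2}\HolCurrModexi{n-k}\HolCurrModetheta{k}$, the rightmost mode $\HolCurrModetheta{k}$ carries arbitrarily large positive index as $k\to+\infty$; in the second sum $\sum_{k<n/2}\HolCurrModetheta{k}\HolCurrModexi{n-k}$, the rightmost mode $\HolCurrModexi{n-k}$ carries arbitrarily large positive index as $k\to-\infty$. By the lemma just proved, current modes of sufficiently large positive index annihilate $P+\Null$, so in each sum only finitely many terms act nontrivially on $P+\Null$. Hence $\dL{n}(P+\Null)$ is a well-defined element of $\Fields$, and likewise for $\dLBar{n}$; this is precisely the normal-ordering argument that renders $\SugL{n}$ well-defined on $\FullFock$.

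Second, for the Virasoro relations I would invoke the computation in the proof of Lemma~\ref{lemma: sugawara}. That computation derives $[\SugL{n},\SugL{m}]=(n-m)\SugL{n+m}+\frac{c}{12}(n^3-n)\delta_{n+m}\id_{\FullFock}$ purely formally, using only the anticommutators $\{\AlgEta{k},\AlgChi{\ell}\}=k\,\delta_{k+\ell}\,\AlgK$ (with $\AlgK$ acting as the identity), the vanishing of the remaining same-sector anticommutators, and the local-finiteness that legitimises manipulating the normal-ordered sums term by term. By Proposition~\ref{prop: anticommutation relations} the holomorphic current modes $\HolCurrModexi{k},\HolCurrModetheta{k}$ satisfy identical relations with $\AlgK$ replaced by $\id_\Fields$, and the finiteness was verified above; therefore the very same bracket-by-bracket computation produces $[\dL{n},\dL{m}]=(n-m)\dL{n+m}+\frac{c}{12}(n^3-n)\delta_{n+m}\id_\Fields$ with the same central charge $c=-2$, and identically for the barred modes.

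Finally, the two copies commute because the cross-sector anticommutators $\{\HolCurrModexi{k},\AntiHolCurrModexi{\ell}\}$, $\{\HolCurrModetheta{k},\AntiHolCurrModetheta{\ell}\}$, $\{\HolCurrModexi{k},\AntiHolCurrModetheta{\ell}\}$ and $\{\HolCurrModetheta{k},\AntiHolCurrModexi{\ell}\}$ all vanish by Proposition~\ref{prop: anticommutation relations}. A short computation then shows that any even bilinear in the holomorphic modes commutes with any even bilinear in the antiholomorphic modes: transporting a barred mode leftwards past an unbarred bilinear produces two sign changes that cancel, so $\dL{n}$ and $\dLBar{m}$ commute termwise, giving $[\dL{n},\dLBar{m}]=0$. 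I expect the only delicate point to be the central term: as in Lemma~\ref{lemma: sugawara}, isolating the anomaly $\frac{c}{12}(n^3-n)$ requires careful bookkeeping of the infinitely many reorderings contributing to $[\dL{n},\dL{-n}]$, and one must ensure that local-finiteness is applied correctly so that each reordering is justified on a fixed vector. Once this is granted, the value $c=-2$ is forced by $\AlgK\mapsto\id_\Fields$ exactly as in the Fock-space case.
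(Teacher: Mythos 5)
Your proposal is correct and follows essentially the same route as the paper, which simply observes that the formal arguments proving Lemma~\ref{lemma: sugawara} carry over verbatim once the anticommutation relations of Proposition~\ref{prop: anticommutation relations} and the local-finiteness of the current modes on any fixed local field are in place. You spell out the reduction in more detail than the paper does (well-definedness via annihilation by large positive modes, the bracket computation transferring because $\AlgK\mapsto\id_\Fields$, and the commutation of the two sectors from the vanishing cross-anticommutators), but the underlying argument is identical.
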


\newpage
\vspace{-5pt}
\section{Results}
\label{sec:results}
We now proceed to state and discuss the main results of this paper---their proofs are postponed for Section~\ref{sec: proofs}.

In Section~\ref{subsec: isomorphism}, we present our first main result ---Theorem~\ref{thm: isomorphism}---, which concerns the local fields of the fDGFF.
In particular, the space~$\Fields$ of fDGFF local fields is in one-to-one correspondence with the logarithmic Fock space of the symplectic fermions.
This correspondence becomes meaningful only thanks to our second main result ---Theorem~\ref{thm: scaling limit}--- regarding the scaling limit of correlation functions, which is presented in Section~\ref{subsec: scaling limit}.
Informally, the scaling limit of the ---appropriately renormalised---  correlation functions of fDGFF local fields are the correlation functions of the corresponding ---via the correspondence in Theorem~\ref{thm: isomorphism}--- symplectic fermions local fields.
We end the section by discussing some applications of these results to two models of statistical mechanics---the uniform spanning tree and the Abelian sandpile model.
Certain local observables in those models can be computed with the correlation functions of the fermionic DGFF---we discuss how to interpret the scaling limit of such observables as local fields in the symplectic fermions CFT in Section~\ref{subsec: apps to stat mech}.

\vspace{-10pt}
\subsection{Local fields of the fDGFF}
\label{subsec: isomorphism}
The one-to-one correspondence between the spaces of local observables in the discrete and the continuum is phrased algebraically---we prove those two spaces to be isomorphic as representations of the symplectic fermions symmetry algebra.
This approach is not only a tool but a feature that provides information about the discrete local fields at hand.
In particular, the appropriate way of renormalising the correlation functions in the discrete so as to recover non-trivial scaling limits is dictated by the algebraic structure of the space of fields.
This will \linebreak[1] become clear in Theorem~\ref{thm: scaling limit} below.

\begin{thm}\label{thm: isomorphism}
As representations of the symplectic fermions algebra, we have the isomorphism
\begin{align*}
    \Fields
    \,\cong\,
    \Fock\,.
\end{align*}
\end{thm}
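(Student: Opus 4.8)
The plan is to construct an explicit intertwiner $\Phi\colon\Fock\to\Fields$ and prove it is a bijection. The starting point is that, by its very definition, $\Fock=\FullEnvSymFer/\mathfrak{I}$ is a cyclic module over the associative algebra $\FullEnvSymFer$, generated by $\GroundPartner=[1]$, where $\mathfrak{I}$ is the left ideal generated by the positive-index modes together with $\AlgEta{0}-\AlgEtaBar{0}$ and $\AlgChi{0}-\AlgChiBar{0}$. On the discrete side, Propositions~\ref{prop: current modes} and~\ref{prop: anticommutation relations} make $\Fields$ a module over the same associative algebra, with the central elements acting as $\id_{\Fields}$. Crucially, the local field $\fDGFFGroundPartner\in\Fields$ satisfies exactly the relations cutting out $\mathfrak{I}$: the positive-index current modes annihilate it, and the identifications $\HolCurrModexi{0}=\AntiHolCurrModexi{0}$, $\HolCurrModetheta{0}=\AntiHolCurrModetheta{0}$ (established by a discrete Stokes' computation) give the two zero-mode relations. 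Hence the universal property of the cyclic quotient furnishes a unique module homomorphism $\Phi$ with $\Phi(\GroundPartner)=\fDGFFGroundPartner$; it sends the Poincar\'e--Birkhoff--Witt basis vector~\eqref{eq: fock basis} to the corresponding ordered product of current modes applied to $\fDGFFGroundPartner$. Being a module map, $\Phi$ automatically intertwines the $\FullSymFer$-actions, so it remains only to prove bijectivity.

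For surjectivity, I would show that $\fDGFFGroundPartner$ generates $\Fields$ under the current-mode action. The image of $\Phi$ is a submodule of $\Fields$; by Remark~\ref{rmk: action currents on ground states} and Example~\ref{ex: rep of ground states} it already contains all four ground states $\fDGFFGround$, $\fDGFFGroundXi$, $\fDGFFGroundTheta$, $\fDGFFGroundPartner$. Since $\Fields=\FieldPoly/\NullFieldPoly$ is a quotient of a Grassmann algebra, it suffices to place the generating classes $\Fieldxi(\zu)+\Null$ and $\Fieldtheta(\zu)+\Null$, and then their products, inside this submodule. Modulo $\NullFieldPoly$ the discrete Laplacian of each field vanishes (Example~\ref{ex: Laplacian is null}), so $\holcurrxi=\pddee\Fieldxi$ and $\antiholcurrxi=\pddeebar\Fieldxi$ become discrete holomorphic and antiholomorphic; expanding them in the discrete monomials $\zu\mapsto\zu^{[k]}$ of Proposition~\ref{prop: monomials} shows that their classes are recorded by the modes $\HolCurrModexi{-k}$ and $\AntiHolCurrModexi{-k}$ acting on the ground states. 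Reconstructing a single field from its discrete derivatives up to the additive constants $\fDGFFGroundXi$, $\fDGFFGroundTheta$ places the degree-one classes in the image, and a spanning argument over the Grassmann degree --- using that every class is equivalent to a product of discrete-harmonic fields whose holomorphic and antiholomorphic parts are captured by the current modes --- extends this to all of $\Fields$.

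The main obstacle is injectivity: one must rule out that a nontrivial finite combination of distinct images $\Phi\big(\text{basis vector~\eqref{eq: fock basis}}\big)$ is a null field. The strategy is to detect the basis coefficients by pairing against fDGFF correlation functions. Given such a combination, I would test it by inserting distant fields $\fDGFFxi(\zd_i)$, $\fDGFFtheta(\zdbis_j)$ and integrating those insertion points against suitable discrete monomials, so that the fermionic Wick formula (Remark~\ref{rmk: 2n pt function fDGFF}) together with the discrete residue formula of Proposition~\ref{prop: monomials} extracts the coefficient of a chosen basis vector. Because the current modes are contour integrals against $\zu^{[k]}$ and the residue formula is diagonal in the mode indices, ordering the basis by its mode content makes this family of probes triangular with nonvanishing diagonal; the induced pairing is therefore nondegenerate on each finite piece, forcing the combination to be trivial and hence $\Phi$ injective. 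The delicate points are the bookkeeping of fermionic signs in Wick's formula and checking that the residue extraction isolates a single basis vector without contamination from lower terms; this is where I expect the real work to lie.

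Combining the three steps, $\Phi$ is a bijective homomorphism of representations of $\FullSymFer$, which is precisely the claimed isomorphism $\Fields\cong\Fock$. As a byproduct, compatibility with the Sugawara operators of Lemmata~\ref{lemma: sugawara} and~\ref{lemma: discrete sugawara} is automatic, since those are built from the very current modes that $\Phi$ intertwines; this upgrades the statement to an isomorphism of $\Vir\oplus\barVir$-representations and thereby yields the informal Theorem~1.
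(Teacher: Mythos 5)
Your skeleton---obtain a homomorphism $\FullFock\to\Fields$ from the universal property of the cyclic quotient by checking that $\fDGFFGroundPartner$ satisfies the defining relations, then prove surjectivity and injectivity---is the paper's skeleton, and the existence/uniqueness step coincides with Lemma~\ref{lemma: univ prop Fock}. Where you diverge is in how the two remaining halves are handled, and in your estimate of where the work lies. For injectivity you propose pairing a putative null combination against correlation functions with distant insertions and extracting coefficients via the discrete residue formula, and you flag this as the hard part. The paper sidesteps all of it with a purely module-theoretic observation: the kernel of a $\FullSymFer$-homomorphism out of $\FullFock$ is a submodule, and by Remark~\ref{rmk: subrepresentations} every proper submodule of $\FullFock$ contains $\Ground$; injectivity therefore reduces to the single check $\fDGFFGround\neq 0$ (Example~\ref{ex: 1 is not 0}). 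Your route is in principle viable but would force you to develop a Wick-type expansion for correlation functions of current-mode monomials and to disentangle the logarithmic pair $\fDGFFGround$, $\fDGFFGroundPartner$, whose one-point functions differ by a domain-dependent constant---none of which is needed.

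Conversely, surjectivity is where the paper's real technical content sits, and your sketch passes over it. The submodule $(\LattSFAsso)\fDGFFGroundPartner$ is spanned by \emph{current-mode monomials} applied to the ground states, whereas a general element of $\Fields$ is the class of a plain Grassmann product of linear field polynomials; the two differ by Wick-contraction corrections in every degree. The paper bridges this with the normally-ordered product: Remark~\ref{rmk: norm ord surjective} (normal ordering is unitriangular, hence surjective on $\FieldPoly$), Remark~\ref{rmk: norm ord factors through null fields}, and above all Lemma~\ref{lemma: normal order of basis}, which identifies $\noQuo{\,\cdots\,}$ of the basis linear fields with the corresponding current-mode monomial; that lemma in turn rests on the vanishing of certain contour integrals of Wick contractions (Lemma~\ref{lemma: technical}). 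Your ``spanning argument over the Grassmann degree'' is the right shape of induction, but the inductive step---that the degree-$d$ product lies in the submodule modulo lower-degree terms---is exactly the content of those lemmas and is currently asserted rather than argued. That is the one genuine gap in the proposal; the rest is a correct, if in places heavier, variant of the paper's proof, and your closing remark about compatibility with the Sugawara operators is accurate.
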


\noindent
The proof is presented in Sections~\ref{subsec: lin field poly}, \ref{subsec: higher fields} and \ref{subsec: proof of isomorphism}.

One immediate consequence of this theorem is that,
in a similar fashion as the logarithmic Fock space ---Section~\ref{subsec: log fock space}---,
the space $\Fields$ admits the basis
\begin{align}\label{eq: basis}
\HolCurrModexi{-k_r}\cdots\HolCurrModexi{-k_1}
\HolCurrModetheta{-\ell_s}\cdots\HolCurrModetheta{-\ell_1}
\AntiHolCurrModexi{-\bar{k}_{\bar{r}}}\cdots\AntiHolCurrModexi{-\bar{k}_1}
\AntiHolCurrModetheta{-\bar{\ell}_{\bar{s}}}\cdots\AntiHolCurrModetheta{-\bar{\ell}_1}
\fDGFFGroundPartner
\end{align}
\vspace{-25pt}
\begin{align}
\nonumber
\text{ with } \qquad
& \phantom{\Big\vert}r,s,\bar{r},\bar{s}\in \Znneg \\
\nonumber
\text{ and the orderings } \qquad
& \phantom{\Big\vert}{0 \leq k_1 < k_2 < \cdots < k_r}\,,\\
\nonumber
& \phantom{\Big\vert}{0 \leq \ell_1 < \ell_2 < \cdots < \ell_s}\,,\\
\nonumber
& \phantom{\Big\vert}{0 < \bar{k}_1 < \bar{k}_2 < \cdots < \bar{k}_{\bar{r}}}
\,, \text{ and }\\
\nonumber
& \phantom{\Big\vert}{0 < \bar{\ell}_1 < \bar{\ell}_2 < \cdots < \bar{\ell}_{\bar{s}}}\,.
\end{align}

Before moving on, we comment on the Virasoro structure of $\Fields$, which plays a key role in the scaling limit of correlation functions.

The operators~$\dL{0}$ and $\dLBar{0}$ ---defined in Lemma~\ref{lemma: sugawara}--- provide a natural \mbox{$\Znneg$-bigrading} of the space~$\Fields$ via their generalised eigenspaces.
The generalised eigenvalues of $\dL{0}$ and $\dLBar{0}$ are called, respectively, \term{holomorphic} and \term{antiholomorphic generalised conformal dimensions}.
The bigrading can be expressed simply in terms of the basis~\eqref{eq: basis}:
a straightforward computation leads to
\begin{align*}
\Big[\,
\dL{0} -
\Delta_{\bk,\bl}\,
\Big]^2
\big(
\HolCurrModexi{-k_r}\cdots\HolCurrModexi{-k_1}
\HolCurrModetheta{-\ell_s}\cdots\HolCurrModetheta{-\ell_1}
\AntiHolCurrModexi{-\bar{k}_{\bar{r}}}\cdots\AntiHolCurrModexi{-\bar{k}_1}
\AntiHolCurrModetheta{-\bar{\ell}_{\bar{s}}}\cdots\AntiHolCurrModetheta{-\bar{\ell}_1}
\fDGFFGroundPartner
\big)
\,=\, 0\,,
\end{align*}
and
\begin{align*}
\Big[\,
\dLBar{0} -
\bar{\Delta}_{\bar{\bk},\bar{\bl}}\,
\Big]^2
\big(
\HolCurrModexi{-k_r}\cdots\HolCurrModexi{-k_1}
\HolCurrModetheta{-\ell_s}\cdots\HolCurrModetheta{-\ell_1}
\AntiHolCurrModexi{-\bar{k}_{\bar{r}}}\cdots\AntiHolCurrModexi{-\bar{k}_1}
\AntiHolCurrModetheta{-\bar{\ell}_{\bar{s}}}\cdots\AntiHolCurrModetheta{-\bar{\ell}_1}
\fDGFFGroundPartner
\big)
\,=\, 0\,,
\end{align*}
where $\Delta_{\bf{k},\bf{l}}\coloneqq\sum_{i=1}^rk_i+\sum_{j=1}^{s}\ell_j$
and
$\bar{\Delta}_{\bar{\bf{k}},\bar{\bf{l}}}
\coloneqq
\sum_{i=1}^{\bar{r}}\bar{k}_i+\sum_{j=1}^{\bar{s}}\bar{\ell}_j$ are the generalised conformal dimensions of the basis state.
In particular, this implies that the Virasoro modes~$\dL{0}$ and $\dLBar{0}$ have Jordan blocks of rank at most~$2$.
The generalised $(\dL{0}+\dLBar{0})$-eigenvalue of a field is referred to as its \term{generalised scaling dimension}.

\vspace{-10pt}
\subsection{Scaling limit of correlation functions}
\label{subsec: scaling limit}
For the rest of this section, we let $\domain\subset\C$ be a (open, simply-connected, proper, non-empty) domain of the complex plane,
and we take distinct points $z_1,\ldots,z_n,w_1,\ldots,w_m\in\domain$ thereon.
We also let $\ddomain{\meshsize}\subset\meshsize\Z^2$ be a family of discrete domains such that $\ddomain{\meshsize}\to\domain$ as $\meshsize\downarrow0$ in the Carath\'eodory sense\footnote{More precisely, what converges in the Carath\'eodory sense are the Jordan domains bounded by $\bdry\ddomain{\meshsize}$.}.
Finally, we let  $\zd_1^\meshsize,\ldots,\zd_n^\meshsize,\zdbis_1^\meshsize,\ldots,\zdbis_m^\meshsize\in\ddomain{\meshsize}$ be vertices in $\ddomain{\meshsize}$ that are closest to points $z_1,\ldots,z_n,w_1,\ldots,w_m\in\domain$ in the complex plane.

We also define $\mathsf{C} \coloneqq \gamma + \frac{3}{2} \log 2$, where $\gamma$ is the Euler--Mascheroni constant.

\pagebreak[4]

\begin{thm}\label{thm: scaling limit}
Fix a positive constant $\lambda\in\R_{>0}$.
For $i=1,\ldots,n$, let $\field_i\in\Fields$ be a local field with generalised scaling dimension $\Delta_i+\bar\Delta_i$.
Let $P_i^\meshsize\in\FieldPoly$ be a field-polynomial representative of the local field
\begin{align*}
    \field_i
    \,-\, \log \big( \lambda \, \meshsize \big)
    \Big[ (\dL{0}+\dLBar{0})-(\Delta_i+\bar\Delta_i)\Big]
    \field_i\,.
\end{align*}
We have
\begin{align*}
    \frac{\BigfDGFFCorrFun{\ddomain{\meshsize}}{P_1^\meshsize\big(\zd_1^\meshsize\big)
    \cdots P_n^\meshsize\big(\zd_n^\meshsize\big)}}
    {\mathlarger{\mathlarger{\meshsize}}^{\,\sum_i(\Delta_i+\bar\Delta_i)}}
    \mspace{20mu}
    \xrightarrow{\mspace{50mu}\meshsize\,\downarrow\,0\mspace{50mu}}
    \mspace{20mu}
    \BigSFCorrFun{\domain}{\thectt(\lambda)}{\field_1(z_1) \cdots \field_n(z_n)}
\end{align*}
uniformly for $(z_1,\ldots,z_n)$  on compacts of $\Conf{n}{\domain}$, where we used $P_i^\meshsize\big(\zd_i^\meshsize\big)\coloneqq\ev_{\zd_i^\meshsize}^{\ddomain{\meshsize}}\big(P_i^\meshsize\big)$ and $\thectt(\lambda)\coloneqq 2 \big( \log \lambda + \mathsf{C}\, \big)$.
\end{thm}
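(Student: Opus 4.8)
The plan is to use the isomorphism of Theorem~\ref{thm: isomorphism} to reduce everything to the basis~\eqref{eq: basis} and then to compute both sides through contour integrals and Wick's formula, the final comparison resting on the convergence of discrete Green's functions. By multilinearity of the correlation functions on both sides I may assume each $\field_i$ is a Poincar\'e--Birkhoff--Witt monomial in the negative-index current modes applied to $\fDGFFGroundPartner$ as in~\eqref{eq: basis}, which under Theorem~\ref{thm: isomorphism} corresponds to the analogous monomial in $\Fock$. On the continuum side, Remark~\ref{rmk: integral formula} expresses $\BigSFCorrFun{\domain}{\thectt(\lambda)}{\field_1(z_1)\cdots\field_n(z_n)}$ as nested contour integrals of the currents around the $z_i$, which Remark~\ref{rmk: general wick formula} then reduces to (derivatives of) Green's functions and log-partner one-point values. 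On the discrete side, Proposition~\ref{prop: current modes} lets me write $\BigfDGFFCorrFun{\ddomain{\meshsize}}{P_1^\meshsize(\zd_1^\meshsize)\cdots P_n^\meshsize(\zd_n^\meshsize)}$ as the corresponding nested \emph{discrete} corner-contour integrals of the discrete currents $\holcurrxi,\holcurrtheta,\antiholcurrxi,\antiholcurrtheta$ against the discrete monomials $\zu^{[n]}$ of Proposition~\ref{prop: monomials}. The identity to prove thus becomes a term-by-term comparison of discrete and continuum contour integrals.

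The analytic input is the convergence of the discrete Dirichlet Green's function: off the diagonal, $\Green_{\ddomain{\meshsize}}(\zd,\zdbis)\to\Green_\domain(z,w)$ together with convergence of all discrete Wirtinger derivatives to their continuum counterparts, uniformly on compacts of $\Conf{2}{\domain}$ -- the standard discrete-complex-analysis / discrete-potential-theory statement. Applying a current mode amounts, after the discrete Stokes' formula and Remark~\ref{rmk: 2n pt function fDGFF}, to contour-integrating a discrete derivative of $\Green_{\ddomain{\meshsize}}$ against $\zu^{[n]}$; because the corner contour sits at lattice scale around $\zd_i^\meshsize$ (hence at continuum scale $\meshsize$ around $z_i$) and because $\zu^{[n]}=\zu^n+o(|\zu|^n)$, a power count shows that a mode of index $-k$ contributes exactly one factor $\meshsize^{k}$ and realises, in the limit, the continuum mode acting as a $k$-th order derivative. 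Multiplying over all modes of all fields reproduces precisely the normalising $\meshsize^{\sum_i(\Delta_i+\bar\Delta_i)}$ in the denominator, after which the surviving discrete contour integrals converge to the continuum ones of Remark~\ref{rmk: integral formula}.

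The genuinely delicate point, which I expect to be the main obstacle, is the diagonal contributions: the Wick pairings in which two contracted fields coincide at a common $z_i$. In the bijection bookkeeping of Remark~\ref{rmk: general wick formula} (continuum) and Remark~\ref{rmk: discrete Wick with log corrections} (discrete) these are the fixed points, and they produce the coincident two-point function $\BigfDGFFCorrFun{\ddomain{\meshsize}}{\fDGFFxi(\zd)\fDGFFtheta(\zd)}=4\pi\Green_{\ddomain{\meshsize}}(\zd,\zd)$, whose diagonal asymptotics $\Green_{\ddomain{\meshsize}}(\zd,\zd)=-\tfrac{1}{2\pi}\log\meshsize+\tfrac{1}{2\pi}\mathsf{C}+\HarmOfGreen_\domain(z,z)+o(1)$ carry the lattice constant $\mathsf{C}=\gamma+\tfrac32\log2$. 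This logarithmic divergence is exactly what the renormalisation absorbs. The key algebraic observation is that, since the Jordan blocks of $\dL{0}+\dLBar{0}$ have rank at most $2$, the operator $\mathcal{N}_i:=(\dL{0}+\dLBar{0})-(\Delta_i+\bar\Delta_i)$ satisfies $\mathcal{N}_i^2=0$ on the generalised eigenspace of $\field_i$, so that $\field_i-\log(\lambda\meshsize)\,\mathcal{N}_i\field_i$ is the image of $\field_i$ under the finite self-isomorphism $1-\log(\lambda\meshsize)\,\mathcal{N}_i$ of $\Fock$ -- precisely the $\thectt$-family of isomorphisms alluded to around Theorem~\ref{thm: characterization of CF}. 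I would then check, one diagonal factor at a time, that subtracting $\log(\lambda\meshsize)\,\mathcal{N}_i\field_i$ replaces each coincident $4\pi\Green_{\ddomain{\meshsize}}(\zd_i^\meshsize,\zd_i^\meshsize)$ by its renormalised limit carrying $\thectt(\lambda)=2(\log\lambda+\mathsf{C})$. The prototype is $\field_i=\fDGFFGroundPartner$, where $\mathcal{N}\fDGFFGroundPartner=2\fDGFFGround$ and a direct computation gives $\BigfDGFFCorrFun{\ddomain{\meshsize}}{\big(\fDGFFGroundPartner-2\log(\lambda\meshsize)\fDGFFGround\big)(\zd^\meshsize)}\to-4\pi\HarmOfGreen_\domain(z,z)-\thectt(\lambda)$, matching Example~\ref{ex: 1pt fctn log partner}.

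Finally I would assemble the pieces: off-diagonal pairings converge to continuum Green's functions, diagonal pairings converge -- after renormalisation -- to the log-partner one-point values with parameter $\thectt(\lambda)$, and the $\meshsize$-power count matches the denominator, so that the limit coincides with the Wick/contour-integral expression for $\BigSFCorrFun{\domain}{\thectt(\lambda)}{\field_1(z_1)\cdots\field_n(z_n)}$ furnished by Remarks~\ref{rmk: integral formula} and~\ref{rmk: general wick formula}. Uniformity on compacts of $\Conf{n}{\domain}$ is inherited from the uniformity of the underlying Green's-function convergences, both off and on the diagonal. A cleaner alternative packaging, which I would keep in reserve, is to verify directly that the renormalised limit satisfies the four characterising properties of Theorem~\ref{thm: characterization of CF} -- convergence of discrete derivatives giving \textbf{(DER)}, the discrete Stokes' formula giving \textbf{(CUR)}, the fermionic Wick formula giving \textbf{(FER)}, and the diagonal asymptotics giving \textbf{(LOG)} with the constant $\thectt(\lambda)$ -- and then conclude by the uniqueness asserted there.
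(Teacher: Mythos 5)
Your proposal follows essentially the same route as the paper's proof: reduction to the basis~\eqref{eq: basis} by multilinearity, expression of both sides as (discrete and continuum) nested contour integrals of currents via Remark~\ref{rmk: integral formula} and Proposition~\ref{prop: current modes}, expansion by the generalised Wick formulae of Remarks~\ref{rmk: general wick formula} and~\ref{rmk: discrete Wick with log corrections} into off-diagonal Green's-function factors and diagonal factors, with the diagonal asymptotics $\Green_{\ddomain{\meshsize}}(\zd,\zd)=-\tfrac{1}{2\pi}\log\meshsize+\tfrac{\mathsf C}{2\pi}+\HarmOfGreen_\domain(z,z)+o(1)$ absorbed by the $\log(\lambda\meshsize)$ correction (the paper's Lemma~\ref{lemma: converg diagonal greens fnctn} and Corollary~\ref{coro: converg of partner}), and the power count from $\meshsize^{k}(\zu^\meshsize)^{[k]}\to\zeta^{k}$ matching the normalising denominator. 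The argument and its key ingredients coincide with the paper's, so the proposal is correct and not materially different.
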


\begin{rmk}
\label{rmk: arbitrary scale}
In each discrete approximation $\ddomain{\meshsize}$ of the continuum domain~$\domain$, the linear scale of the system is proportional to $\meshsize$; the proportionality constant $\lambda$ is arbitrary.
There is no good reason to pick as a linear scale the edge length $\meshsize$ instead of the diagonal length of the square-grid plaquettes $\sqrt{2}\,\meshsize$ or half the length of an edge $\meshsize/2$.
\hfill$\diamond$
\end{rmk}

\begin{rmk}
This arbitrariness is \emph{not} relevant in non-logarithmic theories---a change in the linear scale can be absorbed as a multiplicative factor in the local observables.
This is also the case for local fields in our theory with well-defined scaling dimensions.
However, our theory contains more exotic fields that feature logarithmic divergencies in the scaling limit.
These are the generalised eigenvectors of $\dL{0}+\dLBar{0}$ that are not eigenvectors.
\hfill$\diamond$
\end{rmk}

\vspace{-10pt}
In Section~\ref{subsec: corr fun symplectic fermions}, we gave an algebraic argument ---the existence of x-trivial Virasoro self-isomorphisms of the space of the space of fields--- to explain why symplectic-fermion correlation functions are defined only up to a choice of an arbitrary constant.
In this picture of the correlation functions of a logarithmic CFT ---as the scaling limit of discrete correlation functions---, this fact becomes less abstract.
This ambiguity arises purely from the choice of an arbitrary linear scale; that is the choice of the arbitrary constant $\lambda$ discussed in Remark~\ref{rmk: arbitrary scale}.
Let us see this in a specific example.

%\begin{rmk}
%As expected from the dictates of CFT, the local fields that have well-defined scaling dimensions ---i.e., local fields that are $(\dL{0}+\dLBar{0})$-eigenvectors--- require no logarithmic correction to find a non-trivial scaling limit.
%\hfill $\diamond$
%\end{rmk}

\begin{ex}
The ground state~$\fDGFFGroundPartner$ does not have a well-defined scaling dimension ---recall that we have $(\dL{0}+\dLBar{0})\fDGFFGroundPartner = 2\cdot\fDGFFGround$--- which implies that it requires a logarithmic correction to get a non-trivial scaling limit.
More precisely, we have the convergence
\begin{align*}
    \BigfDGFFCorrFun{\ddomain{\meshsize}}{
    \ev_{\zd^\meshsize}^{\ddomain{\meshsize}}
    \Big(  \fDGFFGroundPartner  - 2 \log( \lambda \, \meshsize ) \fDGFFGround
    \Big) 
    }
    \mspace{20mu}
    \xrightarrow{
    \mspace{50mu}\meshsize\,\downarrow\,0\mspace{50mu}
    }
    \mspace{20mu}
    \BigSFCorrFun{\domain}{\thectt(\lambda)}{\GroundPartner(z)}
    = \, - \, 4\pi\,\HarmOfGreen_{\domain}(z,z) - \thectt(\lambda)
\end{align*}
uniformly for $z$ on compacts of $\domain$.
The arbitrary CFT constant $\alpha$ on the right-hand side arises from the arbitrary linear scale $\lambda$ chosen in the discrete.
\hfill$\diamond$ 
\end{ex}

We will see another example of this logarithmic behaviour in Section~\ref{subsubsec: ASM} below when we discuss the dissipation field of the Abelian sandpile model.
For general fields, we have the following formula that is reminiscent of the self-isomorphisms of the space $\FullFock$ discussed in \cite{AdaCar-symplectic_fermions}.

\begin{ex}
\label{ex: good basis scaling fields}
For a field
\begin{align*}
    \field \coloneqq
    \HolCurrModexi{-k_r}\cdots\HolCurrModexi{-k_1}
    \HolCurrModetheta{-\ell_s}\cdots\HolCurrModetheta{-\ell_1}
    \AntiHolCurrModexi{-\bar{k}_{\bar{r}}}\cdots\AntiHolCurrModexi{-\bar{k}_1}
    \AntiHolCurrModetheta{-\bar{\ell}_{\bar{s}}}\cdots\AntiHolCurrModetheta{-\bar{\ell}_1}
    \fDGFFGroundPartner
\end{align*}
in the basis~$\eqref{eq: basis}$, the corresponding field that behaves well under rescaling ---in the sense of Theorem \ref{thm: scaling limit}--- is
\begin{align*}
    \field
    \,-\, \log \big( \lambda \, \meshsize \big)
    \Big[ (\dL{0}
    &
    + \dLBar{0})-(\Delta+\bar\Delta)\Big]
    \field
    \,=
    \\
    &
    \HolCurrModexi{-k_r}\cdots\HolCurrModexi{-k_1}
    \HolCurrModetheta{-\ell_s}\cdots\HolCurrModetheta{-\ell_1}
    \AntiHolCurrModexi{-\bar{k}_{\bar{r}}}\cdots\AntiHolCurrModexi{-\bar{k}_1}
    \AntiHolCurrModetheta{-\bar{\ell}_{\bar{s}}}\cdots\AntiHolCurrModetheta{-\bar{\ell}_1}
    \Big(
    \fDGFFGroundPartner
    - 2\,\log\big(\lambda\,\meshsize\big) \fDGFFGround
    \Big)
\end{align*}
where $\Delta+\bar\Delta = \sum_{a=1}^r k_a + \sum_{b=1}^s \ell_b
+ \sum_{\bar a=1}^{\bar r} \bar k_{\bar a} + \sum_{\bar b=1}^{\bar s} \bar\ell_{\bar b}$.
\hfill$\diamond$
\end{ex}

\begin{figure}[b!]
\centering
\begin{overpic}[scale=0.467, tics=10]{./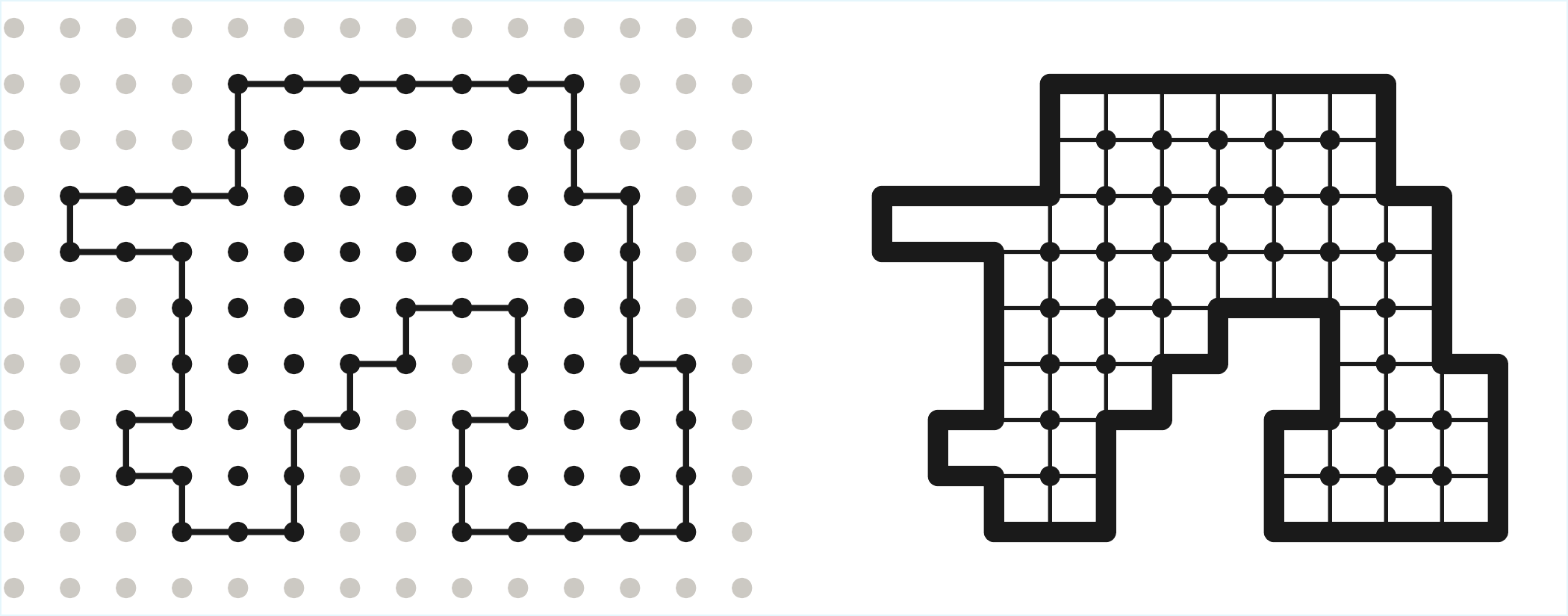}
    % \put(59,40){}
\end{overpic}
\centering
% \vspace{-35pt}
\caption{On the left, a discrete (Jordan) domain $\ddomain{\meshsize}$ and its boundary~$\bdry\ddomain{\meshsize}$ (black) \\ in the infinite square grid~$\Z^2$ (gray).
On the right, its nearest-neighbour graph with \\ wired boundary $(V_{\ddomain{\meshsize}}, E_{\ddomain{\meshsize}})$;
the thick line along the boundary vertices \\ indicates they all correspond to the same vertex in $V_{\ddomain{\meshsize}}$.}
\label{fig: wired graph}
\end{figure}

\vspace{-10pt}
\subsection{Applications to probabilistic models}
\label{subsec: apps to stat mech}
The results discussed above ---Theorems~\ref{thm: isomorphism} and \ref{thm: scaling limit}--- concern purely the local fields of the fDGFF.
Yet, the fDGFF is known to be suitable to compute the correlation functions of certain observables of the uniform spanning tree and the Abelian sandpile model among other models.
We hence exploit this connection to understand (the scaling limit of) local observables in these models as local fields in the symplectic fermions CFT

\noindent
To define our probabilistic models below we will need the following definitions.

For a discrete domain~$\ddomain{\meshsize}$, consider its nearest-neighbour \term{graph with wired boun\-dary}.
That is, the (multi)graph that one obtains by connecting with an edge all nearest neighbours in the domain~$\ddomain{\meshsize}$ and wiring all vertices in $\bdry\ddomain{\meshsize}$ into one single vertex while keeping track of possible double and triple edges near the boundary---see Figure~{\ref{fig: wired graph}}.
We let $(V_{\ddomain{\meshsize}}, E_{\ddomain{\meshsize}})$ denote this graph.

Let $e\in E_{\ddomain{\meshsize}}$ be an edge.
If it is a horizontal edge, we let $\zd_{\mathsf E ; e}\in V_{\ddomain{\meshsize}}$ and $\zd_{\mathsf W ; e} \in V_{\ddomain{\meshsize}}$ denote the vertices on the right and left of $e$, respectively.
Similarly, if $e$ is a vertical edge, we let $\zd_{\mathsf N ; e}\in V_{\ddomain{\meshsize}}$ and $\zd_{\mathsf S ; e}\in V_{\ddomain{\meshsize}}$ denote the vertices at the top and at the bottom of $e$, respectively.
Then, the \term{discrete gradient} of a function $f$ on $\ddomain{\meshsize}$ at the edge $e$ is given by
\begin{align}
\label{eq: discrete gradient}
    \dgrad f (e)
    \,\coloneqq\,
    \begin{cases}
        \;
        f( \zd_{\mathsf E ; e} ) - f( \zd_{\mathsf W ; e} )
        \phantom{\big\vert}
        &
        \mspace{20mu}
        \text{ if $e$ horizontal}\,,
        \\
        \phantom{.}
        \\
        \;
        f( \zd_{\mathsf N ; e} ) - f( \zd_{\mathsf S ; e} )
        \phantom{\big\vert}
        &
        \mspace{20mu}
        \text{ if $e$ vertical}\,.
    \end{cases}
\end{align}
For functions of multiple variables, we use the notation $\big(\dgrad\big)_{(i)}$ to indicate the gradient is taken with respect to the $i$-th variable.
For functions on $\Z^2$, we identify edges of the infinite square-grid with their midpoint seen as embedded in $\C$, and we use the notation
$\dgrad f \big(k + \frac{1}{2} + \ii\ell \big) \coloneqq f(k+1+\ii\ell) - f(k+\ii\ell)$
and
$\dgrad f \big(k + \ii\ell +\frac{\ii}{2} \big) \coloneqq f(k+\ii\ell+\ii) - f(k+\ii\ell)$
for $k,\ell\in\Z$.

\vspace{-10pt}
\subsubsection{The uniform spanning tree}
\label{subsubsec: UST}
A spanning tree of a graph is a subset of edges that contains no loops and covers every vertex.
Given a discrete domain~$\ddomain{\meshsize}$, the \textbf{uniform spanning tree} (UST) on $\ddomain{\meshsize}$ with wired boundary conditions is a spanning tree $\boldsymbol{\tau}$ of $(V_{\ddomain{\meshsize}}, E_{\ddomain{\meshsize}})$ taken uniformly at random among all possible spanning trees.
We let $\P^\text{UST}_{\ddomain{\meshsize}}$ and $\E^\text{UST}_{\ddomain{\meshsize}}$ denote, respectively, the UST probability measure and its expected values.
Then, the classical Kirchhoff's matrix-tree theorem implies
\begin{align*}
    \P^\text{UST}_{\ddomain{\meshsize}}
    \big[\, \boldsymbol{\tau} = T \,\big]
    \, = \,
    \frac{1}{\det( -\Delta_{\ddomain{\meshsize}}^{\texttt{D}})}
\end{align*}
for any wired spanning tree $T$ of $\ddomain{\meshsize}$, where the matrix $\dlaplacian^{\texttt{D}}$ with rows and columns indexed by the (interior) vertices of $\ddomain{\meshsize}$ is the Dirichlet Laplacian matrix
\begin{align*}
    (\Delta_{\ddomain{\meshsize}}^{\texttt{D}})_{\zd,\zdbis}
    \,=\,
    \begin{cases}
        \mspace{15mu} -4 & \mspace{30mu} \textnormal{if } \zd = \zdbis\,,
        \\
        \mspace{15mu} \; \; 1 &  \mspace{30mu} \textnormal{if } \vert \, \zd - \zdbis \,\vert = \meshsize\,, \textnormal{and}
        \\
        \mspace{15mu} \; \; 0 & \mspace{30mu} \textnormal{otherwise}\,,
    \end{cases}
\end{align*}
as we defined in Section~\ref{subsec: fDGFF def}.
% I think I want to avoid citing Peres if possible
By the celebrated transfer-current theorem of Burton and Pemantle  ---\cite{burtonpem}--- we have that the edges of the spanning tree form a determinantal process.
In precise terms, for any distinct edges $e_1, \ldots , e_n$, we have
\begin{align*}
    \P^\text{UST}_{\ddomain{\meshsize}}
    \big[\, e_1, \ldots , e_n  \in \boldsymbol{\tau} \big]
    \, = \,
    \det \bigg(
    \big(\dgrad\big)_{(1)} \big(\dgrad\big)_{(2)}
    \Green_{\ddomain{\meshsize}}( e_i, e_j )
    \bigg)_{i,j \in \{1 , \ldots , n\}}\,,
\end{align*}
where $\Green_{\ddomain{\meshsize}}$ is the Dirichlet Green's function as defined in Remark~\ref{rmk: 2n pt function fDGFF}.
Then, the following connection between the fDGFF and the UST follows from the transfer-current theorem and Remark~\ref{rmk: 2n pt function fDGFF}.

\begin{figure}[t!]
\centering
\begin{overpic}[scale=0.467, tics=10]{./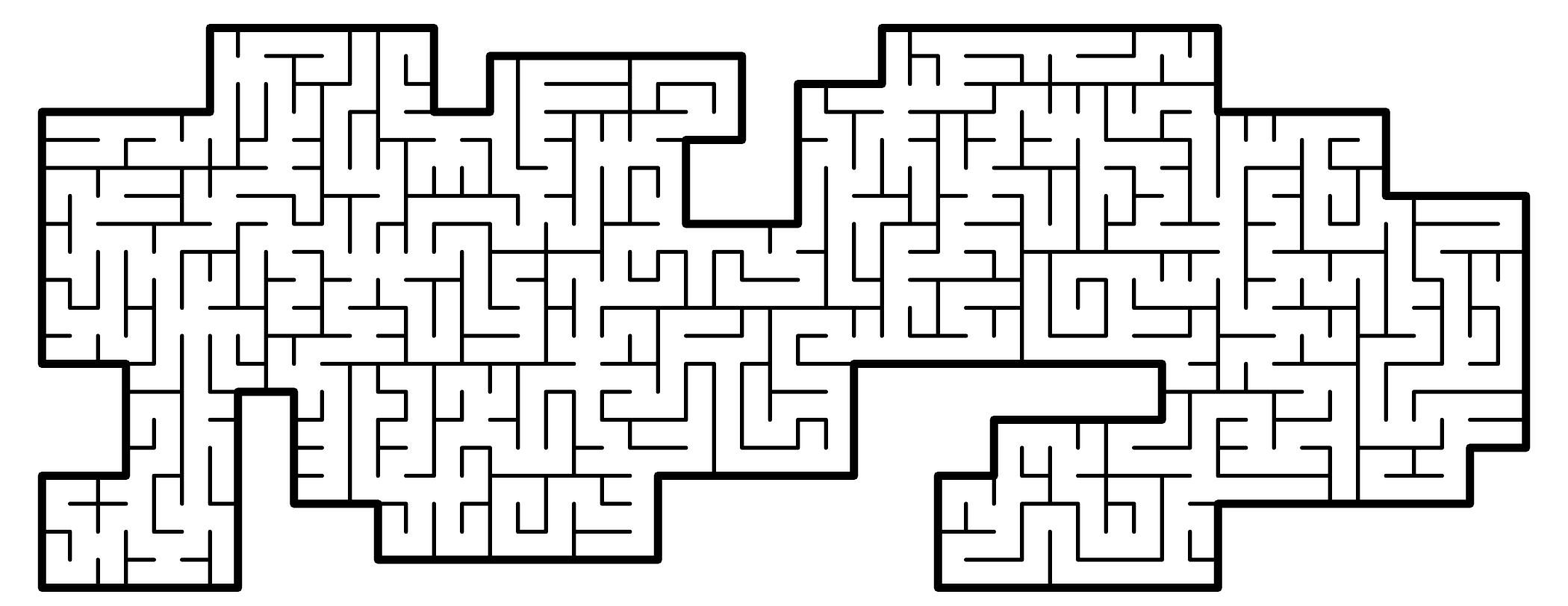}
    % \put(59,40){}
\end{overpic}
\centering
% \vspace{-35pt}
\caption{ An example of a spanning tree with wired boundary conditions.}
\label{fig: ust}
\end{figure}

\begin{prop}
\label{prop: open-edge fDGFF}
Let $e_1,\ldots,e_n$ be distinct edges in $E_{\ddomain{\meshsize}}$.
We have
\begin{align*}
    \P^\textnormal{UST}_{\ddomain{\meshsize}}
    \big[\, e_1, \ldots , e_n  \in \boldsymbol{\tau} \big]
    \, = \,
    \frac{1}{(4\pi)^n}
    \BigfDGFFCorrFun{\ddomain{\meshsize}}
    {\dgrad\fDGFFxi(e_1)\dgrad\fDGFFtheta(e_1)
    \cdots
    \dgrad\fDGFFxi(e_n)\dgrad\fDGFFtheta(e_n)}
    \,.
\end{align*}
\end{prop}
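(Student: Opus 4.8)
The statement follows by assembling the two ingredients already on the table: the transfer-current theorem recalled just above and the fermionic Wick's formula of Remark~\ref{rmk: 2n pt function fDGFF}. The plan is to show that the fDGFF correlation function on the right-hand side equals exactly $(4\pi)^n$ times the transfer-current determinant, after which dividing by $(4\pi)^n$ and invoking Burton--Pemantle finishes the argument.

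First I would use that $\bigfDGFFCorrFun{\ddomain{\meshsize}}{\,\cdot\,}$ is a linear map on the Grassmann algebra, and that each discrete gradient $\dgrad\fDGFFxi(e_i)$ (resp.\ $\dgrad\fDGFFtheta(e_i)$) is, by definition~\eqref{eq: discrete gradient}, a $\pm1$-combination of two generators $\fDGFFxi$ (resp.\ $\fDGFFtheta$) evaluated at the endpoints of $e_i$. Since the fermionic Wick's formula is an identity between two expressions that are each multilinear in the inserted fields and it holds on generators, it extends verbatim to these combinations inserted in the same interleaved $\xi\theta\cdots\xi\theta$ pattern, giving
\begin{align*}
\BigfDGFFCorrFun{\ddomain{\meshsize}}{\dgrad\fDGFFxi(e_1)\dgrad\fDGFFtheta(e_1)\cdots\dgrad\fDGFFxi(e_n)\dgrad\fDGFFtheta(e_n)} = \sum_{\sigma\in\mathfrak{S}_n}(-1)^\sigma\prod_{i=1}^n\BigfDGFFCorrFun{\ddomain{\meshsize}}{\dgrad\fDGFFxi(e_i)\dgrad\fDGFFtheta(e_{\sigma(i)})}.
\end{align*}
Next I would compute each factor: by bilinearity and the two-point function of Remark~\ref{rmk: 2n pt function fDGFF}, applying the gradient in the first variable at $e_i$ and in the second at $e_j$ yields
\begin{align*}
\BigfDGFFCorrFun{\ddomain{\meshsize}}{\dgrad\fDGFFxi(e_i)\dgrad\fDGFFtheta(e_j)} = 4\pi\,\big(\dgrad\big)_{(1)}\big(\dgrad\big)_{(2)}\Green_{\ddomain{\meshsize}}(e_i,e_j).
\end{align*}
Substituting this back and factoring out $(4\pi)^n$, the remaining signed permutation sum is the Leibniz expansion of the determinant of $\big(\big(\dgrad\big)_{(1)}\big(\dgrad\big)_{(2)}\Green_{\ddomain{\meshsize}}(e_i,e_j)\big)_{i,j}$, which the transfer-current theorem equates with $\P^\textnormal{UST}_{\ddomain{\meshsize}}[\,e_1,\ldots,e_n\in\boldsymbol{\tau}]$.

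The only point demanding care---and hence the closest thing to an obstacle---is the sign bookkeeping underlying the multilinear extension: one must confirm that expanding the gradients into endpoint contributions and reassembling introduces no spurious signs relative to the transfer-current determinant. This is settled by observing that the determinant is itself multilinear and alternating in its rows and columns, so recombining the $\pm1$ endpoint terms reproduces exactly the action of $\big(\dgrad\big)_{(1)}\big(\dgrad\big)_{(2)}$ on the Green's-function entries, with the signature $(-1)^\sigma$ absorbing precisely the anticommutations needed to bring each pair $\fDGFFxi(e_i)\fDGFFtheta(e_{\sigma(i)})$ into the canonical $\xi\theta$ order used in the two-point function.
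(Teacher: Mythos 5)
Your argument is correct and is precisely the one the paper intends: it states that the proposition ``follows from the transfer-current theorem and Remark~\ref{rmk: 2n pt function fDGFF}'' without spelling out the details, and your expansion via the fermionic Wick formula, bilinearity of the gradients on the two-point function, and the Leibniz form of the transfer-current determinant fills in exactly those details. No discrepancy with the paper's approach.
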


This connection allows one to understand UST local observables as CFT fields in the scaling limit via our isomorphism ---Theorem~\ref{thm: isomorphism}---.
For concreteness we give two examples.

\vspace{-10pt}
\paragraph{The degree field.}
For a vertex~$\zd\in\ddomain{\meshsize}$, we define the degree of $\zd$ in the UST as the random variable~$\degreeField(\zd)$ whose value on each UST realisation is the number of open edges adjacent to the vertex~$\zd$---we refer to $\degreeField$ as the \term{degree field}---see also \cite{CCRR-fDGFF_sandpile_UST, AlanCor}.
From \cite[Corollary 3.4]{CCRR-fDGFF_sandpile_UST} and Proposition~\ref{prop: open-edge fDGFF}, it straightforwardly follows that we have
\begin{align*}
    \E^{\textnormal{UST}}_{\ddomain{\meshsize}}
    \big[\, \degreeField(\zd_1) \cdots \degreeField(\zd_n)
    \,\big]
    \;=\;
    \frac{1}{(4\pi)^n}\,
    \BigfDGFFCorrFun{\ddomain{\meshsize}}
    {\sum_{e_1\sim\,\zd_1} \!
    \dgrad\fDGFFxi(e_1)\dgrad\fDGFFtheta(e_1)
    \cdots
    \sum_{e_n\sim\,\zd_n} \!
    \dgrad\fDGFFxi(e_n)\dgrad\fDGFFtheta(e_n)}\,, 
\end{align*}
where the notation $e \sim \zd$ indicated that the edge $e$ contains the vertex $\zd$. 
However, these correlation functions have a trivial scaling limit; they converge to $2^n$, where $2$ is the thermodynamic limit of the average of $\degreeField$.
A more interesting scaling limit is obtained if one considers the correlation functions of the difference of $\degreeField$ from its average value.
Hence, we consider the fDGFF local field 
\begin{align*}
    \dDegree
    \;\coloneqq\;
    % \Bigg(
    \frac{1}{4\pi}
    \mspace{-20mu}
    \sum_{\mspace{30mu} x \in \{\pm 1 , \pm \ii\}}
    \mspace{-20mu}
    \dgrad\Fieldxi \big( \tfrac{x}{2} \big)
    \dgrad\Fieldtheta \big( \tfrac{x}{2} \big)
    % \Bigg)
    \,-\,
    2
    \,+\,
    \Null
\end{align*}
where we already subtracted the full-plane average of $\degreeField$---that is, $2$.

In order to take the scaling limit of correlation functions of the local field $\dDegree$, we need to know its expression in the basis~\eqref{eq: basis}---in Section~\ref{subsec: ex of computation}, we present an example of such a computation in detail.
For the degree field~$\dDegree$, we have
\begin{align*}
    \dDegree
    \;=\;
    \frac{1}{\pi}
    \big(\,  \AntiHolCurrModexi{-1}\HolCurrModetheta{-1}
    +
    \HolCurrModexi{-1}\AntiHolCurrModetheta{-1}
    \big) \fDGFFGround
    \mspace{15mu} 
    +
    \parbox{10em}{ \centering terms with higher scaling dimensions. }
\end{align*}

In the scaling limit, the only relevant terms are those with the lowest scaling dimensions---higher scaling-dimension terms do not contribute to the scaling limit and can depend on the particular discretisation of the field theory. 
These terms are mapped to the symplectic-fermions field~$\contDegree\coloneqq\frac{1}{\pi}( \AlgEta{-1}\AlgChiBar{-1} + \AlgChi{-1}\AlgEtaBar{-1}) \Ground$ via the isomorphism in Theorem~\ref{thm: isomorphism}.
Hence, by Theorem~\ref{thm: scaling limit}, we have
\begin{align*}
    \frac{\E^{\textnormal{UST}}_{\ddomain{\meshsize}}
    \big[
    \big(\, \degreeField(\zd^\meshsize_1) - 2 \,\big)
    \cdots
    \big(\, \degreeField(\zd^\meshsize_n) - 2 \,\big)
    \big]}
    {\meshsize^{2\,n}}
    \mspace{10mu}
    \xrightarrow{\mspace{20mu} \meshsize\downarrow0 \mspace{20mu}}
    \mspace{10mu}
    \BigSFCorrFun{\domain}{\thectt}{
    \contDegree(z_1) \cdots \contDegree(z_n)
    }
\end{align*}
uniformly for $(z_1,\ldots,z_n)$ on compacts of $\Conf{n}{\domain}$.

\begin{ex}
Let $\zd^\meshsize$ be the closest vertex to $z\in\domain$ in the discrete domain~$\ddomain{\meshsize}$.
We have
\begin{align*}
    \frac{1}{\meshsize^2}
    \Big(\,
    \E^{\textnormal{UST}}_{\ddomain{\meshsize}}
    \big[ \, \degreeField(\zd^\meshsize) \, \big]
    \,-\, 2
    \, \Big)
    \mspace{10mu}
    \xrightarrow{\mspace{20mu} \meshsize\downarrow0 \mspace{20mu}}
    \mspace{10mu}
    \BigSFCorrFun{\domain}{\thectt}{
    \contDegree(z)
    }
    \,=\,
    8 \, \partial_1\partialBar_2\HarmOfGreen_{\domain}(z,z)
\end{align*}
uniformly for $z$ on compacts of $\domain$.
Note that the limit is actually independent of $\thectt$.
\hfill$\diamond$
\end{ex}

\vspace{-10pt}
\paragraph{The (horizontal) open-edge field.}
For a vertex $\zd\in\ddomain{\meshsize}\setminus\bdry\ddomain{\meshsize}$,
let $\{ \,\,\OEright \text{ open at } \zd \,\}$
be the UST event that the edge adjacent to $\zd$ on its right is open and let $\indicator_{\,\OEright}(\zd)$ denote its indicator function.
Similarly, let $\{ \,\,\OEleft \text{ open at } \zd \,\}$
be the UST event that the edge adjacent to the vertex $\zd$ on its left is open and let $\indicator_{\,\OEleftsubs}(\zd)$ denote its indicator function.
The UST local observables $\zd\mapsto\indicator_{\,\OEright}(\zd)$
and $\zd\mapsto\indicator_{\,\OEleftsubs}(\zd)$
can be respectively associated with the fDGFF field polynomials
\begin{align*}
    P_{\OEright}
    \;\coloneqq\;
    \frac{1}{4\pi}\,
    \dgrad\Fieldxi \big( \tfrac{1}{2} \big)
    \dgrad\Fieldtheta \big( \tfrac{1}{2} \big)
    \mspace{30mu}
    \textnormal{ and }
    \mspace{30mu}
    P_{\OEleftsubs}
    \;\coloneqq\;
    \frac{1}{4\pi}\,
    \dgrad\Fieldxi \big( \tfrac{-1\;}{2} \big)
    \dgrad\Fieldtheta \big( \tfrac{-1\;}{2} \big)\,.
\end{align*}
in the sense that we have the equality ---Proposition~\ref{prop: open-edge fDGFF}---
\begin{align*}
    &
    \E^{\textnormal{UST}}_{\ddomain{\meshsize}}
    \big[\,
    \indicator_{\,\OEright}(\zd_1)
    \cdots
    \indicator_{\,\OEright}(\zd_n)
    \indicator_{\,\OEleftsubs}(\zdbis_1)
    \cdots
    \indicator_{\,\OEleftsubs}(\zdbis_m)
    \,\big]
    \phantom{\bigg\vert}
    \\
    & \mspace{200mu}
    = \;
    \BigfDGFFCorrFun{\ddomain{\meshsize}}{
    \ev^{\ddomain{\meshsize}}_{\zd_1}
    \big( P_{\OEright} \big)
    \cdots
    \ev^{\ddomain{\meshsize}}_{\zd_n}
    \big( P_{\OEright} \big)
    \ev^{\ddomain{\meshsize}}_{\zdbis_1}
    \big( P_{\OEleftsubs} \big)
    \cdots
    \ev^{\ddomain{\meshsize}}_{\zdbis_m}
    \big( P_{\OEleftsubs} \big)
    }\,.
\end{align*}
As we argued for the degree field, the more interesting local observables with non-constant correlation functions ---with respect to the insertion point--- in the scaling limit are
\begin{align*}
    \zd
    \,\longmapsto\,
    \indicator_{\,\OEright}(\zd)
    -
    \frac{1}{2}
    \mspace{40mu}
    \textnormal{ and }
    \mspace{40mu}
    \zd
    \,\longmapsto\,
    \indicator_{\,\OEleftsubs}(\zd)
    -
    \frac{1}{2}
\end{align*}
where we have subtracted the thermodynamic limit of the probability of one given edge being open in the UST---that is, $\frac{1}{2}$.
As expected, their associated fields 
\begin{align*}
    \mathsf{P}_{\OEright}
    \,\coloneqq\,
    P_{\OEright} - \frac{1}{2}
    + \Null
    \mspace{40mu}
    \textnormal{ and }
    \mspace{40mu}
    \mathsf{P}_{\OEleftsubs}
    \,\coloneqq\,
    P_{\OEleftsubs} - \frac{1}{2}
    + \Null
\end{align*}
have the same leading-order terms in the basis~\eqref{eq: basis}.
In particular, we have that both $\mathsf{P}_{\OEright}$ and $\mathsf{P}_{\OEleftsubs}$ are equal to ---see Section~\ref{subsec: ex of computation}---
\begin{align*}
    \frac{1}{4\pi}\,
    \big( \HolCurrModexi{-1} + \AntiHolCurrModexi{-1} \big)
    \big( \HolCurrModetheta{-1} + \AntiHolCurrModetheta{-1} \big) \fDGFFGround
    \mspace{15mu} 
    +
    \parbox{10em}{ \centering terms with higher scaling dimensions. }
\end{align*}
We define then the fDGFF field
$\horEdgeField \coloneqq
\frac{1}{4\pi}\,
\big( \HolCurrModexi{-1} + \AntiHolCurrModexi{-1} \big)
\big( \HolCurrModetheta{-1} + \AntiHolCurrModetheta{-1} \big) 
\fDGFFGround$
and we let $\contHorEdgeField\in\FullFock$ be its image via the isomorphism of Theorem~\ref{thm: isomorphism}.

\begin{ex}
Let $\zd^\meshsize$ be the closest vertex to $z\in\domain$ in the discrete domain~$\ddomain{\meshsize}$.
We have
\begin{align*}
    \frac{1}{\meshsize^2}
    \bigg(\,
    \P^{\textnormal{UST}}_{\ddomain{\meshsize}}
    \big[ \, \!\bullet\!\!\!\mathbf{-} \text{ open at } \zd^\meshsize \, \big]
    \,-\, \frac{1}{2}
    \, \bigg)
    \mspace{10mu}
    \xrightarrow{\mspace{20mu} \meshsize\downarrow0 \mspace{20mu}}
    \mspace{10mu}
    \BigSFCorrFun{\domain}{\thectt}{
    \contHorEdgeField(z)
    }
    \,=\,
    (\partial_1 + \partialBar_1)
    (\partial_2 + \partialBar_2)\,
    \HarmOfGreen_{\domain}(z,z)
\end{align*}
uniformly for $z$ on compacts of $\domain$.
Note again that the limit does not depend on $\thectt$.
\hfill$\diamond$\end{ex}

An interesting observable is the difference $\mathsf{P}_{\OEright}-\mathsf{P}_{\OEleftsubs}$.
Within correlation functions, one can think of this observable as the horizontal derivative of either $\mathsf{P}_{\OEright}$ or $\mathsf{P}_{\OEleftsubs}$ since it is the difference of the open-edge field among two horizontally neighbouring edges.
In the basis~\eqref{eq: basis}, we have
\begin{align*}
    \mathsf{P}_{\OEright} - \mathsf{P}_{\OEleftsubs}
    \, = \,
    \frac{1}{4\pi}
    \Big[
    \big( \HolCurrModexi{-2} + \AntiHolCurrModexi{-2} \big)
    \big( \HolCurrModetheta{-1} + \AntiHolCurrModetheta{-1} \big)
    \,+\,
    \big( \HolCurrModexi{-1} + \AntiHolCurrModexi{-1} \big)
    \big( \HolCurrModetheta{-2} + \AntiHolCurrModetheta{-2} \big)
    \Big]
    \fDGFFGround\,.
\end{align*}
As expected from its physical interpretation, this field has scaling dimension $3$ since it is the (horizontal) derivative of a field with scaling dimension $2$.

Using explicit formulae of the Sugawara construction ---Lemma~\ref{lemma: sugawara}---, it is straightforward to check that we have
\begin{align*}
    \mathsf{P}_{\OEright} - \mathsf{P}_{\OEleftsubs}
    \, = \,
    \big(\dL{-1}+\dLBar{-1}\big) \, \horEdgeField
    \,.
\end{align*}
This fact is in agreement with the postulates of CFT:
the Virasoro modes~$\dL{-1}$ and $\dLBar{-1}$ behave, within correlation functions, as the holomorphic and antiholomorphic Wirtinger derivatives respectively---their linear combination $\dL{-1}+\dLBar{-1}$ corresponds then to the derivative in the horizontal direction.

\vspace{-10pt}
\subsubsection{The Abelian sandpile model}
\label{subsubsec: ASM}
The model at hand\footnote{For simplicity, we consider here only the Abelian sandpile model with dissipative (or wired, or Dirichlet) boundary conditions.} was first introduced as a model for self-organized criticality in \cite{BTW-ASM}. 
Sandpile models are typically defined via a Markov chain.
Consider the set $(\Zpos)^{\interior{\ddomain{\meshsize}}}$ of positive-integer-valued height functions on the interior of $\ddomain{\meshsize}$.
We say two height functions $h,h'\in(\Zpos)^{\interior{\ddomain{\meshsize}}}$ are toppling-equivalent if they satisfy
\begin{align*}
    h'
    \, = \,
    h
    +
    \sum_{\zd \in \ddomain{\meshsize}}
    n_{\zd} T_{\zd}
\end{align*}
for some $n_{\zd}\in\Z$, where the \term{toppling} $T_{\zd}:^{\interior{\ddomain{\meshsize}}}\rightarrow\Z$ is defined via the Dirichlet Laplacian by
\begin{align*}
    T_{\zd}(\zdbis)
    \,\coloneqq\,
    (\Delta_{\ddomain{\meshsize}}^{\texttt{D}})_{\zd,\zdbis}\,.
\end{align*}

\vspace{-10pt}
The state space of the Markov chain is the set of height func\-tions $(\Z_{>0})^{\ddomain{\meshsize}}$ modulo topplings, which we denote by $\Heights{\ddomain{\meshsize}}$.
One can see that each class of toppling-equivalent height functions has a unique representative in $\{1,2,3,4\}^{\interior{\ddomain{\meshsize}}}$;
we hence identify each class with this unique representative.
The dynamics of the Markov chain is then as follows:
at each discrete time step, the height function increases by $1$ at a vertex selected uniformly at random among all the vertices in $\interior{\ddomain{\meshsize}}$.

The unique stationary measure of this Markov chain $\P_{\ddomain{\meshsize}}^{\text{ASM}}$ is uniform on the set $\Recurrent{\ddomain{\meshsize}}$ of re\-current states ---\cite{Dhar-ASM}---.
We let $\E_{\ddomain{\meshsize}}^{\text{ASM}}$ denote the expectation value with respect to this probability measure.

The \term{Abelian sandpile model} is then the probabilistic model whose space of configurations is $\Recurrent{\ddomain{\meshsize}}$ equipped with the uniform measure $\P_{\ddomain{\meshsize}}^{\text{ASM}}$.

\begin{figure}[t!]
\centering
\begin{overpic}[scale=0.467, tics=10]{./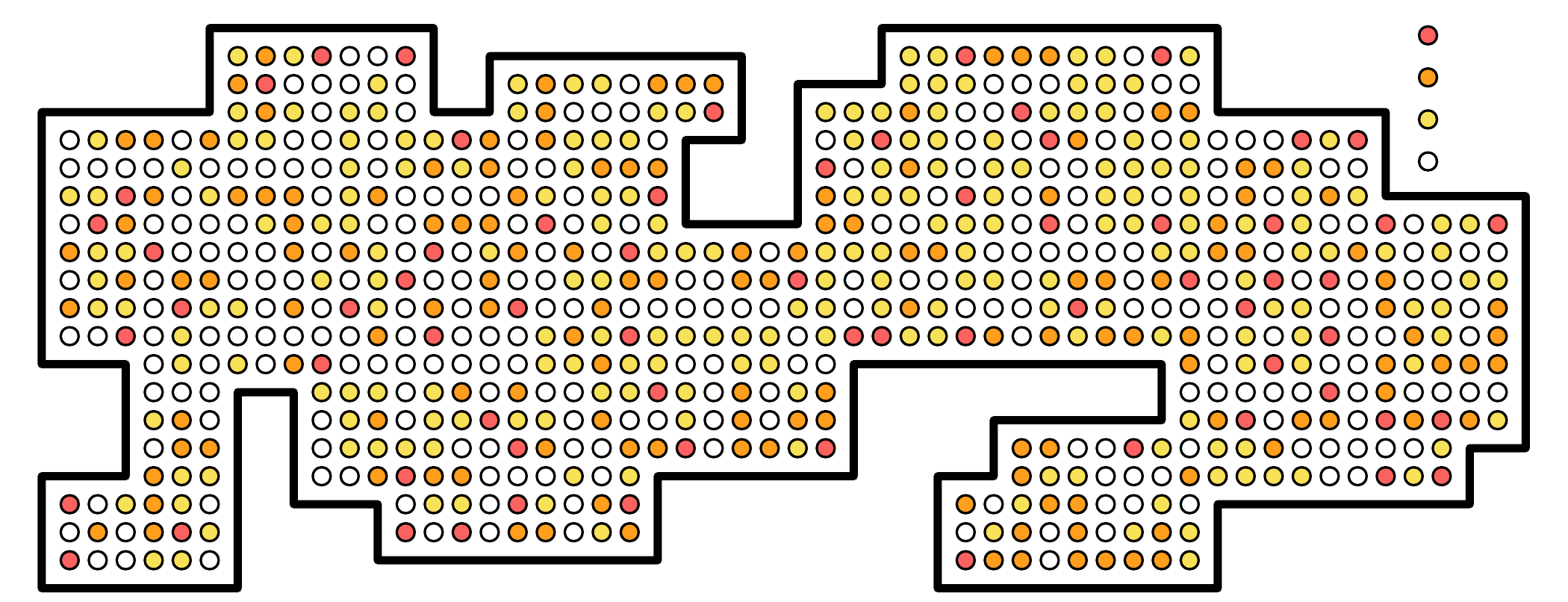}
    \put(93,35.6){\footnotesize $1$}
    \put(93,32.9){\footnotesize $2$}
    \put(93,30.2){\footnotesize $3$}
    \put(93,27.5){\footnotesize $4$}
\end{overpic}
\centering
% \vspace{-35pt}
\caption{ A configuration of the Abelian sandpile model with wired \\ boundary conditions in a discrete domain.
This configuration corresponds \\ to the spanning tree in Figure~\ref{fig: ust} via Dhar's burning algorithm.
}
\label{fig: asm}
\end{figure}

It is noteworthy that there is a non-local coupling ---that is, a bijection that preserves measures--- between the ASM and the UST known as the \textit{burning algorithm} ---\cite{MD-burning algorithm}---.
The (scaling limit of the) Abelian sandpile model is also conjectured to be described by a logarithmic CFT with central charge $-2$; see \cite{Ruelle-SM in the large} for a recent review.
Though the scaling-limit CFT is predicted not to be equivalent to the symplectic fermions, certain (local) fields in these two distinct logCFTs can be put in correspondence.

\vspace{-15pt}
\paragraph{The height-one field.}
Consider the observable
\begin{align*}
    \zd 
    \;\longmapsto \;
    \indicator_{\{h(\zd)=1\}}
\end{align*}
which takes values $1$ on ASM configurations in which the height at $\zd\in\ddomain{\meshsize}$ is $1$ and takes the value $0$ otherwise.
Its full-plane average  ---computed for the first time in \cite{MD-correlations}--- is $\frac{2(\pi-2)}{\pi^3}$, and its fluctuations around this average have been argued to be described by a local field in the symplectic fermions theory ---\cite{MahieuRuelle-observables}---.

By means of the fDGFF ---\cite[Theorem 3.5]{CCRR-fDGFF_sandpile_UST}---, we can compute the correlation functions 
\begin{align*}
    \E^{\textnormal{ASM}}_{\ddomain{\meshsize}}
    \big[\, \indicator_{\{h(\zd_1)=1\}} \cdots \indicator_{\{h(\zd_n)=1\}}
    \,\big]
    \;=\;
    \BigfDGFFCorrFun{\ddomain{\meshsize}}
    {H_1(\zd_1)
    \cdots
    H_1(\zd_n)
    }
\end{align*}
with the fDGFF field-polynomial
\begin{align*}
    H_1(\zd)
    \,\coloneqq\,
    \frac{1}{4\pi}
    \Bigg[ \,
    \prod_{e\sim \zd}
    \Big(\dgrad\fDGFFxi(e)\dgrad\fDGFFtheta(e) - 1 \Big)
    - 1 \, \Bigg]
    \,.
\end{align*}
We hence define the local observable~$\heightOne$ by
\begin{align*}
    \heightOne(\zd)
    \coloneqq
    \indicator_{\{h(\zd)=1\}}
    -
    \frac{2(\pi-2)}{\pi^3}\,,
\end{align*}
which is called the \term{height-one field}.
Its associated fDGFF local field is
\begin{align*}
    \dheightOne
    \,\coloneqq\,
    \frac{1}{4\pi}
    \Bigg[
    \prod_{x\in\{\pm 1, \pm\ii\}} \!\!
    \Big( \dgrad\Fieldxi \big( \tfrac{x}{2} \big)
    \dgrad\Fieldtheta \big( \tfrac{x}{2} \big)
    - 1 \Big) - 1 \,\Bigg]
    - \frac{2\,(\pi-2)}{\pi^3}
    \,+\, \Null\,,
\end{align*}
which, in terms of the CFT basis of the logarithmic Fock space ---Theorem~\ref{thm: isomorphism}--- is written as
\begin{align*}
    \dheightOne
    \;=\;
    -\frac{\pi-2}{\pi^3} \,
    \Big( \,\AntiHolCurrModexi{-1} \HolCurrModetheta{-1} + \HolCurrModexi{-1}\AntiHolCurrModetheta{-1} \Big)
    \fDGFFGround
    \mspace{15mu} 
    +
    \parbox{10em}{\centering terms with higher scaling dimensions.}
\end{align*}
We refer the reader to Section~\ref{subsec: ex of computation} to see an explicit example of such a computation.

We also define $\contHeightOne\coloneqq -\frac{\pi-2}{\pi^2}\big(\, \AlgChiBar{-1} \AlgEta{-1} + \AlgChi{-1}\AlgEtaBar{-1} \big) \Ground$ in the logarithmic Fock space~$\FullFock$---the image of $\dheightOne$ via the isomorphism in Theorem~\ref{thm: isomorphism}.
This local field coincides with the one predicted in~\cite{MahieuRuelle-observables}.

\begin{ex}
By Theorem~\ref{thm: scaling limit}, we have the following scaling-limit convergence of the one-point function:
\begin{align*}
    \frac{\E_{\ddomain{\meshsize}}^\textnormal{ASM}
    \big[\, \heightOne(\zd^\meshsize)\, \big]}{\meshsize^2}
    \mspace{10mu}
    \xrightarrow{\mspace{30mu}\meshsize\downarrow 0 \mspace{30mu}}
    \mspace{10mu}
    \BigSFCorrFun{\domain}{\thectt}{
    \contHeightOne (z)}
    \,=\,
    -
    \frac{8\,(\pi-2)}{\pi^2}\,
    \partialBar_1 \partial_2
    \HarmOfGreen_{\domain}(z,z)
\end{align*}
uniformly for $z$ on compacts of $\domain$, where $\zd^\meshsize$ is the vertex in $\ddomain{\meshsize}\subset\C$ that is closest to $z$.
Note the independence with respect to the constant $\thectt$.
\hfill$\diamond$
\end{ex}

\vspace{-10pt}
\paragraph{The dissipation field.}

Another interesting observable in this model is the insertion of dissipative sites in the bulk, that is, sites that do not topple.
This observable was fit into the CFT description of the ASM as a logarithmic local field in \cite{PirouxRuelle-observables}---we should revisit these ideas here.

The dissipation field is \emph{not} probabilistic in nature, but it rather modifies the configuration space of the ASM.
In precise terms, the configurations of the ASM on $\ddomain{\meshsize}$ with dissipative sites at the vertices $\zdbis_1,\ldots,\zdbis_n\in\ddomain{\meshsize}$ are the configurations of the ASM on $\ddomain{\meshsize}\setminus\{\zdbis_1,\ldots,\zdbis_n\}$.

Let $\mathbf{W}\coloneqq\{\zdbis_1,\ldots,\zdbis_n\}$ denote the set of dissipative sites in the bulk.
For a random variable $X$ of the ASM on $\ddomain{\meshsize}$ that is determined by the values of $h\vert_{\ddomain{\meshsize}\setminus\mathbf{W}}$, we introduce the following notation for expectations with dissipative sites:
\begin{align*}
    \E^{\textnormal{ASM}}_{\ddomain{\meshsize}}
    \big[\,
    X
    \,\bigASMseparator\,
    \dissipation(\zdbis_1) \cdots \dissipation(\zdbis_n)
    \,\big]
    \,\coloneqq\,
    \frac{1}{\vert \Recurrent{\ddomain{\meshsize}}\vert}
    \mspace{-30mu}
    \sum_{\mspace{50mu}h\in \Recurrent{\ddomain{\meshsize}\setminus\mathbf{W}}}
    \mspace{-30mu}
    X(h)\,.
\end{align*}
Note the sum runs over ASM configurations on $\ddomain{\meshsize}\setminus \mathbf{W}$, but the normalising constant is the number of ASM configurations on $\ddomain{\meshsize}$.
We refer to $\dissipation$ as the \term{dissipation field}.

\begin{ex}
We have
\begin{align*}
    \E^{\textnormal{ASM}}_{\ddomain{\meshsize}}
    \big[\,
    1
    \,\bigASMseparator\,
    \dissipation(\zdbis_1) \cdots \dissipation(\zdbis_n)
    \,\big]
    \,=\,
    \frac
    {\vert \Recurrent{\ddomain{\meshsize}\setminus\mathbf{W}} \vert}
    {\vert \Recurrent{\ddomain{\meshsize}}\vert}
    \,.
\end{align*}
where we let $1$ denote the constant random variable with constant value $1$.
\hfill$\diamond$
\end{ex}
We also define $\E^{\textnormal{ASM}}_{\ddomain{\meshsize}} [\, X \,\ASMseparator\, 1 \,] \coloneqq \E^{\textnormal{ASM}}_{\ddomain{\meshsize}} [\, X \,]$ and extend linearly in the following sense:
given the vertices $\xd_1,\ldots,\xd_\ell\in\ddomain{\meshsize}$ and the constants $\alpha,\alpha_1,\ldots,\alpha_\ell\in\C$,
we define
\begin{align*}
    & \,
    \E^{\textnormal{ASM}}_{\ddomain{\meshsize}}
    \big[\,
    X
    \,\bigASMseparator\,
    \bigg(
    \alpha \,+\,
    \sum_{j=1}^\ell \alpha_j \, \dissipation(\xd_j)
    \bigg)
    \dissipation (\zdbis_1) \cdots \dissipation(\zdbis_n)
    \,\big]
    \,\coloneqq\,
    \\
    & \mspace{100mu}
    \phantom{\bigg\vert}
    \alpha \cdot
    \E^{\textnormal{ASM}}_{\ddomain{\meshsize}}
    \big[\,
    X
    \,\bigASMseparator\,
    \dissipation (\zdbis_1) \cdots \dissipation(\zdbis_n)
    \,\big]
    \, + \,
    \sum_{j=1}^\ell
    \Big(
    \alpha_j \cdot
    \E^{\textnormal{ASM}}_{\ddomain{\meshsize}}
    \big[\,
    X
    \,\bigASMseparator\,
    \dissipation(\xd_j) \,
    \dissipation (\zdbis_1) \cdots \dissipation(\zdbis_n)
    \,\big]
    \Big)\,.
\end{align*}

The following result ---whose proof we postpone to Section~\ref{subsec: proof stat mech}--- allows one to compute the  correlation functions of an arbitrary number of height-one fields in the presence of an arbitrary number of dissipation fields.

\begin{lemma}
\label{lemma: corrfun h-one and diss}
Let $\zd_1,\ldots,\zd_n,\zdbis_1,\ldots,\zdbis_m\in\ddomain{\meshsize}$ be interior vertices of $\ddomain{\meshsize}$.
We have
\begin{align*}
    \E^{\textnormal{ASM}}_{\ddomain{\meshsize}}
    \big[\,
    \heightOne(\zd_1) \cdots \heightOne(\zd_n)
    \,\bigASMseparator\,
    \dissipation(\zdbis_1) \cdots \dissipation(\zdbis_m)
    \,\big]
    \,=\,
    \BigfDGFFCorrFun{\ddomain{\meshsize}}{
    H_1(\zd_1) \cdots H_1(\zd_n)
    D(\zdbis_1) \cdots D(\zdbis_m)
    }
\end{align*}
with $D(\zd) \coloneqq \dfrac{1}{4\pi}\fDGFFxi(\zd)\fDGFFtheta(\zd)$\,.
\end{lemma}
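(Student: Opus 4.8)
The plan is to factor the proof into a probabilistic reduction on the sandpile side and a Grassmann-algebraic manipulation on the fDGFF side, the two meeting at the fDGFF correlation functions of the \emph{dissipated} domain. Write $\mathbf{W}\coloneqq\{\zdbis_1,\ldots,\zdbis_m\}$. First I would unfold the definition of the dissipation expectation: since it sums the observable over $\Recurrent{\ddomain{\meshsize}\setminus\mathbf{W}}$ while normalising by $|\Recurrent{\ddomain{\meshsize}}|$, pulling out $|\Recurrent{\ddomain{\meshsize}\setminus\mathbf{W}}|$ (and using the multilinear extension in the height-one insertions) gives
\begin{align*}
    \E^{\textnormal{ASM}}_{\ddomain{\meshsize}}\big[\heightOne(\zd_1)\cdots\heightOne(\zd_n)\,\bigASMseparator\,\dissipation(\zdbis_1)\cdots\dissipation(\zdbis_m)\big]
    \,&=\,
    \frac{|\Recurrent{\ddomain{\meshsize}\setminus\mathbf{W}}|}{|\Recurrent{\ddomain{\meshsize}}|}\,
    \E^{\textnormal{ASM}}_{\ddomain{\meshsize}\setminus\mathbf{W}}\big[\heightOne(\zd_1)\cdots\heightOne(\zd_n)\big]\,,
\end{align*}
where the expectation on the right is the ordinary uniform ASM expectation on the dissipated domain. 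Applying the height-one/fDGFF dictionary of \cite[Theorem 3.5]{CCRR-fDGFF_sandpile_UST} --- a statement about the determinantal structure on an arbitrary wired graph, hence valid verbatim on $\ddomain{\meshsize}\setminus\mathbf{W}$ --- rewrites the right-hand side as $\tfrac{|\Recurrent{\ddomain{\meshsize}\setminus\mathbf{W}}|}{|\Recurrent{\ddomain{\meshsize}}|}\,\bigfDGFFCorrFun{\ddomain{\meshsize}\setminus\mathbf{W}}{H_1(\zd_1)\cdots H_1(\zd_n)}$. It therefore remains to prove the purely fDGFF identity
\begin{align*}
    \bigfDGFFCorrFun{\ddomain{\meshsize}}{H_1(\zd_1)\cdots H_1(\zd_n)\,D(\zdbis_1)\cdots D(\zdbis_m)}
    \,&=\,
    \frac{|\Recurrent{\ddomain{\meshsize}\setminus\mathbf{W}}|}{|\Recurrent{\ddomain{\meshsize}}|}\,
    \bigfDGFFCorrFun{\ddomain{\meshsize}\setminus\mathbf{W}}{H_1(\zd_1)\cdots H_1(\zd_n)}\,.
\end{align*}

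The second step is this fDGFF identity, which I would establish by an exponential trick. Because $\big(\tfrac{1}{4\pi}\fDGFFxi(\zdbis)\fDGFFtheta(\zdbis)\big)^2=0$ in the Grassmann algebra, each dissipation factor is an exponential $D(\zdbis)=\exp\!\big(\tfrac{1}{4\pi}\fDGFFxi(\zdbis)\fDGFFtheta(\zdbis)\big)$ of an even element, so these factors commute with one another and with the Boltzmann weight and
\begin{align*}
    \Big(\prod_{\zdbis\in\mathbf{W}}D(\zdbis)\Big)\,e^{-\ActionfDGFF_{\ddomain{\meshsize}}[\fDGFFxi,\fDGFFtheta]}
    \,&=\,
    \exp\!\Big(-\ActionfDGFF_{\ddomain{\meshsize}}[\fDGFFxi,\fDGFFtheta]+\tfrac{1}{4\pi}\sum_{\zdbis\in\mathbf{W}}\fDGFFxi(\zdbis)\fDGFFtheta(\zdbis)\Big)
    \\
    \,&=\,
    e^{-\ActionfDGFF_{\ddomain{\meshsize}\setminus\mathbf{W}}[\fDGFFxi,\fDGFFtheta]}\,,
\end{align*}
where $\ActionfDGFF_{\ddomain{\meshsize}\setminus\mathbf{W}}$ is the fDGFF action built from the Dirichlet Laplacian $\Delta_{\ddomain{\meshsize}}^{\texttt{D}}-\indicator_{\mathbf{W}}$, i.e.\ the one in which each dissipative site retains its four lattice edges but acquires one extra edge to the wired boundary. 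Dividing by $\mathcal{Z}_{\ddomain{\meshsize}}$ and inserting $\mathcal{Z}_{\ddomain{\meshsize}\setminus\mathbf{W}}$ then yields this identity with prefactor $\mathcal{Z}_{\ddomain{\meshsize}\setminus\mathbf{W}}/\mathcal{Z}_{\ddomain{\meshsize}}$. Since both domains share the same interior vertex set, Gaussian (Berezin) integration gives $\mathcal{Z}_{\ddomain{\meshsize}}=(4\pi)^{-|\interior{\ddomain{\meshsize}}|}\det(-\Delta_{\ddomain{\meshsize}}^{\texttt{D}})$ and likewise for the dissipated domain with the \emph{same} power of $4\pi$, so the ratio is $\det(-\Delta_{\ddomain{\meshsize}}^{\texttt{D}}+\indicator_{\mathbf{W}})/\det(-\Delta_{\ddomain{\meshsize}}^{\texttt{D}})$; by Kirchhoff's matrix--tree theorem together with the burning-algorithm bijection \cite{MD-burning algorithm} this equals $|\Recurrent{\ddomain{\meshsize}\setminus\mathbf{W}}|/|\Recurrent{\ddomain{\meshsize}}|$, closing the argument.

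The main difficulty is not any single calculation but reconciling the two a priori different notions of dissipation feeding the two sides. On the fDGFF side the factor $D=1+\tfrac{1}{4\pi}\fDGFFxi\fDGFFtheta$ mechanically shifts the Dirichlet Laplacian by $-\indicator_{\mathbf{W}}$; the heart of the lemma is the verification that this is exactly the graph operation defining $\Recurrent{\ddomain{\meshsize}\setminus\mathbf{W}}$, namely attaching a single sink edge to each site of $\mathbf{W}$, so that the coefficient $\tfrac{1}{4\pi}$ in $D$ corresponds to precisely one extra edge, the covariance $(-\Delta_{\ddomain{\meshsize}}^{\texttt{D}}+\indicator_{\mathbf{W}})^{-1}$ of the reweighted measure coincides with the Dirichlet Green's function $\Green_{\ddomain{\meshsize}\setminus\mathbf{W}}$, and the partition-function ratio matches the recurrent-state ratio. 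A secondary subtlety is that the support of $H_1(\zd_i)$ reaches the four neighbours of $\zd_i$, some of which may lie in $\mathbf{W}$; since dissipative sites remain interior vertices of $\ddomain{\meshsize}\setminus\mathbf{W}$ (they are given a sink edge, not deleted), the edge gradients in $H_1(\zd_i)$ stay live, no spurious boundary vanishing occurs, and the field polynomial $H_1$ has literally the same form on both domains, so that the height-one dictionary transfers without modification.
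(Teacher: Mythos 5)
Your proposal is correct, and while it shares the paper's overall skeleton --- unfold the dissipation expectation into $\tfrac{\vert\Recurrent{\ddomain{\meshsize}\setminus\mathbf{W}}\vert}{\vert\Recurrent{\ddomain{\meshsize}}\vert}$ times the ordinary expectation on the dissipated model, identify the dissipated toppling matrix with $\Delta_{\ddomain{\meshsize}}^{\texttt{D}}-\sum_i\One^{\zdbis_i}$, and close with the matrix--tree theorem and the burning algorithm --- the central fDGFF step is executed by a genuinely different mechanism. The paper proves directly, via repeated applications of the fermionic Wick formula, that the $D$-dressed and $\langle D\cdots D\rangle$-normalised two-point function inverts the modified Laplacian (so that the height-one dictionary applies on the dissipated graph), and then separately computes $\bigfDGFFCorrFun{\ddomain{\meshsize}}{D(\zdbis_1)\cdots D(\zdbis_m)}$ by expanding it as the determinant $\det_{\mathbf{W}}(\identity+\Green_{\ddomain{\meshsize}})$. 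Your observation that $\big(\fDGFFxi(\zdbis)\fDGFFtheta(\zdbis)\big)^2=0$ makes $D(\zdbis)$ an exponential of an even, nilpotent element, so that the product of dissipation insertions is absorbed into the Boltzmann weight and literally shifts the action's Laplacian to $\Delta_{\ddomain{\meshsize}}^{\texttt{D}}-\One_{\mathbf{W}}$, delivers both facts in one stroke: the dressed correlators become the correlators of the dissipated fDGFF up to the partition-function ratio, and that ratio is $\det(-\Delta_{\ddomain{\meshsize}}^{\texttt{D}}+\One_{\mathbf{W}})/\det(-\Delta_{\ddomain{\meshsize}}^{\texttt{D}})$ because both actions live on the same set of Grassmann generators. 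This is cleaner and more conceptual than the paper's term-by-term Wick bookkeeping; what it costs you is that you must invoke the height-one formula of \cite{CCRR-fDGFF_sandpile_UST} on the modified graph rather than only on $\ddomain{\meshsize}$ itself, but the paper relies on exactly the same transfer (its ``similarly as in the case without dissipation fields'' step), and your remark that the dissipative sites remain interior vertices of degree-four-plus-sink, so that $H_1$ keeps the same form, addresses the only point where this could go wrong.
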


It is straightforward to see ---by the definitions in Example~\ref{ex: rep of ground states}--- that, in the basis \eqref{eq: basis}, the fDGFF field $\dissField \coloneqq \frac{1}{4 \pi} \Fieldxi(0)\Fieldtheta(0) + \Null$ associated to the dissipation field is written as
\begin{align*}
    \dissField
    \, = \,
    -  \frac{1}{4\pi}\fDGFFGroundPartner\,,
\end{align*}
which is a field with generalised scaling dimensions $0$.
However, we argue that the term of the logarithmic field $\fDGFFGroundPartner$ is the only relevant term to understand the physical behaviour of this field.
To see this, consider the statement of Theorem~\ref{thm: scaling limit} applied to the particular case of the correlation function in Lemma~\ref{lemma: corrfun h-one and diss}.

On the one hand, the height-one field was established to be a primary field with scaling dimension $2$---no logarithmic correction is involved in the scaling limit of this field.
As for the dissipation field, a field-polynomial representative of the field $\dissField - \log ( \lambda \, \meshsize ) \big[ (\dL{0}+\dLBar{0}) - 0 \, \big]\dissField$ is
\begin{align*}
    \frac{1}{4 \pi} \Fieldxi(0)\Fieldtheta(0)
    \, + \, \frac {\log (\lambda\,\meshsize)}{2\pi}\,.
\end{align*}
% Yet, this is also a field-polynomial representative of $-\frac{1}{4\pi}\fDGFFGroundPartner - \log ( \tilde\lambda \, \meshsize ) \big[ (\dL{0}+\dLBar{0}) - 0 \, \big]\big( - \frac{1}{4\pi}\fDGFFGroundPartner \big)$ with a different arbitrary scale constant $\tilde\lambda \coloneqq e^{-4\pi}\lambda$.
According to Theorem~\ref{thm: scaling limit}, we have
\begin{align*}
    & \frac{1}{{{\mathlarger{\meshsize}}}^{2 \cdot m}} \,
    \E^{\textnormal{ASM}}_{\ddomain{\meshsize}}
    \Big[\,
    \heightOne(\zd^\meshsize_1) \cdots \heightOne(\zd^\meshsize_m)
    \,\bigASMseparator\,
    \Big(\dissipation(\zdbis^\meshsize_1) + \tfrac {\log (\lambda\,\meshsize)}{2\pi} \Big)
    \cdots
    \Big(\dissipation(\zdbis^\meshsize_n) + \tfrac {\log (\lambda\,\meshsize)}{2\pi} \Big)
    \,\Big]
    \phantom{\Bigg\vert}
    \\
    & \mspace{210mu}
    \xrightarrow{\mspace{50mu} \meshsize\downarrow 0 \mspace{50mu}}
    \mspace{3mu} \phantom{\Bigg\vert}
    \frac{(-1)^n}{(4\pi)^n}\,
    \BigSFCorrFun{\domain}{\thectt(\lambda)}{
    \,\contHeightOne(z_1) \cdots \contHeightOne (z_m) \,
    \GroundPartner(w_1) \cdots \GroundPartner (w_n)
    }
\end{align*}
uniformly for $(z_1,\ldots,z_m,w_1,\ldots,w_n)$ on compacts of $\Conf{n+m}{\domain}$.

Note, once more, that the scaling constant $\lambda$ is arbitrary, and the scaling limit is unique only up to a choice of this constant.

\begin{ex}
\label{ex: dissipation field}
For one height-one field and one dissipation field, we have 
\begin{align*}
    & \frac{1}{{{\mathlarger{\meshsize}}}^{2}} \,
    \E^{\textnormal{ASM}}_{\ddomain{\meshsize}}
    \Big[\,
    \heightOne(\zd^\meshsize)
    \,\bigASMseparator\,
    \dissipation(\zdbis^\meshsize) + \tfrac {\log (\lambda\,\meshsize)}{2\pi}
    \,\Big]
    \mspace{10mu}
    \xrightarrow{\mspace{20mu} \meshsize\downarrow 0 \mspace{20mu}}
    \mspace{10mu} \phantom{\Bigg\vert}
    -\frac{1}{4\pi}\,
    \BigSFCorrFun{\domain}{\thectt(\lambda)}{
    \,\contHeightOne(z) \,
    \GroundPartner(w)
    }
    \\
    & \mspace{140mu}
    \phantom{\bigg\vert}
    = \, \frac{8\,(\pi-2)\,}{\pi^2}\,
    \Big(
    \partialBar_1 \partial_2 \HarmOfGreen_{\domain}(z,z)
    \big[\HarmOfGreen_\domain(w,w) + \thectt(\lambda) \big]
    +
    \partialBar_1 \HarmOfGreen_\domain(z,w) \,
    \partial_1 \HarmOfGreen_\domain(z,w)
    \Big)
\end{align*}
uniformly for $(z,w)$ in compacts of $\Conf{2}{\domain}$,
where $\zd^\meshsize$ and $\zdbis^\meshsize$ are the vertices in $\ddomain{\meshsize}$ that are closest to $z$ and $w$ respectively.
\hfill$\diamond$
\end{ex}
\black

% \newpage
% \section{Applications}
% \input{./sections/applications.tex}

\newpage
\section{Proofs}
\label{sec: proofs}
We now present the proofs that were omitted in the discussion of our results.

In Sections~\ref{subsec: lin field poly} and \ref{subsec: higher fields} we elaborate on preliminaries necessary for the proof of Theorem~\ref{thm: isomorphism}, which is presented in Section~\ref{subsec: proof of isomorphism}.
In Section~\ref{subsec: proof of scaling limit}, we give the proof of Theorem~\ref{thm: scaling limit}.
Finally, Section~\ref{subsec: proof stat mech} is devoted to the proof of Lemma~\ref{lemma: corrfun h-one and diss} regarding the dissipation field of the Abelian sandpile model.
In the final Section~\ref{subsec: ex of computation}, we give the details of one computation for expressing a local field in terms of the BPZ basis of the Fock space.

\vspace{-10pt}
\subsection{Linear local fields}
\label{subsec: lin field poly}
Before handling all fDGFF local fields, we consider a simple subspace of local fields that can be characterised more easily.
This subspace is the set of linear local fields, that is, fields that have a homogeneous field-polynomial representative of degree $1$.
The key insight is that one can build the rest of local fields via a certain product ---the normally-ordered product--- of linear local fields; this idea is developed in Section~\ref{subsec: higher fields}.

We define the space of \term{linear field polynomials} as
\begin{align*}
	\LinearFieldPoly
	\,\coloneqq\,
	\spn_\C
	\big\{\Fieldxi(\zu),\Fieldtheta(\zu)
	\,\colon \zu\in\Z^2
	\big\}\,,
\end{align*}
which is a vector subspace of $\FieldPoly$.
Then, the space of \term{linear local fields} is the quotient
\begin{align*}
	\LinearFields
	\,\coloneqq\,
	\LinearFieldPoly / \NullFieldPoly\,.
\end{align*}

\begin{ex}
\label{ex: fermion linear rep}
The fields
\begin{align*}
	\fDGFFGroundXi
	\,\coloneqq\,
	\Fieldxi (0) \,+\,\Null
	\mspace{60mu}
	\text{and}
	\mspace{60mu}
	\fDGFFGroundTheta
	\,\coloneqq\,
	\Fieldtheta (0) \,+\,\Null
\end{align*}
transparently belong in $\LinearFields$.
We define their representatives $\RepHolChi{0}\coloneqq \Fieldxi(0)$ and $\RepHolEta{0}\coloneqq \Fieldtheta(0)$.
\hfill$\diamond$
\end{ex}

\begin{ex}
\label{ex: rep of linear basis}
The fields
\begin{align*}
	\HolCurrModexi{-k} \fDGFFGround
	\,=\, &
	\frac{1}{2\pi}
	\sum_{\zu_\medial\in\Z^2_\medial}
	\ddeebar \zu_\medial^{[-k]}
	\pddee\Fieldxi (\zu_\medial)
	\,+\,
	\Null
	\,\eqqcolon\,
	\RepHolChi{-k}
	\,+\,
	\Null\,,
	\\
	\AntiHolCurrModexi{-k} \fDGFFGround
	\,=\, &
	\frac{1}{2\pi}
	\sum_{\zu_\medial\in\Z^2_\medial}
	\ddee \overline{\zu}_\medial^{[-k]}
	\pddeebar\Fieldxi (\zu_\medial)
	\,+\,
	\Null
	\,\eqqcolon\,
	\RepAntiHolChi{-k}
	\,+\,
	\Null\,,
	\\
	\HolCurrModetheta{-k} \fDGFFGround
	\,=\, &
	\frac{1}{2\pi}
	\sum_{\zu_\medial\in\Z^2_\medial}
	\ddeebar \zu_\medial^{[-k]}
	\pddee\Fieldtheta (\zu_\medial)
	\,+\,
	\Null
	\,\eqqcolon\,
	\RepHolEta{-k}
	\,+\,
	\Null\,,
	\mspace{20mu}
	\text{and}
	\\
	\AntiHolCurrModetheta{-k} \fDGFFGround
	\,=\, &
	\frac{1}{2\pi}
	\sum_{\zu_\medial\in\Z^2_\medial}
	\ddee \overline{\zu}_\medial^{[-k]}
	\pddeebar\Fieldtheta (\zu_\medial)
	\,+\,
	\Null
	\,\eqqcolon\,
	\RepAntiHolEta{-k}
	\,+\,
	\Null
\end{align*}
are linear local fields.
That is, they have linear field polynomial representatives.
\hfill$\diamond$
\end{ex}

\begin{lemma}
\label{lemma: linear basis}
The set
\begin{align*}
	\big\{
	\HolCurrModexi{-k} \fDGFFGround,
	\AntiHolCurrModexi{-k} \fDGFFGround,	\HolCurrModetheta{-k} \fDGFFGround,
	\AntiHolCurrModetheta{-k} \fDGFFGround,
	\,\colon
	k\in\Zpos
	\big\}
	\,\cup\,
	\big\{
	\fDGFFGroundTheta,
	\fDGFFGroundXi
	\big\}
\end{align*}
constitutes a basis of the space $\LinearFields$.
\end{lemma}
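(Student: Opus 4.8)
The plan is to reduce the statement to a computation of discrete \emph{moments}, after first splitting the problem into two decoupled sectors. Write $\LinearFieldPoly = \LinearFieldPoly^{\fDGFFxi}\oplus\LinearFieldPoly^{\fDGFFtheta}$, where $\LinearFieldPoly^{\fDGFFxi}=\spn_\C\{\Fieldxi(\zu):\zu\in\Z^2\}$ and similarly for $\fDGFFtheta$. Since the fDGFF two-point function pairs only $\fDGFFxi$ with $\fDGFFtheta$ and satisfies $\langle\fDGFFxi\,\fDGFFxi\rangle=\langle\fDGFFtheta\,\fDGFFtheta\rangle=0$ (Remark~\ref{rmk: 2n pt function fDGFF}), a linear field polynomial is null precisely when its two sectors are separately null; hence $\LinearFields\cong\LinearFields^{\fDGFFxi}\oplus\LinearFields^{\fDGFFtheta}$, and by the symmetry $\fDGFFxi\leftrightarrow\fDGFFtheta$ it suffices to produce a basis of $\LinearFields^{\fDGFFxi}$. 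Throughout I identify $\sum_\zu a_\zu\Fieldxi(\zu)$ with the finitely supported function $a\colon\Z^2\to\C$.

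First I would characterize the null subspace inside $\LinearFieldPoly^{\fDGFFxi}$. Evaluating $\ev^{\ddomain{\meshsize}}_{\zd}\big(\sum_\zu a_\zu\Fieldxi(\zu)\big)$ against distant insertions and applying Wick's formula, the unique $\fDGFFxi$ of our polynomial must contract with a single distant $\fDGFFtheta(\zdbis)$, so every surviving term carries the factor $4\pi\sum_\zu a_\zu\Green_{\ddomain{\meshsize}}(\zd+\meshsize\zu,\zdbis)$. As $\zdbis$ and the domain range (we may keep $\meshsize$ fixed by scale invariance), the germs $\zu\mapsto\Green_{\ddomain{\meshsize}}(\zd+\meshsize\zu,\zdbis)$ realize the entire discrete harmonic functions on a neighborhood of $\supp a$. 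Every such function expands as $\sum_{n\ge0}\big(c_n\,\zu^{[n]}+d_n\,\overline{\zu}^{[n]}\big)$ --- discrete harmonic $=$ discrete holomorphic $+$ discrete antiholomorphic by $\ddee\ddeebar=\tfrac14\dlaplacian$ (Remark~\ref{rmk: factorisation laplacian}), together with Proposition~\ref{prop: monomials}. Consequently $a$ is null if and only if all moments $\mu_n(a):=\sum_\zu a_\zu\,\zu^{[n]}$ and $\bar\mu_n(a):=\sum_\zu a_\zu\,\overline{\zu}^{[n]}$ vanish for $n\ge0$ (with $\mu_0=\bar\mu_0$, since $\zu^{[0]}=\overline{\zu}^{[0]}=1$). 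By property~(3) of Proposition~\ref{prop: monomials} only finitely many moments of a finitely supported $a$ are nonzero, so the moment map embeds $\LinearFields^{\fDGFFxi}$ into the space of finitely supported sequences $(\mu_0;\mu_1,\mu_2,\dots;\bar\mu_1,\bar\mu_2,\dots)$.

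It then remains to check that the proposed elements map to a basis of this sequence space. For $\fDGFFGroundXi=\Fieldxi(0)+\Null$ one has $a=\delta_0$, and with the normalization $\zu^{[0]}=1$, $0^{[n]}=0$ for $n\ge1$ of the discrete monomials this gives $\mu_n=\bar\mu_n=\delta_{n,0}$, so $\fDGFFGroundXi$ realizes the degree-$0$ coordinate. For $k\ge1$ I would insert the explicit linear representative $\RepHolChi{-k}$ of $\HolCurrModexi{-k}\fDGFFGround$ from Example~\ref{ex: rep of linear basis} and compute its moments by discrete summation-by-parts (Section~\ref{subsubsec: discrete integration}) and the discrete residue formula (Proposition~\ref{prop: monomials}(7)); the pairing of the negative-power monomial $\zu^{[-k]}$ against $\zu^{[n]}$ collapses to a Kronecker delta, yielding $\mu_n(\HolCurrModexi{-k}\fDGFFGround)\propto\delta_{n,k}$ with nonzero constant and $\bar\mu_n(\HolCurrModexi{-k}\fDGFFGround)=0$, the latter vanishing reflecting $\{\AntiHolCurrModetheta{m},\HolCurrModexi{\ell}\}=0$ from Proposition~\ref{prop: anticommutation relations}. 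Symmetrically $\AntiHolCurrModexi{-k}\fDGFFGround$ realizes the antiholomorphic degree-$k$ coordinate. The proposed elements therefore map to nonzero multiples of the distinct coordinate vectors $\mu_0,\mu_1,\mu_2,\dots,\bar\mu_1,\bar\mu_2,\dots$; this correspondence is diagonal, hence simultaneously independent and spanning. Combined with injectivity of the moment map, the set is a basis of $\LinearFields^{\fDGFFxi}$, and assembling the $\fDGFFtheta$-sector (identical under $\fDGFFxi\leftrightarrow\fDGFFtheta$) yields the claimed basis of $\LinearFields$.

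The main obstacle is the null-space characterization of the second paragraph, i.e.\ the claim that the Green's-function germs exhaust all entire discrete harmonic germs near the insertion point. The inclusion ``image of $\dlaplacian$ $\subseteq$ null'' is immediate from Example~\ref{ex: Laplacian is null}, and since $\zu^{[n]}$ is discrete harmonic for $n\ge0$ (property~(6) of Proposition~\ref{prop: monomials}), self-adjointness of $\dlaplacian$ shows such fields have vanishing moments --- the easy half. The reverse implication, that vanishing of all discrete harmonic moments forces nullity, is the delicate point; it is precisely the linear-field statement established in the analogous settings of \cite{ABK-DGFF_local_fields} and \cite{AdameCarrillo-discrete_symplectic_fermions}, whose harmonic-germ realization argument I would adapt. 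Everything else is bookkeeping with the discrete residue formula.
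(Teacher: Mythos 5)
Your proposal is correct and takes essentially the same route as the paper: the paper's proof is a one-line reduction to the analogous bosonic statement (Theorem~5.1 of \cite{ABK-DGFF_local_fields}) together with the remark that the fermionic setting adds the non-zero-average fields $\fDGFFGroundXi$ and $\fDGFFGroundTheta$, and your moment/harmonic-germ argument --- including deferring the delicate realization step to that same reference --- is precisely the content of that cited proof. The sector splitting and the explicit diagonalization of the moment coordinates are just the natural fleshing-out of what the paper leaves implicit.
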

\begin{proof}
The proof follows the same lines as the proof of Theorem~5.1 in \cite{ABK-DGFF_local_fields} with the addition of the non-zero average linear fields $\fDGFFGroundTheta$ and $\fDGFFGroundXi$.
\end{proof}

In a similar fashion as in \cite{ABK-DGFF_local_fields} the previous lemma leads to a complete classification of the linear null fields of the fDGFF.

\begin{coro}
\label{coro: charac null linear}
The set
\begin{align*}
    \big\{
    \dlaplacian \Fieldxi ( \zu ) ,
    \dlaplacian \Fieldtheta ( \zu )
    \,\colon
    \zu \in \Z^2
    \big\}
\end{align*}
spans the subspace $\NullFieldPoly \cap \LinearFieldPoly$ of null linear local field polynomials.
\end{coro}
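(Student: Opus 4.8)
The plan is to establish the two inclusions in $\NullFieldPoly \cap \LinearFieldPoly = L$, where I abbreviate $L \coloneqq \spn_\C\{\dlaplacian\Fieldxi(\zu),\dlaplacian\Fieldtheta(\zu) : \zu\in\Z^2\}$. The inclusion $L \subseteq \NullFieldPoly \cap \LinearFieldPoly$ is immediate: each $\dlaplacian\Fieldxi(\zu)$ and $\dlaplacian\Fieldtheta(\zu)$ is a linear field polynomial, and its nullity is exactly the content of Example~\ref{ex: Laplacian is null}. The whole work lies in the reverse inclusion $\NullFieldPoly \cap \LinearFieldPoly \subseteq L$.

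For the reverse inclusion I would first isolate the structural fact that the representatives chosen in Examples~\ref{ex: fermion linear rep} and~\ref{ex: rep of linear basis} ---that is $\RepHolChi{0}, \RepHolEta{0}$ together with $\RepHolChi{-k}, \RepAntiHolChi{-k}, \RepHolEta{-k}, \RepAntiHolEta{-k}$ for $k \in \Zpos$--- span $\LinearFieldPoly$ modulo $L$: every linear field polynomial equals a finite combination of these representatives plus an element of $L$. Granting this, the corollary follows by linear algebra. Given a null linear field polynomial $P$, I decompose $P = R + \ell$ with $R$ in the span of the representatives and $\ell \in L$. Since $\ell$ is null and $\NullFieldPoly$ is a vector space, $R = P - \ell$ is null as well, so $R + \Null = 0$ in $\LinearFields$. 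But the classes of the representatives are precisely the basis of $\LinearFields$ produced by Lemma~\ref{lemma: linear basis}, so their linear independence forces every coefficient of $R$ to vanish; hence $R = 0$ and $P = \ell \in L$.

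The substantive step is thus the spanning claim $\LinearFieldPoly = \spn_\C\{\text{representatives}\} + L$, and this is where discrete complex analysis is needed. Since the generators $\Fieldxi(\zu)$ and $\Fieldtheta(\zu)$ do not mix, I would argue sector by sector; each sector reduces to the statement that, modulo the image of the discrete Laplacian, finitely supported linear combinations of the generators are exhausted by the current representatives built from the discrete monomials of Proposition~\ref{prop: monomials}. Concretely, using the factorisation $\dlaplacian = 4\,\ddee\ddeebar$ from Remark~\ref{rmk: factorisation laplacian} together with the discrete residue and Stokes calculus, one extracts from an arbitrary linear field polynomial its coefficients against the monomials $\zu^{[-k]}$ and $\overline\zu^{[-k]}$, subtracts the matching combination of $\RepHolChi{-k}, \RepAntiHolChi{-k}$ (respectively $\RepHolEta{-k}, \RepAntiHolEta{-k}$) and the zeroth representatives $\RepHolChi{0} = \Fieldxi(0)$, $\RepHolEta{0} = \Fieldtheta(0)$, and then certifies that the remainder lies in $L$. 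This is exactly the computation performed in the proof of Lemma~\ref{lemma: linear basis} following \cite[Theorem~5.1]{ABK-DGFF_local_fields}. I expect the main obstacle to be precisely this last certification: nullity a priori permits a space larger than $L$, so one must genuinely verify via discrete potential theory that a null linear field polynomial with its discrete-harmonic content removed is an honest finite combination of lattice Laplacians, thereby upgrading the ``spanning modulo $\NullFieldPoly$'' that Lemma~\ref{lemma: linear basis} asserts on its face to the sharper ``spanning modulo $L$''.
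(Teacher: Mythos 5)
Your proposal is correct and takes essentially the same approach as the paper, which obtains the corollary directly from Lemma~\ref{lemma: linear basis}: the proof of that lemma (following \cite{ABK-DGFF_local_fields}) establishes spanning by the representatives modulo \emph{explicit} lattice Laplacians, not merely modulo $\NullFieldPoly$, and your two-inclusion argument then closes via linear independence exactly as intended. Your explicit observation that one must upgrade ``spanning modulo $\NullFieldPoly$'' to ``spanning modulo the Laplacian span'' is precisely the point the paper leaves implicit in the phrase ``in a similar fashion as in \cite{ABK-DGFF_local_fields}''.
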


\vspace{-10pt}
\subsection{Higher-degree local fields}
\label{subsec: higher fields}
We now define a way of obtaining higher-degree field polynomials as a normally-ordered product of linear field polynomials,
which we have completely characterised in Section~\ref{subsec: lin field poly}.
To address this task, we need the \textbf{full-plane square-grid Green's function} ---\cite{LL-random_walk}---,
that is, the unique function $\Green_{\Z^2} \colon \Z^2\longrightarrow \R$
that satisfies
\begin{align*}
    \dlaplacian \Green_{\Z^2} (\zu)  \, = \, - \delta_{\zu,0}
    \mspace{20mu}
    \text{ and }
    \mspace{20mu}
    \Green_{\Z^2} (0) \, =  \, 0\,,
\end{align*}
and has the asymptotic behaviour
\begin{align*}
    \mspace{150mu}
    \Green_{\Z^2}(\zu) & \, = 
    -
    \frac{1}{2\pi} \log \vert \zu \vert
    -
    \frac{\mathsf{C}}{2\pi}
    +
    \mathcal{O} \big( \vert \zu \vert^{-2} \big)
    \mspace{50mu}
    \text{ as }
    \mspace{5mu}
    \vert \zu \vert \to \infty\,,
\end{align*}
with $\mathsf{C} = \gamma + \frac{3}{2} \log 2$, where $\gamma$ is the Euler--Mascheroni constant.

Via the full plane Green's function~$\Green_{\Z^2}$, we define the \term{Wick contractions} of linear field polynomials
\begin{align*}
\LinearFieldPoly \otimes \LinearFieldPoly \; \longrightarrow \; &\, \C \\
L_1 \otimes L_2 \; \longmapsto \; &\, \wick{\c L_1 \c L_2}
\end{align*}
by bilinear extension of the formulae
\begin{align*}
    \wick{\c \Fieldxi(\zu) \c \Fieldtheta(\zubis)}
    \, \coloneqq \,
    4\pi\,\Green_{\Z^2} (\zu-\zubis)
    \, \eqqcolon \,
    -\wick{\c \Fieldtheta(\zu) \c \Fieldxi(\zubis)}
    \mspace{30mu}
    \text{and}
    \mspace{30mu}
    \wick{\c \Fieldtheta(\zu) \c \Fieldtheta(\zubis)}
    \, = \,
    \wick{\c \Fieldxi(\zu) \c \Fieldxi(\zubis)}
    \, = \,
    0
\end{align*}
for $\zu,\zubis\in\Z^2$.
Note that Wick contractions are antisymmetric, i.e.~we have $\wick{\c L_1 \, \c L_2} = - \wick{\c L_2 \, \c L_1}$.

\begin{ex}\label{ex: Wick cont of Lap}
We have $\wick{\dlaplacian\c\Fieldxi(\zu) \c\Fieldtheta(\zubis)}
= \wick{\c\Fieldxi(\zu) \dlaplacian\c\Fieldtheta(\zubis)}
  = -4 \pi \delta_{\zu,\zubis}$.
In particular, we have
\begin{align*}
    \BigfDGFFCorrFun{\ddomain{\meshsize}}{
    \ev^{\ddomain{\meshsize}}_{\zd}
    \big( \dlaplacian\Fieldxi(\zu) \Fieldtheta(\zubis)\big)\,}
    - \wick{ \dlaplacian \c\Fieldxi(\zu) \c\Fieldtheta(\zubis)}
    \,=\,
    0
\end{align*}
and 
\begin{align*}
    \BigfDGFFCorrFun{\ddomain{\meshsize}}{
    \ev^{\ddomain{\meshsize}}_{\zd}
    \big( \Fieldxi(\zu)\dlaplacian\Fieldtheta(\zubis)\big)\,}
    - \wick{\c\Fieldxi(\zu) \dlaplacian\c\Fieldtheta(\zubis)}\,
    \,=\,
    0\,,
\end{align*}
for any discrete domain~$\ddomain{\meshsize}$ and point $\zd\in\ddomain{\meshsize}$.
\hfill
$\diamond$
\end{ex}

\term{Normal ordering} is a linear map $\FieldPoly \longrightarrow \FieldPoly$.
However, we conceive it as a linear map
\begin{align*}
    \no{\cdots} \colon
    \bigoplus_{d=0} \big(\LinearFieldPoly\big)^{\otimes d}
    \longrightarrow
    \FieldPoly
\end{align*}
by the natural identification of $\bigoplus_{d=0} \big(\LinearFieldPoly\big)^{\otimes d}$ with $\FieldPoly$.
It is defined as follows:
for linear field polynomials $L_1,\cdots,L_n\in\LinearFieldPoly$,
we define
\begin{align*}
    \no{ L_1 \cdots L_n }
    \coloneqq
    \sum_{P\in\Pair_n}
    (-1)^P
    (-1)^{\vert P \vert}
    \prod_{(i,j)\in P}
    \wick{\c L_i \c L_j}
    \prod_{k\in [n]\setminus \cup P}^{\rightarrow}
    L_k
\end{align*}
where $\Pair_n$ is the set of (ordered) \term{partial pairings} of the set $[n]=\{1,\ldots,n\}$, that is, an element of this set is $P=\big\{(i_1,j_1),\ldots,(i_{\vert P \vert},j_{\vert P \vert})\big\}$ with $i_p < j_p$ and such that the set $\cup P \coloneqq \{i_1,j_1,\ldots,i_{\vert P \vert},j_{\vert P \vert} \}$ is of order $2\vert P\vert $.
We also write $[n]\setminus\cup P = \{ k_1,\ldots,k_{n-\vert P \vert}\}$ with $k_\alpha < k_\beta$ for $\alpha < \beta$.
Then, we let $(-1)^P$ denote the signature of the permutation $(1\;2\;\cdots\; n)\mapsto(i_1\  j_1 \, \cdots \, i_{\vert P \vert}\  j_{\vert P \vert}\  k_1\, \cdots \, k_{n-\vert P \vert} )$.
Finally, the notation $\prod^\rightarrow$ indicates the product is taken in increasing order of the indices, that is, we have
$\prod_{k\in [n]\setminus \cup P}^{\rightarrow}L_k = L_{k_1}\,L_{k_2}\cdots L_{k_{n-\vert P \vert}}$.

\begin{ex}
\label{ex: norm ord of two fields}
Given two linear field polynomials $L_1,L_2\in\LinearFieldPoly$, their normal ordering is given by $\no{ L_1 \, L_2 } = L_1 \, L_2 - \wick{\c L_1 \, \c L_2}$.
In particular, by Example~\ref{ex: Wick cont of Lap}, we have
\begin{align*}
    \BigfDGFFCorrFun{\ddomain{\meshsize}}{
    \ev^{\ddomain{\meshsize}}_{\zd}
    \big( \no{\dlaplacian\Fieldxi(\zu)\Fieldtheta(\zubis)}\big)\,}
    \,=\, 0
    \mspace{50mu}
    \textnormal{ and }
    \mspace{50mu}
    \BigfDGFFCorrFun{\ddomain{\meshsize}}{
    \ev^{\ddomain{\meshsize}}_{\zd}
    \big( \no{\Fieldxi(\zu)\dlaplacian\Fieldtheta(\zubis)}\big)\,}
    \,=\, 0
\end{align*}
for any discrete domain~$\ddomain{\meshsize}$ and point $\zd\in\ddomain{\meshsize}$.
\hfill$\diamond$
\end{ex}

\begin{rmk}
\label{rmk: recursive formula norm ord}
The normal ordering of $n>2$ linear field polynomials $L_1,\ldots,L_n\in\LinearFieldPoly$ can be inductively proven to satisfy the formula
\begin{align*}
    \no{ L_n \, L_{n-1} \cdots L_1 }
    \,=\,
    L_n \, \no{ L_{n-1} \cdots L_1 }
    \,+\,
    \sum_{i=1}^{n-1}
    (-1)^i\,
    \wick{\c L_n \, \c L_{n-i}}\,
    \no{ L_{n-1} \cdots \widehat{L}_{n-i} \cdots L_1 }\,,
\end{align*}
where the hat ($\,\widehat{\phantom{m}}\;\!$) indicates the factor is not in the product.
\hfill$\diamond$
\end{rmk}

\begin{rmk}
\label{rmk: norm ord surjective}
The map $\no{\cdot}\colon \FieldPoly\longrightarrow\FieldPoly$ is surjective.
To see this, note that, for a field polynomial~$P$ of degree~$d$, the field polynomial $\no{P}-P$ is of degree at most $d-2$.
That is, the map $\no{\cdot}$ is upper triangular with $1$s on the diagonal.
\hfill$\diamond$
\end{rmk}

\begin{rmk}\label{rmk: norm ord factors through null fields}
Let $L_1,\ldots,L_n\in\LinearFieldPoly$ be linear field polynomials.
If $L_i$ is null for some $i\in\{1,\ldots,n\}$, then $\no{L_1 \cdots L_n}$ is a null field polynomial too.
To prove this, one first needs to expand the expression of the correlation function
\begin{align*}
    \BigfDGFFCorrFun{\ddomain{\meshsize}}{
    \ev^{\ddomain{\meshsize}}_{\zd}
    \big( \no{L_1 \cdots L_n} \big)\,
    \fDGFFxi(\zd_1)\,\fDGFFtheta(\zdbis_1)
    \cdots
    \fDGFFxi(\zd_m)\,\fDGFFtheta(\zdbis_m)
    }
\end{align*}
by means of Wick's formula ---Remark~\ref{rmk: 2n pt function fDGFF}--- and the definition of normal ordering.
The proof eventually boils down to the observation that we have
\begin{align*}
    \BigfDGFFCorrFun{\ddomain{\meshsize}}{
    \ev^{\ddomain{\meshsize}}_{\zd}
    \big( \no{L_1L_2}\big)\,}
    \,=\,0
\end{align*}
whenever $L_1$ or $L_2$ are null linear fields---recall the characterisation of null linear fields in Corollary~\ref{coro: charac null linear} and the Example~\ref{ex: norm ord of two fields}.
We refer the reader to \cite[Lemma~6.2]{ABK-DGFF_local_fields} for a detailed proof of this statement in the bosonic setting.
\hfill$\diamond$
\end{rmk}

In other words, Remark~\ref{rmk: norm ord factors through null fields} asserts that the map $\no{\cdots}$ factors through the quotient with null fields.
In particular, it means that,
for $F_i\in\LinearFields$ linear local fields with linear representatives $L_i\in\LinearFieldPoly$ ---that is, we have $F_i = L_i + \Null$---,
the map
\begin{align*}
    F_1 \otimes \cdots \otimes F_n
    \;\,\longmapsto\;
    \no{L_1\,\cdots\,L_n} \,+\, \Null
    \,\eqqcolon\,
    \noQuo{F_1 \cdots F_n}
\end{align*}
is well-defined, that is, the normally ordered product does not depend on the choice of linear representatives.

The following lemma is the last piece we need for the proof of Theorem~\ref{thm: isomorphism}.
Given the basis~\eqref{eq: basis}, it essentially asserts that all local fields in $\Fields$ can be constructed as the normally-ordered product of enough linear fields.

\begin{lemma}
\label{lemma: normal order of basis}
Denote by $T$ the tensor product
\begin{align*}
\HolCurrModexi{-k_1}\fDGFFGround
\otimes \cdots \otimes
\HolCurrModexi{-k_n}\fDGFFGround
\otimes
\HolCurrModetheta{-\ell_1}\fDGFFGround
\otimes \cdots \otimes
\HolCurrModetheta{-{\ell}_m}\fDGFFGround 
\otimes
\AntiHolCurrModexi{-k_1}\fDGFFGround
\otimes \cdots \otimes
\AntiHolCurrModexi{-k_{\overline{n}}}\fDGFFGround
\otimes
\AntiHolCurrModetheta{-\overline{\ell}_1}\fDGFFGround
\otimes \cdots \otimes
\AntiHolCurrModetheta{-\overline{\ell}_{\overline{m}}}\fDGFFGround
\end{align*}
of linear fields.
Then we have
\begin{align*}
    \noQuo{\,T\,}
    \,=\, & \phantom{\Big\vert}
    \HolCurrModexi{-k_1}
    \cdots
    \HolCurrModexi{-k_n}
    \HolCurrModetheta{-\ell_1}
    \cdots
    \HolCurrModetheta{-{\ell}_m} 
    \AntiHolCurrModexi{-k_1}
    \cdots
    \AntiHolCurrModexi{-k_{\overline{n}}}
    \AntiHolCurrModetheta{-\overline{\ell}_1}
    \cdots
    \AntiHolCurrModetheta{-\overline{\ell}_{\overline{m}}}\fDGFFGround\,,
    \\
    \noQuo{\,T \otimes { \fDGFFGroundTheta}\,}
    \,=\, & \phantom{\Big\vert}
    \HolCurrModexi{-k_1}
    \cdots
    \HolCurrModexi{-k_n}
    \HolCurrModetheta{-\ell_1}
    \cdots
    \HolCurrModetheta{-{\ell}_m} 
    \AntiHolCurrModexi{-k_1}
    \cdots
    \AntiHolCurrModexi{-k_{\overline{n}}}
    \AntiHolCurrModetheta{-\overline{\ell}_1}
    \cdots
    \AntiHolCurrModetheta{-\overline{\ell}_{\overline{m}}}
    \fDGFFGroundTheta\,
    \\
    \noQuo{\,T \otimes {\fDGFFGroundXi}\,}
    \,=\, & \phantom{\Big\vert}
    \HolCurrModexi{-k_1}
    \cdots
    \HolCurrModexi{-k_n}
    \HolCurrModetheta{-\ell_1}
    \cdots
    \HolCurrModetheta{-{\ell}_m} 
    \AntiHolCurrModexi{-k_1}
    \cdots
    \AntiHolCurrModexi{-k_{\overline{n}}}
    \AntiHolCurrModetheta{-\overline{\ell}_1}
    \cdots
    \AntiHolCurrModetheta{-\overline{\ell}_{\overline{m}}}
    \fDGFFGroundXi\,,
    \mspace{15mu} \text{ and }
    \\
    \noQuo{\,T \otimes { \fDGFFGroundTheta \otimes \fDGFFGroundXi}\,}
    \,=\, & \phantom{\Big\vert}
    \HolCurrModexi{-k_1}
    \cdots
    \HolCurrModexi{-k_n}
    \HolCurrModetheta{-\ell_1}
    \cdots
    \HolCurrModetheta{-{\ell}_m} 
    \AntiHolCurrModexi{-k_1}
    \cdots
    \AntiHolCurrModexi{-k_{\overline{n}}}
    \AntiHolCurrModetheta{-\overline{\ell}_1}
    \cdots
    \AntiHolCurrModetheta{-\overline{\ell}_{\overline{m}}}
    \fDGFFGroundPartner\,.
\end{align*}
\end{lemma}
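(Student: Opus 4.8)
The plan is to prove the four displayed identities simultaneously, by induction on the number of current-mode factors, after isolating a single \emph{insertion} identity. Throughout, all current-mode indices are understood to be strictly positive: the index-$0$ generators $\HolCurrModexi{0},\HolCurrModetheta{0}$ do not appear in $T$ --- a one-line Stokes computation gives $\HolCurrModexi{0}\fDGFFGround=\HolCurrModetheta{0}\fDGFFGround=0$, so they would contribute a vanishing factor --- their effect being recorded instead in the choice of seed through the relations $\HolCurrModexi{0}\fDGFFGroundPartner=-\fDGFFGroundXi$ and $\HolCurrModetheta{0}\fDGFFGroundPartner=-\fDGFFGroundTheta$ of Remark~\ref{rmk: action currents on ground states}. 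Write $\RepHolChi{-k},\RepHolEta{-k},\RepAntiHolChi{-k},\RepAntiHolEta{-k}$ (with $k\geq1$) for the linear representatives of $\HolCurrModexi{-k}\fDGFFGround$, etc.\ fixed in Example~\ref{ex: rep of linear basis}, and $\RepHolChi{0}=\Fieldxi(0)$, $\RepHolEta{0}=\Fieldtheta(0)$ for those of $\fDGFFGroundXi,\fDGFFGroundTheta$. Since $\noQuo{\cdots}$ factors through null fields (Remark~\ref{rmk: norm ord factors through null fields}), every normally-ordered product below may be evaluated on these representatives.

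The crux is the following \emph{insertion identity}: for a current mode $M\in\{\HolCurrModexi{-k},\HolCurrModetheta{-k},\AntiHolCurrModexi{-k},\AntiHolCurrModetheta{-k}\}$ with $k\geq1$ and linear representative $R_M$, and for any tensor product $G=L_1\otimes\cdots\otimes L_p$ of linear representatives,
\begin{align*}
    M\,\noQuo{L_1\cdots L_p}
    \,=\,
    \noQuo{R_M\,L_1\cdots L_p}\,.
\end{align*}
Granting this, the four identities follow by peeling the modes off the left one at a time, the induction terminating at the four seeds: the empty product yields $\fDGFFGround$, the single seeds yield $\noQuo{\fDGFFGroundTheta}=\fDGFFGroundTheta$ and $\noQuo{\fDGFFGroundXi}=\fDGFFGroundXi$, and the double seed yields $\noQuo{\fDGFFGroundTheta\,\fDGFFGroundXi}=\fDGFFGroundPartner$. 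This last equality uses $\wick{\c\Fieldtheta(0)\c\Fieldxi(0)}=-4\pi\Green_{\Z^2}(0)=0$, the Grassmann sign $\Fieldtheta(0)\Fieldxi(0)=-\Fieldxi(0)\Fieldtheta(0)$, and the definition of $\fDGFFGroundPartner$ in Example~\ref{ex: rep of ground states}.

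To prove the insertion identity I would expand its right-hand side with the recursive formula of Remark~\ref{rmk: recursive formula norm ord}, which separates a bare term $R_M\cdot\noQuo{L_1\cdots L_p}$ from a sum of single Wick contractions $\sum_i(-1)^i\wick{\c R_M \c L_i}\noQuo{L_1\cdots\widehat{L_i}\cdots L_p}$, and then compute the left-hand side from the contour definition \eqref{eq: chi mode}--\eqref{eq: eta bar mode}. Deforming the defining corner contour inwards with the discrete Stokes formula (Section~\ref{subsubsec: discrete integration}) splits $M\,\noQuo{L_1\cdots L_p}$ into a medial sum weighted by $\ddeebar\zu^{[-k]}$ and a diamond sum weighted by $\zu^{[-k]}$. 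By the representative computation of Example~\ref{ex: rep of linear basis}, the medial sum reassembles exactly into the bare term $R_M\cdot\noQuo{L_1\cdots L_p}$. The diamond sum carries the factor $\ddeebar\holcurrxi(\zu_\diamond)=\tfrac12\dlaplacian\Fieldxi(\zu_\primary)$ (Remark~\ref{rmk: factorisation laplacian}); it is null away from the support (Example~\ref{ex: Laplacian is null}) and, on the support, collapses through the contraction identity extending Example~\ref{ex: cubic null field} into a sum of single contractions of $\Fieldxi(\zu)$ against the $\Fieldtheta$-content of the factors $L_i$ (and symmetrically for the $\eta$, $\bar\chi$, $\bar\eta$ modes, using $\ddee\antiholcurrxi=\tfrac12\dlaplacian\Fieldxi$ and the conjugate Stokes formula for the antiholomorphic modes).

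The main obstacle --- and the only genuinely computational point --- is to identify this diamond contraction term with the Wick-contraction sum of the recursion, signs included. Writing the relevant contraction through the full-plane Green's function, $\wick{\dlaplacian\c\Fieldxi(\zu)\c\Fieldtheta(\zubis)}=4\pi\,\dlaplacian_\zu\Green_{\Z^2}(\zu-\zubis)=-4\pi\,\delta_{\zu,\zubis}$ (Example~\ref{ex: Wick cont of Lap}), one transfers the discrete Laplacian off $\Green_{\Z^2}$ by discrete summation by parts --- legitimate since $\zu^{[-k]}\to0$ at infinity (Proposition~\ref{prop: monomials}) --- and uses $\dlaplacian=4\ddee\ddeebar$ together with $\ddee\zu^{[n]}=n\zu^{[n-1]}$ to reorganize the localized weight $\zu^{[-k]}$ into the nonlocal combination $\tfrac1{2\pi}\ddeebar\zu^{[-k]}$ defining $R_M$; this is precisely the double-discrete-derivative-of-$\Green_{\Z^2}$ form of $\wick{\c R_M \c L_i}$. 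The Grassmann reordering signs are tracked exactly as in the partial-pairing bookkeeping underlying Remark~\ref{rmk: recursive formula norm ord}. For a mixed holomorphic/antiholomorphic insertion the corresponding contraction term vanishes, which is the same cancellation already established for Proposition~\ref{prop: anticommutation relations} via the vanishing of $\dsqint\dd\overline{\zubis}\,\overline{\zubis}^{[\ell]}\pddeebar\zubis^{[k]}$. The bosonic analogue of this entire computation is carried out in \cite[Lemma~6.2]{ABK-DGFF_local_fields}, which the present fermionic case follows step by step modulo the antisymmetry of the Grassmann product.
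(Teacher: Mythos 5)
Your proposal is correct and follows essentially the same route as the paper: the \emph{insertion identity} is exactly the paper's inductive step, obtained by comparing the recursive normal-ordering expansion (Remark~\ref{rmk: recursive formula norm ord}) with the Stokes-split action of a current mode, and your summation-by-parts-with-decaying-boundary argument for matching the diamond contraction term against $\wick{\c R_M\,\c L_i}$ is precisely the content of the paper's Lemma~\ref{lemma: technical} (contour-independence plus the $\mathcal{O}(R^{-k})$ decay of $\zu^{[-k]}\,\pddee\Green_{\Z^2}$). The treatment of the seeds via $\Green_{\Z^2}(0)=0$ and of the mixed holomorphic/antiholomorphic contractions also matches the paper's argument, so no gap remains.
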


However, in the proof of this lemma, we use the following rather technical result regarding the linear local fields ---rather, their representatives--- in Examples~\ref{ex: rep of linear basis} and \ref{ex: fermion linear rep}.

\begin{lemma}
\label{lemma: technical}
Let $\gamma$ be a large enough corner contour encircling the origin.
We have,
\begin{align*}
    \sqint_\gamma
    \zu_\diamond^{[-k]} \,
    \wick{ \pddee \c\Fieldxi (\zu_\medial) \c\RepHolEtaNoIndex_{-\ell} }
    \,\dd \zu
    \,=\; 0 \; =\,
    \sqint_\gamma
    \overline{\zu}_\diamond^{[-k]} \,
    \wick{ \pddeebar \c\Fieldxi (\zu_\medial) \c\RepHolEtaNoIndex_{-\ell} }
    \,\dd \zu
\end{align*}
and
\begin{align*}
    \sqint_\gamma
    \zu_\diamond^{[-k]} \,
    \wick{ \pddee \c\Fieldtheta (\zu_\medial) \c\RepHolChiNoIndex_{-\ell} }
    \, \dd \zu
    \,=\; 0 \; = \,
    \sqint_\gamma
    \overline{\zu}_\diamond^{[-k]} \,
    \wick{ \pddeebar \c\Fieldtheta (\zu_\medial) \c\RepHolChiNoIndex_{-\ell} }
    \,\dd \zu
\end{align*}
for $k > 0$ and $\ell \geq 0$.
We also have
\begin{align*}
    \sqint_\gamma
    \zu_\diamond^{[-k]} \,
    \wick{ \pddee \c\Fieldxi (\zu_\medial) \c\RepAntiHolEtaNoIndex_{-\ell} }
    \,\dd \zu
    \,=\; 0 \; =\,
    \sqint_\gamma
    \overline{\zu}_\diamond^{[-k]} \,
    \wick{ \pddeebar \c\Fieldxi (\zu_\medial) \c\RepAntiHolEtaNoIndex_{-\ell} }
    \,\dd \zu
\end{align*}
and
\begin{align*}
    \sqint_\gamma
    \zu_\diamond^{[-k]} \,
    \wick{ \pddee \c\Fieldtheta (\zu_\medial) \c\RepAntiHolChiNoIndex_{-\ell} }
    \, \dd \zu
    \,=\; 0 \; = \,
    \sqint_\gamma
    \overline{\zu}_\diamond^{[-k]} \,
    \wick{ \pddeebar \c\Fieldtheta (\zu_\medial) \c\RepAntiHolChiNoIndex_{-\ell} }
    \,\dd \zu
\end{align*}
for $k > 0$ and $\ell > 0$.
\end{lemma}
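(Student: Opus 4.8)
The plan is to substitute the explicit linear representatives of Example~\ref{ex: rep of linear basis} into each integrand and use the bilinearity of the Wick contraction together with the rule $\wick{\c\Fieldxi(\zu)\c\Fieldtheta(\zubis)} = 4\pi\Green_{\Z^2}(\zu-\zubis)$ to rewrite every integrand as a discrete second derivative of the full-plane Green's function $\Green_{\Z^2}$, weighted by discrete monomials and summed over medial points $\zubis_\medial$. A first useful observation is that these sums are in fact \emph{finite}: by Proposition~\ref{prop: monomials} the weights $\ddeebar\zubis^{[-\ell]}$ (and their conjugates $\ddee\overline{\zubis}^{[-\ell]}$) are supported in a bounded neighbourhood of the origin for $\ell>0$, and vanish identically when $\ell=0$ because $\zubis^{[0]}=1$. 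In particular the $\ell=0$ instances of the first two displays are trivially zero, so only $\ell\geq1$ remains in every case.

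Before computing I would cut the eight integrals down to two genuinely different cases. The complex-conjugation symmetry $\overline{\zu}^{[n]}=\overline{\zu^{[n]}}$ of the monomials (Proposition~\ref{prop: monomials}) pairs each $\pddee$/$\dd\zu$ integral with its $\pddeebar$/$\dd\overline{\zu}$ counterpart, and the $\Fieldxi\leftrightarrow\Fieldtheta$ antisymmetry of the contraction relates the $\RepHolEtaNoIndex$ statements to the $\RepHolChiNoIndex$ ones (and likewise $\RepAntiHolEtaNoIndex$ to $\RepAntiHolChiNoIndex$ in the barred sector). Thus it suffices to handle one \emph{same-sector} integral, say $\sqint_\gamma \zu^{[-k]}_\diamond\,\wick{\pddee\c\Fieldxi(\zu_\medial)\c\RepHolEtaNoIndex_{-\ell}}\,\dd\zu$, and one \emph{mixed-sector} integral, say the one involving $\RepAntiHolEtaNoIndex_{-\ell}$.

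In both cases the integrand is a discrete derivative of $\Green_{\Z^2}$, which is discrete harmonic away from the origin; the Laplacian factorisations $\ddeebar\pddee=\tfrac12\dlaplacian$ and $\ddee\ddeebar=\tfrac14\dlaplacian$ of Remark~\ref{rmk: factorisation laplacian}, together with $\dlaplacian\Green_{\Z^2}=-\delta$, then confine every failure of discrete holomorphicity to a bounded neighbourhood of the origin. For the mixed sector this is especially strong: the contraction $\wick{\pddee\c\Fieldxi(\zu_\medial)\pddeebar\c\Fieldtheta(\zubis_\medial)}$ is the discrete avatar of $\partial\overline{\partial}\Green=\tfrac14\Delta\Green$, hence supported within bounded lattice distance of the diagonal $\zu=\zubis$; combined with the bounded support of the weights it produces a function of $\zu$ supported near the origin, so for a large enough corner contour $\gamma$ the integrand vanishes at every vertex of $\gamma$ and the discrete integral (a finite sum over the steps of $\gamma$) is zero. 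For the same-sector integral the contraction $\wick{\pddee\c\Fieldxi(\zu_\medial)\pddee\c\Fieldtheta(\zubis_\medial)}$ is instead discrete holomorphic in $\zu$ off the origin and decays at infinity like $\zu^{-\ell-1}$; I would apply the discrete Stokes' formula (Section~\ref{subsubsec: discrete integration}) to make the integral independent of the (large) contour and localise it to the finitely many points near the origin where $\ddeebar\zu^{[-k]}$ and $\ddeebar$ of the contraction are concentrated, and then evaluate the localised sum by the discrete residue formula (Proposition~\ref{prop: monomials}), obtaining a multiple of $\Kronecker_{k+\ell}$, which vanishes for $k\geq1$, $\ell\geq0$.

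The main obstacle is the same-sector localisation: one must track the discrete corrections to $\Green_{\Z^2}$ and to the monomials $\zu^{[-k]}$ near the origin (Proposition~\ref{prop: monomials}, parts 5--6) carefully enough to confirm that, after Stokes, everything cancels except the spurious $\Kronecker_{k+\ell}$. This is exactly the finite but tedious residue computation carried out in the bosonic setting in \cite[Proposition~4.2]{ABK-DGFF_local_fields} and reproduced at the end of the proof of Proposition~\ref{prop: anticommutation relations}, to which I would appeal rather than redo it by hand. The remaining work is routine lattice bookkeeping — keeping straight on which of the primary, medial, and diamond lattices each discrete derivative lives — which I would relegate to a reference to the analogous arguments in \cite{ABK-DGFF_local_fields}.
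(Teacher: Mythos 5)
Your first reduction is where the argument breaks. For $\ell=0$ the representative is \emph{not} given by the sum formula of Example~\ref{ex: rep of linear basis} with weight $\ddeebar\zu^{[0]}=0$; by Example~\ref{ex: fermion linear rep} it is $\RepHolChi{0}=\Fieldxi(0)$ and $\RepHolEta{0}=\Fieldtheta(0)$. Hence $\wick{ \pddee \c\Fieldtheta (\zu_\medial) \c\RepHolChiNoIndex_{0} }=-4\pi\,\pddee\Green_{\Z^2}(\zu_\medial)$, which is certainly not identically zero, so the $\ell=0$ instances are not ``trivially zero.'' This is exactly the case the paper works out, and by a different mechanism than anything in your proposal: the integrand is discrete holomorphic away from a bounded neighbourhood of the origin, so by the discrete Stokes' formula the integral is independent of the (large) contour; since $\zu^{[-k]}=\mathcal{O}(|\zu|^{-k})$ and $\pddee\Green_{\Z^2}(\zu)=\mathcal{O}(|\zu|^{-1})$, the integral over a square contour of sidelength $R$ is $\mathcal{O}(R^{-k})$, and being $R$-independent it must vanish. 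Your proposal contains no argument covering $\ell=0$.

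The $\ell>0$ cases have related problems. The mixed-sector contraction is \emph{not} compactly supported: $(\pddee)_{(1)}(\pddeebar)_{(2)}\Green_{\Z^2}(\zu_\medial-\zubis_\medial)$ is a mixed second difference taken at two \emph{separate} medial points, not the one-point composition $\ddee\ddeebar=\tfrac14\dlaplacian$, so it is not a lattice delta on the diagonal; after summing against the compactly supported weight one is left with a function decaying like $|\zu|^{-\ell-1}$ (essentially a $\pddee$-difference of $\overline{\zu}^{[-\ell]}$), which does not vanish on any large contour, so the ``integrand vanishes at every vertex of $\gamma$'' step fails. In the same-sector case the localised quantity has the form $\zu^{[-k]}_\diamond\,\pddee\zu^{[-\ell]}_\medial$ rather than $\zu^{[n]}_\diamond\zu^{[m]}_\medial$, and since $\pddee\neq\ddee$ the discrete residue formula of Proposition~\ref{prop: monomials} does not directly yield a multiple of $\Kronecker_{k+\ell}$. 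All four families of integrals are handled correctly, and more simply, by the single argument above: discrete holomorphicity of the integrand off a bounded set gives contour independence, and polynomial decay of the integrand at infinity then forces the integral to be zero. If you repair the $\ell=0$ case and replace the compact-support and residue steps by this decay argument, the proof closes.
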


\begin{proof}
We prove the case $\ell = 0$ for general $k > 0$.
Recall that we defined $X_0=\Fieldxi(0)$ in Example~\ref{ex: fermion linear rep}.
We have
\begin{align*}
    \sqint_\gamma
    \zu_\diamond^{[-k]} \,
    \wick{ \pddee \c\Fieldtheta (\zu_\medial) \c\RepHolChiNoIndex_{-\ell} }
    \, \dd \zu
    \, = \,
    -4\pi
    \sqint_\gamma
    \zu_\diamond^{[-k]} \,
    \pddee\Green_{\Z^2}(\zu_\medial)
    \, \dd \zu\,.
\end{align*}
Note that, away from the origin, both integrands are discrete holomorphic, so the above integral does not depend on the (large enough) discrete contour $\gamma$.
Recall that, as $\vert \zu \vert \to \infty$, we have $\zu^{[-k]}=\mathcal{O}\big(\vert\zu\vert^{-k}\big)$ and $\pddee\Green_{\Z^2}(\zu)=\mathcal{O}\big(\vert\zu\vert^{-1}\big)$.
Let $\gamma(R)$ be a discrete contour with a square shape of sidelength $R$.
The above integral has the asymptotic behaviour $\mathcal{O}\big(R^{-k}\big)$ as $ R \to \infty$, but, since it does not depend on $R$, taking the limit $R\to\infty$ we see that it  must vanish.
The case $\ell > 0$ can be handled by similar arguments---we refer the reader to \cite[Lemma~6.5]{ABK-DGFF_local_fields} for the details of these arguments in the bosonic setting.
\end{proof}

\begin{proof}[Proof of Lemma~\ref{lemma: normal order of basis}]
We write down the details only for the case $\overline n = \overline m = 0$;
the proof in the general case follows similar arguments.
We proceed by induction on $n+m$.
The linear base cases
\begin{align*}
    \noQuo{ \HolCurrModexi{-k} \fDGFFGround }
    \,=\,
    \HolCurrModexi{-k} \fDGFFGround\,,
    \mspace{40mu}
    \noQuo{ \HolCurrModetheta{-k} \fDGFFGround }
    \,=\,
    \HolCurrModetheta{-k} \fDGFFGround\,,
    \mspace{40mu}
    \noQuo{ \fDGFFGroundXi }
    \,=\,
    \fDGFFGroundXi\,,
    \mspace{20mu}
    \text{ and }
    \mspace{20mu}
    \noQuo{ \fDGFFGroundTheta }
    \,=\,
    \fDGFFGroundTheta
\end{align*}
are straightforward to check since normal ordering $\no{\cdots}$ is the identity map when restricted to linear field polynomials.
The value $\Green_{\Z^2}(0)=0$ implies that we also have
\begin{align*}
    \noQuo{ \fDGFFGroundTheta\otimes\fDGFFGroundXi }
    \,=\,
    \no{ \Fieldtheta(0)\Fieldxi(0)} + \Null
    \,=\,
    \Fieldtheta(0)\Fieldxi(0) + \,\Null
    \,=\,
    \fDGFFGroundPartner\,.
\end{align*}
For the rest of the proof we focus only on the last case ---that is, the $\noQuo{T\otimes\fDGFFGroundTheta\otimes\fDGFFGroundXi}$ case---, the other three can be handled in a similar manner.
By the induction hypothesis, we have
\begin{align*}
    \HolCurrModexi{-k_n}\cdots\HolCurrModexi{-k_1}
    \HolCurrModetheta{-\ell_m}\cdots\HolCurrModetheta{-{\ell}_1}
    \fDGFFGroundPartner
    =
    \no{ F \,}
    + \Null
\end{align*}
with 
$F \, \coloneqq \,
\RepHolChi{-k_n} \cdots \RepHolChi{-k_1}
\RepHolEta{-\ell_m} \cdots \RepHolEta{-\ell_1}
\RepHolEta{0} \, \RepHolChi{0}$\,,
where $\RepHolChi{k}$ and $\RepHolEta{\ell}$ are the field-polynomial representatives of the linear fields in Examples~\ref{ex: fermion linear rep} and \ref{ex: rep of linear basis}.
In what follows, we also use the notation
\begin{align*}
    F_0^{\hat \RepHolEtaNoIndex} &\; \coloneqq \,
    \RepHolChi{-k_n} \cdots \RepHolChi{-k_1}
    \RepHolEta{-\ell_m} \cdots \RepHolEta{-\ell_1}
    \RepHolChi{0}\,,
    % \\
    % F_0^{\hat \RepHolChiNoIndex} &\; \coloneqq \,
    % \RepHolChi{-k_n} \cdots  \RepHolChi{-k_1}
    % \RepHolEta{-\ell_m} \cdots \RepHolEta{-\ell_1}
    % \RepHolEta{0} \,,
    % \\
    % F_i^{\hat \RepHolChiNoIndex} &\; \coloneqq \,
    % \RepHolChi{-k_n} \cdots \widehat{\RepHolChiNoIndex}_{-k_i} \cdots \RepHolChi{-k_1}
    % \RepHolEta{-\ell_m} \cdots \RepHolEta{-\ell_1}
    % \RepHolEta{0} \, \RepHolChi{0}\,,
    \mspace{20mu}
    \text{ and }
    \\
    F_j^{\hat \RepHolEtaNoIndex} &\; \coloneqq \,
    \RepHolChi{-k_n} \cdots  \RepHolChi{-k_1}
    \RepHolEta{-\ell_m} \cdots \widehat{\RepHolEtaNoIndex}_{-\ell_j} \cdots \RepHolEta{-\ell_1}
    \RepHolEta{0} \, \RepHolChi{0}\,,
\end{align*}
where the hat ($\,\widehat{\phantom{m}}\;\!$) indicates the factor is not in the product.
We prove the inductive step for the induction in $n$---the induction in  $m$ can be handled almost identically.
Then, on the one hand, using the recursive formula in by Remark~\ref{rmk: recursive formula norm ord}, we have
\begin{align}
\label{eq: one hand}
    \no{
    \RepHolChi{-k_{n+1}} F \,
    }
    \;=\; 
    \RepHolChi{-k_{n+1}} \,
    \no{ F \, }
    & \,
    - \;
    \sum_{i=0}^{m-1} (-1)^{n+i} \wick{ \c \RepHolChiNoIndex_{-k_{n+1}} \c \RepHolEtaNoIndex_{-k_{m-i}} } \,
    \no{F^{\hat \RepHolEtaNoIndex}_{m-i}}
    \\
    \nonumber
    & \,
    - \;
    (-1)^{n+m}\,
    \wick{ \c \RepHolChiNoIndex_{-k_{n+1}} \c \RepHolEtaNoIndex_{ 0 } } \,
    \no{F^{\hat \RepHolEtaNoIndex}_{0}}\,,
\end{align}
where we have already removed the terms with vanishing Wick contractions of the type $\wick{\c X \c X}$.
On the other hand, by the definition of the current modes ---Proposition~\ref{prop: current modes}--- and using the discrete Stokes' formula ---Section~\ref{subsubsec: discrete integration}---, we have
\begin{align*}
    \HolCurrModexi{-k_{n+1}} \big( \no{F \,} + \Null\big)
    \;=\; & \,
    \RepHolChi{-k_{n+1}} \,
    \no{ F \, }
    \;+\;
    \frac{1}{2\pi}
    \sum_{\zu_\diamond}\zu_\diamond^{[-k_{n+1}]} \,\ddeebar\pddee\Fieldxi(\zu_\diamond) \no{ F \, }
    + \Null
    \,,
\end{align*}
where the sum runs over a big enough neighbourhood of the origin.
Using the recursive formula in Remark~\ref{rmk: recursive formula norm ord} again, we can massage the last expression into
\begin{align}
\label{eq: other hand}
    \HolCurrModexi{-k_{n+1}} \big( \no{F \,} + \Null\big)
    \;=\; & \,
    \RepHolChi{-k_{n+1}} \,
    \no{ F \, }
    \; + \;
    \frac{1}{2\pi}
    \sum_{\zu_\diamond}\zu_\diamond^{[-k_{n+1}]}
    \mspace{-40mu}
    \overbrace{
    \no{ \ddeebar\pddee\Fieldxi(\zu_\diamond) \phantom{\bigg\vert}\! F \, }}^{\text{null by Remark~\ref{rmk: factorisation laplacian} and Remark~\ref{rmk: norm ord factors through null fields}}}
    \\
    \nonumber
    &
    \mspace{30mu}
    \;+\;
    \frac{1}{2\pi}
    \sum_{\zu_\diamond}\zu_\diamond^{[-k_{n+1}]} \,
    \Bigg(
    \sum_{i=0}^{m-1} (-1)^{m+i}\,
    \wick{ \ddeebar\pddee\c \Fieldxi(\zu_\diamond) \c \RepHolEtaNoIndex_{-k_{m-i}} } \,
    \no{F^{\hat \RepHolEtaNoIndex}_{m-i}}
    \\
    \nonumber
    &
    \mspace{220mu}
    + \;
    (-1)^{n+m}\,
    \wick{ \ddeebar\pddee\c \Fieldxi(\zu_\diamond) \c \RepHolEtaNoIndex_{ 0 } } \,
    \no{F^{\hat \RepHolEtaNoIndex}_{0}}
    \Bigg)\,.
\end{align}
Comparing~\eqref{eq: one hand} and \eqref{eq: other hand} we see that
if we have
\begin{align}
\label{eq: key ingredient}
    \frac{1}{2\pi}
    \sum_{\zu_\diamond}\zu_\diamond^{[-k]} \,
    \wick{ \ddeebar\pddee\c \Fieldxi(\zu_\diamond) \c \RepHolEtaNoIndex_{-\ell} }
    \,+\,
    \wick{ \c \RepHolChiNoIndex_{-k} \c \RepHolEtaNoIndex_{-\ell } }
    \,=\, 0
\end{align}
for all $k>0$ and $\ell\geq 0$, we get the desired
\begin{align*}
    \HolCurrModexi{-k_{n+1}} \big( \no{F \,} + \Null\big)
    \, = \,
    \no{
    \RepHolChi{-k_{n+1}} F }
    \,+\, \Null\,.
\end{align*}
Using the expression of the linear representatives ---Example~\ref{ex: rep of linear basis}--- and using Stokes' formula again, \eqref{eq: key ingredient} is equivalent to
\begin{align*}
    0 \; = \; & \,
    \frac{1}{2\pi}
    \sum_{\zu_\diamond}\zu_\diamond^{[-k]} \,
    \wick{ \ddeebar\pddee\c \Fieldxi(\zu_\diamond) \c \RepHolEtaNoIndex_{-\ell} }
    \,+\,
    \frac{1}{2\pi}
    \sum_{\zu_\medial}
    \ddeebar \zu_\medial^{[-k]}
    \wick{ \pddee \c \Fieldxi (\zu_\medial) \c \RepHolEtaNoIndex_{-\ell } }
    \\
    = \; & \,
    \frac{1}{2\pi\ii}
    \sqint_\gamma
    \zu_\diamond^{[-k]}
    \wick{ \pddee \c \Fieldxi (\zu_\medial) \c \RepHolEtaNoIndex_{-\ell } }\,
    \dd \zu\,,
\end{align*}
which is indeed true by virtue of Lemma~\ref{lemma: technical}.
\end{proof}

\vspace{-10pt}
\subsection{Proof of Theorem~\ref{thm: isomorphism}}
\label{subsec: proof of isomorphism}
At this point we are ready to prove the isomorphism $\Fields \cong \FullFock$.
Proving that we have $\FullFock\subset\Fields$ as a subrepresentation is straightforward to prove with an algebra-theoretic argument presented in the following lemma.
The opposite inclusion requires more work and is the essence of the proof of Theorem~\ref{thm: isomorphism} below.

In the lemma, we use the following common notation:
given a vector $v$ in a $\FullSymFer$ representation $V$, we let $(\FullEnvSymFer)v$ denote the subrepresentation of $V$ cyclically generated by the action of $\FullSymFer$ on $v$.

\begin{lemma}\label{lemma: univ prop Fock}
Let $V$ be a representation\footnote{We use the module notation for simplicity.} of $\FullSymFer$, and let $v\in V\setminus \{0\}$  be a non-zero vector satisfying
\begin{align*}
    \mspace{130mu}
    \AlgChi{k}v
    \,=\,
    \AlgEta{k}v
    \,=\,
    \AlgChiBar{k}v
    \,=\,
    \AlgEtaBar{k}v
    \,=\,
    0 & \phantom{\Big\vert}
    \mspace{50mu}
    \text{ for } \mspace{5mu} k>0\,,
    \\
    \big( \AlgEta{0} - \AlgEtaBar{0} \big) v \,=\, 0\,,
    \mspace{50mu}
    \big( \AlgChi{0} - \AlgChiBar{0} \big) v \, &\,=\, 0\,,\phantom{\Big\vert}
\end{align*}
and
\begin{align*}
    \AlgChi{0} \AlgEta{0} v \,\neq\, 0 \,.
\end{align*}
Then there exists a unique $(\FullSymFer)$-homomorphism $\FullFock\rightarrow V$ determined by 
\begin{align*}
    \GroundPartner \longmapsto v\,.
\end{align*}
This map is injective and constitutes an isomorphism between $\FullFock$ and $(\FullEnvSymFer)v$ as representations of $\FullSymFer$.
\end{lemma}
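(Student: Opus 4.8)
The plan is to obtain $\phi$ as the descent of the module map $u\mapsto u\cdot v$ to the quotient $\FullFock$, to get uniqueness from cyclicity, and then to extract injectivity from the lattice of subrepresentations recorded in Remark~\ref{rmk: subrepresentations}; the hypothesis $\AlgChi{0}\AlgEta{0}v\neq 0$ is precisely what makes the last step work.

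First I would construct the map. Viewing $V$ as a $\FullEnvSymFer$-module---equivalently, $\AlgK$ and $\AlgKBar$ act as the identity, as the notation $(\FullEnvSymFer)v$ presupposes---the assignment $u\mapsto u\cdot v$ is a $\FullEnvSymFer$-module homomorphism $\psi\colon\FullEnvSymFer\to V$ with $\psi(1)=v$. The hypotheses on $v$ state exactly that $v$ is annihilated by every generator of the left ideal defining $\FullFock=\FullEnvSymFer/(\AlgEta{0}-\AlgEtaBar{0},\,\AlgChi{0}-\AlgChiBar{0},\,\AlgEta{k},\AlgChi{k},\AlgEtaBar{k},\AlgChiBar{k}\colon k>0)$, so $\psi$ factors through this quotient and yields a $\FullSymFer$-homomorphism $\phi\colon\FullFock\to V$ with $\phi(\GroundPartner)=v$. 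Uniqueness is then immediate: by Remark~\ref{rmk: subrepresentations} the module $\FullFock$ is cyclically generated by $\GroundPartner$, so any homomorphism out of it is determined by the image of $\GroundPartner$. Since $\FullFock=(\FullEnvSymFer)\GroundPartner$, we also get $\phi(\FullFock)=(\FullEnvSymFer)v$, so that once $\phi$ is shown injective it will be an isomorphism onto $(\FullEnvSymFer)v$, as desired.

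The crux is injectivity. The kernel $\ker\phi$ is a subrepresentation of $\FullFock$, and it is proper because $\phi(\GroundPartner)=v\neq 0$. By Remark~\ref{rmk: subrepresentations} the only proper subrepresentations are $(\FullEnvSymFer)\Ground$, $(\FullEnvSymFer)\GroundEta$ and $(\FullEnvSymFer)\GroundChi$. Using $\AlgEta{0}^2=\AlgChi{0}^2=0$ and $\{\AlgEta{0},\AlgChi{0}\}=0$, a one-line computation gives $\AlgChi{0}\GroundEta=-\Ground$ and $\AlgEta{0}\GroundChi=\Ground$, so both $(\FullEnvSymFer)\GroundEta$ and $(\FullEnvSymFer)\GroundChi$ contain $(\FullEnvSymFer)\Ground$; since $\Ground=-\AlgEta{0}\AlgChi{0}\GroundPartner$ is nonzero (being, up to sign, a vector of the basis~\eqref{eq: fock basis}), the subrepresentation $(\FullEnvSymFer)\Ground$ is the unique minimal nonzero one. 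Hence every nonzero subrepresentation of $\FullFock$ contains $\Ground$.

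It remains to rule out $\Ground\in\ker\phi$. As $\Ground=\AlgChi{0}\AlgEta{0}\GroundPartner$, I compute $\phi(\Ground)=\AlgChi{0}\AlgEta{0}\,\phi(\GroundPartner)=\AlgChi{0}\AlgEta{0}v$, which is nonzero by hypothesis. Thus $\Ground\notin\ker\phi$, and combined with the previous paragraph this forces $\ker\phi=0$. The main obstacle is precisely this injectivity step; all the work there is leveraged from the structural Remark~\ref{rmk: subrepresentations}, and the role of the assumption $\AlgChi{0}\AlgEta{0}v\neq 0$ is exactly to guarantee that the unavoidable minimal subrepresentation $(\FullEnvSymFer)\Ground$ cannot lie in the kernel.
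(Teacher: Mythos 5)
Your proposal is correct and follows essentially the same route as the paper: existence via the universal property of the quotient defining $\FullFock$, uniqueness via cyclic generation by $\GroundPartner$, and injectivity by combining the classification of proper subrepresentations from Remark~\ref{rmk: subrepresentations} with the hypothesis $\AlgChi{0}\AlgEta{0}v\neq 0$, which rules out $\Ground\in\ker\phi$. The only difference is that you spell out explicitly why $(\FullEnvSymFer)\Ground$ is the unique minimal nonzero submodule, a point the paper states without computation.
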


\begin{proof}
The existence of such a map follows from the universal property of the quotient~$\FullFock$.
Uniqueness follows from the fact that the values of the map on the basis vectors are fixed by $\GroundPartner \mapsto v$ and the homomorphism property.
Let $\phi\colon\FullFock\rightarrow V$ denote such a unique homomorphism.
By construction, the map $\phi$ maps $\FullFock$ surjectively onto the submodule $(\FullEnvSymFer)v$.
As for injectivity, since it is a non-zero map, its kernel must be a proper submodule of $\FullFock$.
Recall from Remark~\ref{rmk: subrepresentations} that the only three proper submodules of $\FullFock$
are $(\FullEnvSymFer)\GroundChi$, $(\FullEnvSymFer)\GroundEta$ and $(\FullEnvSymFer)\Ground$; the third one, in turn, is a proper submodule of each of the first two.
However, since we assumed
\begin{align*}
    \phi ( \Ground )
    \,=\,
    \phi ( \AlgChi{0} \AlgEta{0} \GroundPartner )
    \,=\,
    \AlgChi{0} \AlgEta{0} \phi ( \GroundPartner )
    \,=\,
    \AlgChi{0} \AlgEta{0} v
    \,\neq\,
    0\,,
\end{align*}
we must have $\ker \phi = \{0\}$, i.e.~$\phi$ is injective.
We conclude that $\phi$ is an isomorphism between $\FullFock$ and $(\FullEnvSymFer)v$.
\end{proof}

\begin{proof}[Proof of Theorem~\ref{thm: isomorphism}]
The local field~$\fDGFFGroundPartner$ satisfies
\begin{align*}
    \HolCurrModexi{k}\fDGFFGroundPartner
    \,=\,
    \HolCurrModetheta{k}\fDGFFGroundPartner
    \,=\,
    \AntiHolCurrModexi{k}\fDGFFGroundPartner
    \,=\,
    \AntiHolCurrModetheta{k}\fDGFFGroundPartner
    \,=\,
    0
\end{align*}
for all $k\in\Zpos$ and
\begin{align*}
    \big(\HolCurrModexi{0}-\AntiHolCurrModexi{0}\big)
    \fDGFFGroundPartner
    \,=\,
    0
    \,=\,
    \big(\HolCurrModetheta{0}-\AntiHolCurrModetheta{0}\big)
    \fDGFFGroundPartner\,.
\end{align*}
Moreover, we have $\HolCurrModexi{0}\HolCurrModetheta{0}\fDGFFGroundPartner=\fDGFFGround\neq0$ ---Example~\ref{ex: 1 is not 0}---.
Hence, by Lemma~\ref{lemma: univ prop Fock}, the subrepresentation $(\LattSFAsso)\fDGFFGroundPartner$ generated by the cyclic action of the current modes on $\fDGFFGroundPartner$ is isomorphic to $\FullFock$.
Since we have the inclusion $(\LattSFAsso)\fDGFFGroundPartner\subset \Fields$, it only remains to prove $\Fields \subset (\LattSFAsso)\fDGFFGroundPartner$.
Take any field $F\in\Fields$.
Since normal ordering is a surjective map ---Remark~\ref{rmk: norm ord surjective}---, we can find linear field polynomials $L_1,\ldots,L_n\in\LinearFieldPoly$ such that we have
\begin{align*}
    F
    \,=\,
    \noQuo{
    \big( L_1 + \Null\big)
    \cdots
    \big( L_n + \Null\big)
    }\,.
\end{align*}
Since $L_i$ are linear, they must be a linear combination of the elements in the basis
\begin{align*}
	\big\{
	\HolCurrModexi{-k} \fDGFFGround,
	\AntiHolCurrModexi{-k} \fDGFFGround,	\HolCurrModetheta{-k} \fDGFFGround,
	\AntiHolCurrModetheta{-k} \fDGFFGround,
	\,\colon
	k\in\Zpos
	\big\}
	\,\cup\,
	\big\{
	\fDGFFGroundTheta,
	\fDGFFGroundXi
	\big\}\,.
\end{align*}
Finally, by Lemma~\ref{lemma: normal order of basis}, the multilinearlity of normal ordering, and the relations between ground states via the current modes ---Remark~\ref{rmk: action currents on ground states}---, the local field $F$ can be written as an element of the submodule $(\LattSFAsso)\fDGFFGroundPartner$.
\end{proof}

\vspace{-10pt}
\subsection{Proof of Theorem~\ref{thm: scaling limit}}
\label{subsec: proof of scaling limit}
We now move on to the scaling limit of correlation functions.
In the proof of our second main theorem, we use the following self-contained results.
Consider the scaling-limit setting as in Theorem~\ref{thm: scaling limit}.
For the rest of the section we define $\mathsf C \coloneqq \gamma + \frac{3}{2} \log 2$ where $\gamma$ is the Euler--Mascheroni constant.

\begin{lemma}
\label{lemma: converg diagonal greens fnctn}
Let $\zd^\meshsize\in\ddomain{\meshsize}$ be the closest vertex to $z\in\domain$ in $\ddomain{\meshsize}$.
We have
\begin{align*}
    \Green_{\ddomain{\meshsize}}(\zd^\meshsize,\zd^\meshsize)
    +
    \frac{1}{2\pi}\log \meshsize
    \mspace{20mu}
    \xrightarrow{\mspace{40mu}\meshsize \downarrow 0 \mspace{40mu}}
    \mspace{20mu}
    \HarmOfGreen_{\domain}(z,z)
    +
    \frac{\mathsf{C}}{2\pi}
    \,,
\end{align*}
uniformly for $z$ in compacts of $\domain$, where $\HarmOfGreen_{\domain}$ is the harmonic part of the Green's function on the domain~$\domain$ as in Section~\ref{subsec: corr fun symplectic fermions}.
\end{lemma}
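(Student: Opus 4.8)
The plan is to reduce the whole statement to the comparison of a single discrete harmonic function with its continuum counterpart. First I would peel off the full-plane contribution from the discrete Dirichlet Green's function. Writing $g^\meshsize(\zd)\coloneqq\Green_{\Z^2}(\zd/\meshsize)$ for the full-plane Green's function on $\meshsize\Z^2$ --- which by scale invariance of the combinatorial Laplacian satisfies $\Delta g^\meshsize=-\delta_0$ and, since $\Green_{\Z^2}(0)=0$, vanishes at the origin --- one has the decomposition
\[
    \Green_{\ddomain{\meshsize}}(\zd,\zdbis)
    \,=\,
    g^\meshsize(\zd-\zdbis) - h^\meshsize_\zdbis(\zd)\,,
\]
where, for each fixed $\zdbis$, the function $h^\meshsize_\zdbis$ is the unique discrete harmonic function on $\interior{\ddomain{\meshsize}}$ whose boundary values on $\bdry\ddomain{\meshsize}$ agree with $\zd\mapsto g^\meshsize(\zd-\zdbis)$. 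Since $g^\meshsize$ vanishes at the origin and $h^\meshsize_\zdbis$ matches $g^\meshsize$ on the boundary, this reproduces the defining relations of $\Green_{\ddomain{\meshsize}}$. Evaluating on the diagonal then gives $\Green_{\ddomain{\meshsize}}(\zd^\meshsize,\zd^\meshsize)=-h^\meshsize_{\zd^\meshsize}(\zd^\meshsize)$, so the entire task is to compute the limit of $-h^\meshsize_{\zd^\meshsize}(\zd^\meshsize)$.

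Next I would read off the boundary data of $h^\meshsize_{\zd^\meshsize}$ from the asymptotics of $\Green_{\Z^2}$. For $z$ in a compact $K\subset\domain$, the vertex $\zd^\meshsize$ stays at distance bounded below from $\bdry\ddomain{\meshsize}$, so for $\zd\in\bdry\ddomain{\meshsize}$ one has $|(\zd-\zd^\meshsize)/\meshsize|\to\infty$ uniformly, and the expansion $\Green_{\Z^2}(\zu)=-\tfrac{1}{2\pi}\log|\zu|-\tfrac{\mathsf{C}}{2\pi}+\mathcal{O}(|\zu|^{-2})$ yields
\[
    h^\meshsize_{\zd^\meshsize}(\zd)
    \,=\,
    -\tfrac{1}{2\pi}\log|\zd-\zd^\meshsize|
    +\tfrac{1}{2\pi}\log\meshsize
    -\tfrac{\mathsf{C}}{2\pi}
    +\mathcal{O}(\meshsize^2)\,,
\]
with the error uniform over $\bdry\ddomain{\meshsize}$. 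Introducing the shifted function $\tilde h^\meshsize\coloneqq h^\meshsize_{\zd^\meshsize}-\tfrac{1}{2\pi}\log\meshsize+\tfrac{\mathsf{C}}{2\pi}$, which is again discrete harmonic, its boundary values converge uniformly (using $\zd^\meshsize\to z$) to $\zd\mapsto-\tfrac{1}{2\pi}\log|z-\zd|$ as $\meshsize\downarrow 0$.

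The comparison step is to match $\tilde h^\meshsize$ with the continuum harmonic function $w\mapsto-\HarmOfGreen_\domain(z,w)$. By the defining property of $\HarmOfGreen_\domain$ --- harmonic in $w$, with $\Green_\domain(z,\cdot)\equiv 0$ on $\bdry\domain$ --- this continuum function carries exactly the boundary values $w\mapsto-\tfrac{1}{2\pi}\log|z-w|$. Invoking the convergence of solutions of the discrete Dirichlet problem to the continuum one, which holds uniformly on compacts under the Carath\'eodory convergence $\ddomain{\meshsize}\to\domain$ (the boundary data being continuous on $\bdry\domain$ since $z$ is interior), I would conclude $\tilde h^\meshsize(\zd^\meshsize)\to-\HarmOfGreen_\domain(z,z)$. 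Substituting back,
\[
    \Green_{\ddomain{\meshsize}}(\zd^\meshsize,\zd^\meshsize)
    +\tfrac{1}{2\pi}\log\meshsize
    \,=\,
    -\tilde h^\meshsize(\zd^\meshsize)+\tfrac{\mathsf{C}}{2\pi}
    \,\longrightarrow\,
    \HarmOfGreen_\domain(z,z)+\tfrac{\mathsf{C}}{2\pi}\,,
\]
which is the claim.

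The main obstacle is precisely this last step: establishing the convergence of the discrete harmonic extension to the continuum one \emph{uniformly} for $z$ in compacts, together with the uniform control of the boundary error terms near $\bdry\domain$. This is where one leans on quantitative discrete potential theory --- the uniform estimates for $\Green_{\Z^2}$ from \cite{LL-random_walk} and the standard convergence of discrete harmonic measure under Carath\'eodory convergence --- rather than on any feature special to the fermionic model. Everything else is bookkeeping of the constants $\tfrac{1}{2\pi}\log\meshsize$ and $\tfrac{\mathsf{C}}{2\pi}$, whose appearance is exactly what distinguishes the diagonal from the off-diagonal regime (where these terms cancel against the $-\tfrac{1}{2\pi}\log|z-w|$ singularity and one recovers $\Green_{\ddomain{\meshsize}}(\zd^\meshsize,\zdbis^\meshsize)\to\Green_\domain(z,w)$).
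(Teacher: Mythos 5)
Your argument is correct and is essentially the paper's own proof: your $-\tilde h^\meshsize$ is exactly the auxiliary function $H^\meshsize \coloneqq \Green_{\ddomain{\meshsize}}(\cdot,\zd^\meshsize)-\Green_{\Z^2}\big(\tfrac{\cdot-\zd^\meshsize}{\meshsize}\big)+\tfrac{1}{2\pi}\log\meshsize-\tfrac{\mathsf{C}}{2\pi}$ that the paper introduces, with the same boundary asymptotics, the same appeal to convergence of the discrete Dirichlet problem under Carath\'eodory convergence, and the same use of $\Green_{\Z^2}(0)=0$ on the diagonal. No gaps beyond the standard potential-theoretic input that the paper also takes as known.
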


\begin{proof}
The function,
\begin{align*}
    H^\meshsize
    \,\colon\,
    \zdbis
    \longmapsto
    \Green_{\ddomain{\meshsize}}\big(\zdbis,\zd^\meshsize \big)
    -
    \Green_{\Z^2}\bigg( \frac{\zdbis-\zd^\meshsize}{\meshsize} \bigg)
    +
    \frac{1}{2\pi}\log \meshsize
    -
    \frac{\mathsf{C}}{2\pi}
\end{align*}
is discrete harmonic on $\ddomain{\meshsize}$ and has boundary values $\frac{1}{2\pi}\log \vert \xd^\meshsize - \zd^\meshsize \vert + \mathcal{O}(\meshsize^2)$ as $\meshsize\downarrow0$ for $\xd^\meshsize$ at the boundary of $\ddomain{\meshsize}$.
It is well-known that (any reasonable extension to $\domain$ of) $H^\meshsize$ converges to the harmonic extension of the function $x\mapsto\frac{1}{2\pi}\log \vert x - z\vert$ on $\bdry\domain$ to the interior $\domain$.
This limiting function on $\domain$ is the harmonic part of the Green's function $\HarmOfGreen_\domain( \,\cdot\, ,z)$.
Evaluating $H^\meshsize$ at $\zd^\meshsize$, we get the desired scaling limit convergence since we have $\Green_{\Z^2}(0)=0$.
\end{proof}

\begin{coro}
\label{coro: converg of partner}
Let $\zd^\meshsize\in\ddomain{\meshsize}$ be the closest vertex to $z\in\domain$ in $\ddomain{\meshsize}$ and fix $\lambda > 0$.
We have
\begin{align*}
    -\BigfDGFFCorrFun{\ddomain{\meshsize}}{\fDGFFxi(\zd^\meshsize)\fDGFFtheta(\zd^\meshsize)}
    - 2\, \log \big( \lambda \meshsize \big)
    \mspace{20mu}
    \xrightarrow{\mspace{40mu}\meshsize \downarrow 0 \mspace{40mu}}
    \mspace{20mu}
    \BigSFCorrFun{\domain}{\thectt(\lambda)}{\GroundPartner(z)}
\end{align*}
uniformly for $z$ in compacts of $\Conf{n}{\domain}$, where $\thectt(\lambda)\coloneqq 2 \big( \log \lambda + \mathsf{C} \big)$.
\end{coro}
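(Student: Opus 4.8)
The plan is to reduce the statement to a direct substitution, since all the analytic content is already contained in Lemma~\ref{lemma: converg diagonal greens fnctn}. First I would invoke the explicit formula for the fDGFF two-point function from Remark~\ref{rmk: 2n pt function fDGFF}, namely $\BigfDGFFCorrFun{\ddomain{\meshsize}}{\fDGFFxi(\zd^\meshsize)\fDGFFtheta(\zd^\meshsize)} = 4\pi\,\Green_{\ddomain{\meshsize}}(\zd^\meshsize,\zd^\meshsize)$, so that the quantity to be analysed becomes
\begin{align*}
    -\BigfDGFFCorrFun{\ddomain{\meshsize}}{\fDGFFxi(\zd^\meshsize)\fDGFFtheta(\zd^\meshsize)}
    - 2\,\log\big(\lambda\meshsize\big)
    \,=\,
    -4\pi\,\Green_{\ddomain{\meshsize}}(\zd^\meshsize,\zd^\meshsize)
    - 2\log\lambda - 2\log\meshsize\,.
\end{align*}

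Next I would apply Lemma~\ref{lemma: converg diagonal greens fnctn}, which gives, uniformly for $z$ on compacts of $\domain$, the expansion $\Green_{\ddomain{\meshsize}}(\zd^\meshsize,\zd^\meshsize) = -\tfrac{1}{2\pi}\log\meshsize + \HarmOfGreen_\domain(z,z) + \tfrac{\mathsf{C}}{2\pi} + o(1)$ as $\meshsize\downarrow 0$. Multiplying by $-4\pi$ produces a term $+2\log\meshsize$, which is precisely engineered to cancel the $-2\log\meshsize$ coming from $-2\log(\lambda\meshsize)$. The divergent logarithms thus disappear, and the remaining constant terms assemble into $-4\pi\,\HarmOfGreen_\domain(z,z) - 2\mathsf{C} - 2\log\lambda + o(1)$.

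Finally I would recognise the limit as the desired CFT one-point function. By the definition $\thectt(\lambda) \coloneqq 2(\log\lambda + \mathsf{C})$ we have $-2\mathsf{C} - 2\log\lambda = -\thectt(\lambda)$, so the limit equals $-4\pi\,\HarmOfGreen_\domain(z,z) - \thectt(\lambda)$, which is exactly $\BigSFCorrFun{\domain}{\thectt(\lambda)}{\GroundPartner(z)}$ by Example~\ref{ex: 1pt fctn log partner}. The uniformity over compacts is inherited directly from the uniformity in Lemma~\ref{lemma: converg diagonal greens fnctn}.

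There is essentially no obstacle here: the result is a bookkeeping corollary of the preceding lemma. The only point requiring care is matching constants—verifying that the lattice constant $\mathsf{C}$ and the arbitrary scale $\lambda$ combine into precisely the CFT parameter $\thectt(\lambda)$. Indeed, the definition of $\thectt(\lambda)$ is chosen exactly so that the finite part of $-4\pi\,\Green_{\ddomain{\meshsize}}(\zd^\meshsize,\zd^\meshsize)$, after subtracting the logarithmic counterterm $2\log(\lambda\meshsize)$, reproduces the harmonic part of the continuum Green's function with the correct additive constant.
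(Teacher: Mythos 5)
Your proposal is correct and is exactly the argument the paper has in mind: its proof of this corollary is the one-line remark that it is ``straightforward from Lemma~\ref{lemma: converg diagonal greens fnctn} and Example~\ref{ex: 1pt fctn log partner}'', and your computation simply fills in the substitution, the cancellation of the $\log\meshsize$ terms, and the matching of $-2\mathsf{C}-2\log\lambda$ with $-\thectt(\lambda)$. Nothing is missing.
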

\begin{proof}
Straightforward from Lemma~\ref{lemma: converg diagonal greens fnctn} and Example~\ref{ex: 1pt fctn log partner}.
\end{proof}

We are now ready to prove the main theorem.

\begin{proof}[Proof of Theorem~\ref{thm: scaling limit}]
By linearity of the correlation functions, we only need to prove the statement for fields $\field_i$ in the basis~\eqref{eq: basis}.
For the sake of brevity, we detail the proof in the case where we have
\begin{align*}
\field_i
\,\coloneqq\,
\HolCurrModexi{-k_{r_i}^{(i)}}\cdots
&
\HolCurrModexi{-k_1^{(i)}}
\HolCurrModetheta{-\ell_{s_i}^{(i)}}\cdots\HolCurrModetheta{-\ell_1^{(i)}}
\fDGFFGroundPartner
\in \Fields\,
\end{align*}
with
${0 \leq k_1^{(i)} < k_2^{(i)} < \cdots < k_{r_i}^{(i)}}$
and
${0 \leq \ell_1^{(i)} < \ell_2^{(i)} < \cdots < \ell_{s_i}^{(i)}}$---the more general case can be \linebreak[4] handled in a similar way.

\noindent
The CFT correlation functions of the fields $\field_1,\ldots,\field_n\in\FullFock$ ($\cong\Fields$) can be expressed ---recall Remark~\ref{rmk: integral formula}--- as
\begin{align} \label{eq: CFT corr fun}
    &
    \BigSFCorrFun{\domain}{\thectt}{
    \field_1(z_1)
    \,\cdots\,
    \field_n(z_n)
    }
    \,=\phantom{\Bigg\vert}
    \\ \nonumber
    &
    \mspace{60mu}
    =\,
    \oint \!\cdots\! \oint
    \prod_{i=1}^n
    \Bigg(
    \prod_{a=1}^{r_i}
    \frac{\cd \zeta_{i;a}^{\eta}}{2\pi\ii}
    \prod_{b=1}^{s_i}
    \frac{\cd \zeta_{i;b}^{\chi}}{2\pi\ii}
    \Bigg)
    \prod_{i=1}^n
    \Bigg(
    \prod_{a=1}^{r_i}
    \, \big( \zeta_{i;a}^{\eta} - z_i \big)^{-k_a^{(i)}}
    \prod_{b=1}^{s_i}
    \, \big( \zeta_{i;b}^{\chi} - z_i \big)^{-\ell_b^{(i)}}
    \Bigg)
    \times
    \\ \nonumber
    &
    \mspace{150mu}
    \phantom{\Bigg\vert}
    \times
    \bigg\langle
    \HolCurrentEta\big( \zeta_{1;1}^{\eta} \big) \cdots \HolCurrentEta\big( \zeta_{1;r_1}^{\eta} \big)
    \HolCurrentChi\big( \zeta_{1;1}^{\chi} \big) \cdots \HolCurrentChi\big( \zeta_{1;s_1}^{\chi} \big)\,
    \GroundPartner(z_1)
    \;\cdots
    \\ \nonumber
    &
    \mspace{290mu}
    \cdots\;
    \HolCurrentEta\big( \zeta_{n;1}^{\eta} \big) \cdots \HolCurrentEta\big( \zeta_{n;r_n}^{\eta} \big)
    \HolCurrentChi\big( \zeta_{n;1}^{\chi} \big) \cdots \HolCurrentChi\big( \zeta_{n;s_n}^{\chi} \big)\,
    \GroundPartner(z_n)
    \bigg\rangle_{\domain;\thectt}^{\SFtag}\,,
\end{align}
where the variables $\zeta_{i;a}^{\eta}$ and $\zeta_{i;b}^{\chi}$ are integrated, respectively, along contours $\Gamma_{i;a}^{\eta}$ and $\Gamma_{i;b}^{\chi}$ in $\domain$ that encircle the point $z_i$ and no other insertion point, and such that all contours are pairwise non-intersecting.
An example of this scenario is shown in Figure~\ref{fig: proof}.

\noindent
In what remains of this proof, we omit some details---we refer the reader to the proof of Theorem~7.2 in \cite{ABK-DGFF_local_fields}.
In particular, in there, the contours and their discrete approxi\-mations are deliberately chosen to be of square shape so that some technical aspects substantially simplify.

\noindent
To set up the scaling-limit convergence,
we let $\gamma_{i;a}^{\eta}(\meshsize)$ and $\gamma_{i;b}^{\chi}(\meshsize)$ be discrete corner contours\footnote{Naturally, one needs to impose certain requirements to these discrete contours so that in the limit $\meshsize\downarrow0$ one gets proper Riemann-sum approximations of continuum contour integrals.} in $\Z^2$ that are, respectively, close to the blown-up contours $\frac{1}{\meshsize}\big(\Gamma_{i;a}^{\eta}-z_i\big)$ and $\frac{1}{\meshsize}\big(\Gamma_{i;b}^{\chi}-z_i\big)$.
Then, for the field~$\field_i\in\Fields$, consider the $\FieldPoly$-representative
\begin{align*}
    &
    P_i^\meshsize \,\coloneqq\,
    \prod_{a=1}^{r_i}
    \Bigg(
    \dsqint_{\gamma^\eta_{i;a}(\meshsize)}
    \frac{\dd \zu_{i;a}}{2\pi \ii}
    (\zu_{i;a})_\diamond^{[-k_a^{(i)}]}
    \pddee \Fieldtheta \big( (\zu_{i;a})_\medial \big)
    \Bigg)
    \times
    \\
    &
    \mspace{150mu}
    \times\,
    \prod_{b=1}^{s_i}
    \Bigg(
    \dsqint_{\gamma^\chi_{i;b}(\meshsize)} \,
    \frac{\dd \zubis_{i;b}}{2\pi \ii}
    (\zubis_{i;b})_\diamond^{[-\ell_b^{(i)}]}
    \pddee \Fieldxi \big( (\zubis_{i;b})_\medial \big)
    \Bigg)
    \Big( -\Fieldxi(0)\Fieldtheta(0) - 2\log\big(\lambda\meshsize\big)\Big)
\end{align*}
of the field~$\field_i-\log(\lambda\meshsize)\big[(\dL{0}-\dLBar{0})-(\Delta_i+\bar\Delta_i)\big]\field_i$,
where we have used the basis expression in Example~\ref{ex: good basis scaling fields}, the representative~$\fDGFFGroundPartner = -\Fieldxi(0)\Fieldtheta(0)+\Null$ ---Example~\ref{ex: rep of ground states}---, and the integral definition of the current modes ---Proposition~\ref{prop: current modes}---.
\begin{figure}[b!]
\centering
\begin{overpic}[scale=0.48, tics=10]{./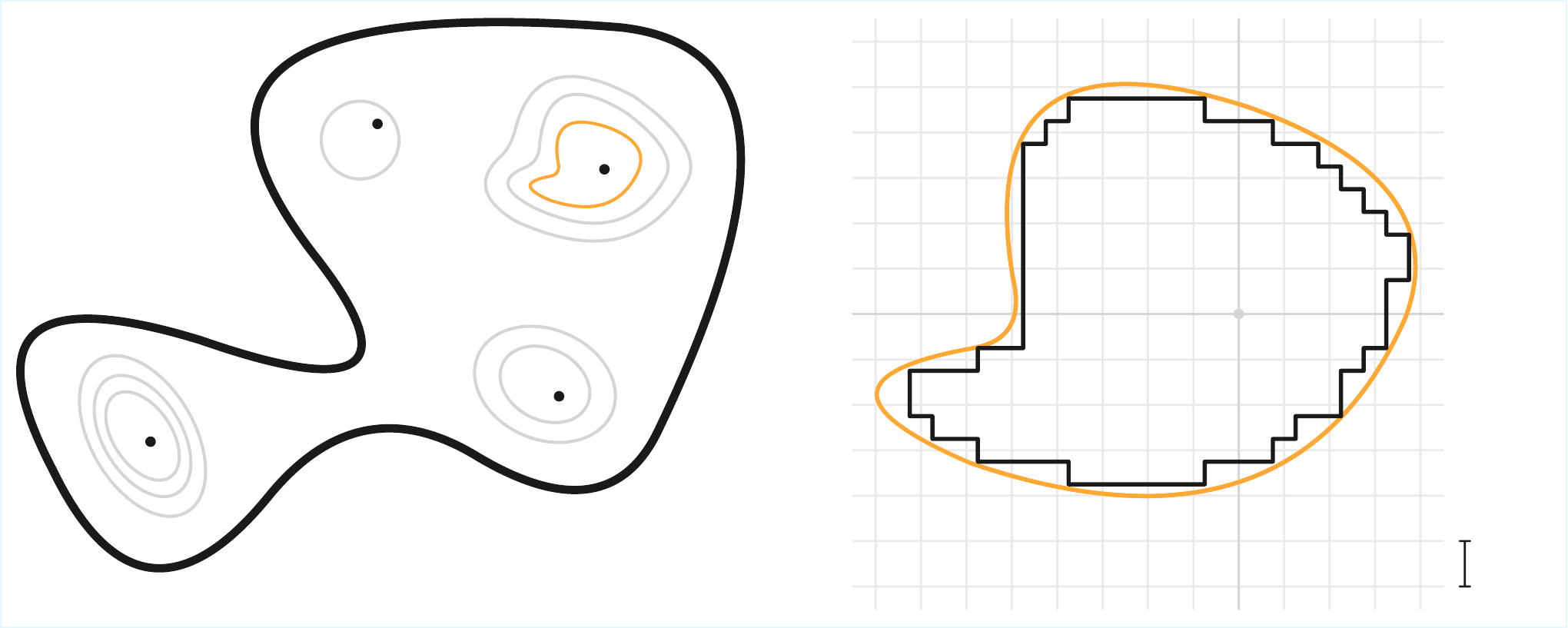}
    \put(95,3.3){$\meshsize$}
    \put(93,36){\large $\meshsize\Z^2$}
    \put(13,36){\large $\domain$}
    \put(36.5,30){$z_i$}
    \put(34,24.3){
		\pgfsetfillopacity{0.8}
		\colorbox{white}{\centering
		\parbox{0pt}{\pgfsetfillopacity{1}}\color{black} \hspace{0pt}$\Gamma_{i;1}$}}
    \put(63,5){
		\pgfsetfillopacity{0.8}
		\colorbox{white}{\centering
		\parbox{0pt}{\pgfsetfillopacity{1}}\color{black} \hspace{0pt}$\frac{1}{\meshsize}\big(\Gamma_{i;1}-z_i\big)$}}
        \put(63,12.3){
		\pgfsetfillopacity{0.8}
		\colorbox{white}{\centering
		\parbox{0pt}{\pgfsetfillopacity{1}}\color{black} \hspace{0pt}$\gamma_{i;1}(\meshsize)$}}
\end{overpic}
\centering
% \vspace{-35pt}
\caption{
On the left, a domain~$\domain$ with insertion points and contours around \\ them.
One of the contours ($\Gamma_{i;1}$) around the point $z_i$ is highlighted in orange.
\\
On the right, the shifted and blown-up contour~$\frac{1}{\meshsize}\big(\Gamma_{i;1}-z_i\big)$ (orange) and \\ its closest discrete contour~$\gamma_{i;1}(\meshsize)$.}
\label{fig: proof}
\end{figure}
The relevant fDGFF correlation functions are then
\begin{align}
\label{eq: fDGFF corr fun}
    &
    \BigfDGFFCorrFun{\ddomain{\meshsize}}{
    \ev_{\zd_1^\meshsize}^{\ddomain{\meshsize}}\big( P_1 \big)
    \cdots
    \ev_{\zd_n^\meshsize}^{\ddomain{\meshsize}}\big( P_n \big)
    }
    \, = \phantom{\Bigg\vert}
    \\ \nonumber
    &
    \mspace{20mu}
    =\,
    \prod_{i=1}^n
    \Bigg(
    \prod_{a=1}^{r_i}
    \dsqint_{\gamma^\eta_{i;a}(\meshsize)}
    \frac{\dd \zu_{i;a}}{2\pi\ii} \,
    (\zu_{i;a})_\diamond^{[-k_a^{(i)}]}
    \prod_{b=1}^{s_i}
    \;
    \dsqint_{\gamma^\chi_{i;b}(\meshsize)} \,
    \frac{\dd \zubis_{i;b}}{2\pi\ii}
    (\zubis_{i;b})_\diamond^{[-\ell_b^{(i)}]}
    \Bigg)
    \times
    \\ \nonumber
    &
    \mspace{30mu}
    \phantom{\Bigg\vert}
    \times
    \bigg\langle
    \pddee\fDGFFtheta\big( \zd_1^\meshsize + \meshsize (\zu_{1;1})_\medial \big)
    \cdots
    \pddee\fDGFFtheta\big( \zd_1^\meshsize + \meshsize (\zu_{1;r_1})_\medial \big)
    \,\cdot
    \\ \nonumber
    &
    \mspace{90mu}
    \cdot
    \pddee\fDGFFxi\big( \zd_1^\meshsize + \meshsize (\zubis_{1;1})_\medial \big)
    \cdots
    \pddee\fDGFFxi\big( \zd_1^\meshsize + \meshsize (\zubis_{1;s_1})_\medial\big)
    \Big(
    - \fDGFFxi\big( \zd_1^\meshsize \big) \fDGFFtheta\big( \zd_1^\meshsize \big) 
    - 2\log\big(\lambda\meshsize\big)
    \Big)
    \cdots
    \\ \nonumber
    &
    \mspace{90mu}
    \cdots\;
    \pddee\fDGFFtheta\big( \zd_n^\meshsize + \meshsize (\zu_{n;1})_\medial \big)
    \cdots
    \pddee\fDGFFtheta\big( \zd_n^\meshsize + \meshsize (\zu_{n;r_n})_\medial \big)
    \cdot\phantom{\Bigg\vert}
    \\ \nonumber
    &
    \mspace{90mu}
    \cdot
    \pddee\fDGFFxi\big( \zd_n^\meshsize + \meshsize (\zubis_{n;1})_\medial \big)
    \cdots
    \pddee\fDGFFxi\big( \zd_n^\meshsize + \meshsize (\zubis_{n;s_n})_\medial\big)
    \Big(
    - \fDGFFxi\big( \zd_n^\meshsize \big) \fDGFFtheta\big( \zd_n^\meshsize \big) 
    - 2\log\big(\lambda\meshsize\big)
    \Big)
    \bigg\rangle_{\ddomain{\meshsize}}^{\fDGFFtag}
    \!\!\!.
\end{align}
Let us focus on the integrand that is an fDGFF correlation function.
By the fDGFF generalised Wick's formula ---Remark~\ref{rmk: discrete Wick with log corrections}---, and by linearity of the fDGFF correlation functions, it can be rewritten as sums of products of factors of the two following types:
\begin{align*}
    \dsqint_{\gamma^\eta_{i;a}(\meshsize)}
    \!\!\!\!
    \frac{\dd \zu_{i;a}}{2\pi\ii} \,
    (\zu_{i;a})_\diamond^{[-k_a^{(i)}]}
    \dsqint_{\gamma^\chi_{j;b}(\meshsize)} \,
    \!\!\!\!
    \frac{\dd \zubis_{j;b}}{2\pi\ii}
    (\zubis_{i;b})_\diamond^{[-\ell_b^{(j)}]}
    \big(\pddee\big)_{\!\zu}
    \big(\pddee\big)_{\!\zubis}
    \BigfDGFFCorrFun{\ddomain{\meshsize}}{
    \fDGFFtheta ( \zd_i^\meshsize+\meshsize(\zu_{i;a})_\medial )
    \fDGFFxi ( \zd_j^\meshsize+\meshsize(\zubis_{j;b})_\medial )
    }
\end{align*}
and
\begin{align*}
    - \BigfDGFFCorrFun{\ddomain{\meshsize}}{
    \fDGFFxi \big( \zd_i^\meshsize \big)
    \fDGFFtheta \big( \zd_i^\meshsize \big)
    }
    - 2\log\big(\lambda\meshsize\big)\,.
\end{align*}
The latter type of factor, converges to 
\begin{align*}
    \BigSFCorrFun{\domain}{\alpha(\lambda)}{\GroundPartner(x)}
\end{align*}
uniformly for $x$ on compacts of $\domain$ by Corollary~\ref{coro: converg of partner}.
The former type, when renormalised by $\meshsize^{-(k_a^{(i)}+\ell_b^{(j)})}$, converges uniformly for $(z_i,z_j)$ on compacts of $\Conf{2}{\domain}$ to
\begin{align*}
    \oint_{\Gamma^\eta_{i;a}}
    \!\!
    \frac{\cd \zeta^{\eta}_{i;a}}{2\pi\ii} \,
    \big( \zeta^{\eta}_{i;a} \big)^{-k_a^{(i)}}
    \dsqint_{\Gamma^\chi_{j;b}} \,
    \!\!
    \frac{\cd \zeta^{\chi}_{j;b}}{2\pi\ii} \,
    \big( \zeta^{\chi}_{j;b} \big)^{-\ell_b^{(j)}}
    \partial_{\zeta^{\eta}_{i;a}}
    \partial_{\zeta^{\chi}_{j;b}}
    \BigSFCorrFun{\domain}{\thectt(\lambda)}{
    \GroundEta \big( z_i+\zeta^{\eta}_{i;a} \big)\,
    \GroundChi \big( z_j+\zeta^{\chi}_{j;b} \big)
    }\,,
\end{align*}
where we took $\alpha = \alpha(\lambda)$ since the CFT correlation does not depend on this constant.
To see this, recall that for $\zu^\meshsize\in\meshsize\Zdiamond$ satisfying $\meshsize\zu^\meshsize\to\zeta \in \C$ as $\meshsize\downarrow0$, we have ---Proposition~\ref{prop: monomials} (Property~8)--- the convergence of the discrete monomials
\begin{align*}
    \meshsize^k \big(\zu^\meshsize\big)^{[k]}\;
    \xlongrightarrow{\;\;\; \meshsize\downarrow 0\;\;\;}
    \;\zeta^k
\end{align*}
uniformly for $\zeta$ on compacts of $\C\setminus\{0\}$.
We also need the classical result of the scaling-limit convergence of the double derivatives of the discrete (Dirichlet) Green's function to its continuum counterpart when renormalized by $\meshsize^2$---see \cite[Section~3.3]{ABK-DGFF_local_fields}.
Then, when each of the discrete differentials $\dd$ is renormalised by $\meshsize$, the expression $\eqref{eq: fDGFF corr fun}$ becomes a Riemann-sum approximation of double integrals.
Collecting the powers of $\meshsize$ we find that, if we renormalise~\eqref{eq: fDGFF corr fun} by
\begin{align*}
    {{\mathlarger{\meshsize}}^{\,-\sum_i\big( \sum_a k_{i;a} + \sum_b \ell_{i;b} \big)}},
\end{align*}
it converges uniformly on compacts to \eqref{eq: CFT corr fun} ---after a shift in the integration variables---.
Note that $\sum_a k_{i;a} + \sum_b \ell_{i;b}$ is the generalised scaling dimension of the field $\field_i$.
\end{proof}

\vspace{-10pt}
\subsection{Proof of Lemma~\ref{lemma: corrfun h-one and diss}}
\label{subsec: proof stat mech}
We now present the proof of the connection between the corelation functions of the fDGFF and those of the height-one field in presence of dissipations in the Abelian sandpile model.

\begin{proof}[Proof of Lemma~\ref{lemma: corrfun h-one and diss}]
Via the burning algorithm, recurrent configuration with dissipative sites at $\mathbf{W}$ are in bijection with spanning forests of $\ddomain{\meshsize}$ where $\bdry\ddomain{\meshsize}$ and each of the dissipative sites $\zdbis_i$ belongs in a different tree.
Equivalently, these configurations are in bijection with spanning trees of the graph obtained from $(V_{\ddomain{\meshsize}},E_{\ddomain{\meshsize}})$ by fusing all dissipative sites $\zdbis_i$ with the boundary vertex $\bdry\ddomain{\meshsize}$---let $\tilde{\mathcal{G}}$ denote this modified graph.
The Dirichlet Laplacian matrix $\tilde\Delta_{\ddomain{\meshsize}}^{\texttt{D}}$ of $\tilde{\mathcal{G}}$ is just $\tilde\Delta_{\ddomain{\meshsize}}^{\texttt{D}}$ with the rows and columns of the dissipative sites removed.

\noindent
Let us check that the Green's function of $\tilde{\mathcal{G}}$ is given by
\begin{align*}
    \tilde\Green (\xd,\xdbis)
    \,=\,
    \frac{1}{4\pi}
    \frac{
    \bigfDGFFCorrFun{\ddomain{\meshsize}}{
    \fDGFFxi(\xd)\fDGFFtheta(\xdbis)
    D(\zdbis_1)\cdots D(\zdbis_m)
    }}
    {\bigfDGFFCorrFun{\ddomain{\meshsize}}{
    D(\zdbis_1)\cdots D(\zdbis_m)
    }}\,,
\end{align*}
where, recall, that $D(\zdbis_i) = \tfrac{1}{4\pi}\fDGFFxi(\zdbis_i)\fDGFFtheta(\zdbis_i)$.
From Remark~\ref{rmk: 2n pt function fDGFF}, we see that it is symmetric, we have $\bigfDGFFCorrFun{\ddomain{\meshsize}}{
    \fDGFFxi(\xd)\fDGFFtheta(\xdbis)
    D(\zdbis_1)\cdots D(\zdbis_m)}
    \,=\, 0$
for $x=\zdbis_i$, and
\begin{align*}
    \bigfDGFFCorrFun{\ddomain{\meshsize}}{
    \fDGFFxi(\xd)\fDGFFtheta(\xdbis)
    D(\zdbis_1)\cdots D(\zdbis_m)}
    \,=\, & \,
    \bigfDGFFCorrFun{\ddomain{\meshsize}}{
    \fDGFFxi(\xd)\fDGFFtheta(\xdbis)}\,
    \bigfDGFFCorrFun{\ddomain{\meshsize}}{
    D(\zdbis_1)\cdots D(\zdbis_m)}
    \phantom{\bigg\vert}
    \\
    & \mspace{70mu} \phantom{\Bigg\vert}
    + \;
    \parbox{150pt}{\centering discrete harmonic terms \linebreak w.r.t.~$\xd$ at $\ddomain{\meshsize}\setminus \mathbf{W}$\,,}
\end{align*}
which straightforwardly leads to $\big[\tilde\Delta_{\ddomain{\meshsize}}^{\texttt{D}}\tilde\Green\big](\xd,\xdbis)=-\delta_{\xd,\xdbis}$ for $\xd,\xdbis\in\ddomain{\meshsize}\setminus\mathbf{W}$.

\noindent
At this point, similarly as in the case of the height-one field without the presence of dissipation fields, one can prove that we have
\begin{align*}
    \E^{\textnormal{ASM}}_{\ddomain{\meshsize}\setminus\mathbf{W}}
    \big[\,
    \heightOne(\zdbis_1)
    \cdots
    \heightOne(\zdbis_m)
    \big]
    \,=\,
    \BigfDGFFCorrFun{\ddomain{\meshsize}}{
    H_1(\zd_1) \cdots H_1(\zd_m)
    D(\zdbis_1) \cdots D(\zdbis_n)
    }\,,
\end{align*}
where, recall, we defined $H_1(\zd) \coloneqq \frac{1}{4\pi}\big[\prod_{e \sim \zd}\big(\dgrad\fDGFFxi(e)\dgrad\fDGFFtheta(e) - 1\big) - 1\big]$.

\noindent
To complete the proof, we just need to prove
\begin{align*}
    \BigfDGFFCorrFun{\ddomain{\meshsize}}{
    D(\zdbis_1) \cdots D(\zdbis_n)
    }
    \,=\,
    \frac
    {\big\vert
    \Recurrent{\ddomain{\meshsize}}
    \big\vert}
    {\big\vert
    \Recurrent{\ddomain{\meshsize}\setminus\mathbf{W}}
    \big\vert}
    \,.
\end{align*}
Recall that recurrent configurations are in bijection with spanning trees via the burning algorithm.
Then, on the one hand, using the matrix-tree theorem, we get
\begin{align}
\label{eq: statmech1}
    \frac{\big\vert 
    \Recurrent{\ddomain{\meshsize}\setminus\mathbf{W}}
    \big\vert}
    {\big\vert \Recurrent{\ddomain{\meshsize}} \big\vert}
    \,=\, & \,
    \frac{\vert \det \tilde\Delta_{\ddomain{\meshsize}}^{\texttt{D}} \,\vert}
    {\vert \det \Delta_{\ddomain{\meshsize}}^{\texttt{D}} \,\vert}
    \\ \nonumber
    \,=\, & \,
    \frac{\big\vert \det \big( \, \Delta_{\ddomain{\meshsize}}^{\texttt{D}} - \sum\nolimits_i \One^{\zdbis_i} \big)\big\vert}
    {\vert \det \Delta_{\ddomain{\meshsize}}^{\texttt{D}} \,\vert}
    \phantom{\Bigg\vert}
\end{align}
\begin{align*}    
    \,=\, & \,
    \frac{
    \vert \det \Delta_{\ddomain{\meshsize}}^{\texttt{D}} \,\vert
    \cdot
    \big\vert
    \det \big( \, \identity + \sum\nolimits_i \Green_{\ddomain{\meshsize}} \One^{\zdbis_i} \big)
    \big\vert }
    {\vert \det \Delta_{\ddomain{\meshsize}}^{\texttt{D}} \,\vert}
    \\ \nonumber
    \,=\, & \,
    \Big\vert
    \det \Big( \, \identity + \sum\nolimits_i \Green_{\ddomain{\meshsize}} \One^{\zdbis_i} \Big)
    \Big\vert
    \\ \nonumber
    \,=\, & \,
    \big\vert
    \det\nolimits_{\mathbf{W}} ( \, \identity + \Green_{\ddomain{\meshsize}} )
    \,\big\vert\phantom{\Bigg\vert}
    \\ \nonumber
    \,=\, & \,
    \sum_{\sigma\in\mathfrak{S}_n}
    (-1)^\sigma
    \prod_{i=\sigma(i)}
    \Big(
    1 + \Green_{\ddomain{\meshsize}} ( \zdbis_i , \zdbis_i )
    \Big)
    \prod_{j\neq\sigma(j)}
    \Green_{\ddomain{\meshsize}} ( \zdbis_j , \zdbis_{\sigma(j)} )
\end{align*}
where $\det_{\mathbf{W}}$ denotes the determinant of the submatrix with rows and columns indexed by $\mathbf{W}=\{\zdbis_1,\ldots,\zdbis_n\}$\,.
On the other hand, from the fermionic Wick's formula, it is straightforward to get get
\begin{align*}
    \BigfDGFFCorrFun{\ddomain{\meshsize}}{
    \fDGFFxi(\zdbis_1) \fDGFFtheta(\zdbis_1)
    \cdots
    \fDGFFxi(\zdbis_n) \fDGFFtheta(\zdbis_n)
    }
    = & \phantom{\bigg\vert}
    \BigfDGFFCorrFun{\ddomain{\meshsize}}{
    \fDGFFxi(\zdbis_1) \fDGFFtheta(\zdbis_1)
    }\!
    \BigfDGFFCorrFun{\ddomain{\meshsize}}{
    \fDGFFxi(\zdbis_2) \fDGFFtheta(\zdbis_2)
    \cdots
    \fDGFFxi(\zdbis_n) \fDGFFtheta(\zdbis_n)
    }
    \\
    & \mspace{40mu} \phantom{\Bigg\vert} + \,
    \sum_{\underset{\sigma(1)\neq 1}{\sigma \in \mathcal{S}_n}}
    (-1)^{\sigma}
    \prod_{i=1}^{n}
    \BigfDGFFCorrFun{\ddomain{\meshsize}}{
    \fDGFFxi(\zdbis_i) \fDGFFtheta(\zdbis_{\sigma(i)}
    }\!\!\!,
\end{align*}
from which one can compute
\begin{align}
\label{eq: statmech2}
    \BigfDGFFCorrFun{\ddomain{\meshsize}}{
    D(\zdbis_1) \cdots D(\zdbis_n)
    }
    \,=\, & \;
    \sum_{\sigma\in\mathfrak{S}_n}
    (-1)^\sigma
    \prod_{i=\sigma(i)}
    \bigg(
    1 +  \frac{1}{4\pi}\BigfDGFFCorrFun{\ddomain{\meshsize}}{
    \fDGFFxi(\zdbis_i) \fDGFFtheta(\zdbis_i)
    }\,
    \bigg)\,
    \times
    \\ \nonumber
    & \mspace{220mu}
    \phantom{\Bigg\vert}
    \times
    \prod_{j\neq\sigma(j)}
    \frac{1}{4\pi}
    \BigfDGFFCorrFun{\ddomain{\meshsize}}{
    \fDGFFxi(\zdbis_j) \fDGFFtheta(\zdbis_{\sigma(j)})
    }
    \!\!\!.
\end{align}
The complete the proof we just need to compare~\eqref{eq: statmech1} and \eqref{eq: statmech2} since we have the $2$-point function $\bigfDGFFCorrFun{\ddomain{\meshsize}}{
    \fDGFFxi(\zd) \fDGFFtheta(\zdbis)
    } = 4\pi \, \Green_{\ddomain{\meshsize}} (\zd,\zdbis)$.
\end{proof}

\vspace{-10pt}
\subsection{Example of a basis computation}
\label{subsec: ex of computation}
In this subsection we present a detailed computation for expressing an fDGFF local field in the basis \eqref{eq: basis}.
We consider such a computation for the open-edge field of the UST---its associated fDGFF local field is
\begin{align*}
    \mathsf{P}_{\OEright}
    \,=\,
    \frac{1}{4\pi}\,
    \dgrad\Fieldxi \big( \tfrac{1}{2} \big)
    \dgrad\Fieldtheta \big( \tfrac{1}{2} \big)
    \,-\, \frac{1}{2}
    \,+\, \Null\,.
\end{align*}
We consider first the linear local fields $\dgrad\Fieldxi \big( \tfrac{1}{2} \big) + \Null$ and $\dgrad\Fieldtheta \big( \tfrac{1}{2} \big) + \Null$ separately.
We have
\begin{align}
\label{eq: dgrad repre}
    \dgrad\Fieldxi \big( \tfrac{1}{2} \big)
    \,+\, \Null
    \, = \, & \;
    \Fieldxi ( 1 ) - \Fieldxi ( 0 ) 
    \,+\, \Null
    \phantom{\Big\vert}
\end{align}
\begin{align*}
    \, = \, & \;
    \Fieldxi ( 1 )
    \,-\, 
    \frac{1}{4}
    \Big[
    \Fieldxi(1) + \Fieldxi(\ii) + \Fieldxi(-1) + \Fieldxi(-\ii)
    \Big]
    \,+\, \Null
    \phantom{\bigg\vert}
    \\
    \, = \, & \;
    \frac{\Fieldxi(1)-\Fieldxi(-1)}{2}
    \,+\,
    \frac{1}{4}
    \Big[
    \Big(\Fieldxi(1) + \Fieldxi(\ii)\Big)
    -
    \Big(\Fieldxi(-1) + \Fieldxi(-\ii)\Big)
    \Big]
    \,+\, \Null\,
\end{align*}
where we used the fact that $\dlaplacian\Fieldxi(0)$ is null.

A priori, a linear local field with only $\Fieldxi$-insertions, can be a linear combination of the basis fields $\fDGFFGroundXi$, $\HolCurrModexi{-k}\fDGFFGround$ and $\AntiHolCurrModexi{-k}\fDGFFGround$ with $k\in\Zpos$.
However, from \cite{ABK-DGFF_local_fields} we have the following insight:
a zero-average linear local fields with $\Fieldxi$-inser\-tions in a ball of radius $1$ can be expressed as a linear combination of the local fields $\HolCurrModexi{-1} \fDGFFGround$, $\AntiHolCurrModexi{-1} \fDGFFGround$ and $\big( \HolCurrModexi{-2} + \AntiHolCurrModexi{-2} \big) \fDGFFGround$.

From Example~\ref{ex: rep of linear basis}, we can find the representatives
\begin{align*}
	\HolCurrModexi{-1} \fDGFFGround
	\,=\, & \;
	\frac{1}{2\pi}
	\sum_{\zu_\medial\in\Z^2_\medial}
	\ddeebar \zu_\medial^{[-1]}
	\pddee\Fieldxi (\zu_\medial)
	\,+\,
	\Null
	\,=\,
	\frac{1}{2}
	\bigg[
	\frac{\Fieldxi (1) - \Fieldxi (-1)}{2}
	-\ii \,
	\frac{\Fieldxi (\ii) - \Fieldxi (-\ii)}{2}
	\bigg]
	\,+\,
	\Null\,,
	\\
	\AntiHolCurrModexi{-1} \fDGFFGround
	\,=\, & \;
	\frac{1}{2\pi}
	\sum_{\zu_\medial\in\Z^2_\medial}
	\ddee \overline{\zu}_\medial^{[-1]}
	\pddeebar\Fieldxi (\zu_\medial)
	\,+\,
	\Null
	\,=\, 
	\frac{1}{2}
	\bigg[
	\frac{\Fieldxi (1) - \Fieldxi (-1)}{2}
	+\ii \,
	\frac{\Fieldxi (\ii) - \Fieldxi (-\ii)}{2}
	\bigg]
	\,+\,
	\Null\,,
\end{align*}
and
\begin{align*}
	\big( \HolCurrModexi{-2}
    +
    \AntiHolCurrModexi{-2} \big) \fDGFFGround
	\,=\, & \;
	\frac{1}{2\pi}
	\sum_{\zu_\medial\in\Z^2_\medial}
	\Big(
    \ddeebar \zu_\medial^{[-2]}
	\pddee\Fieldxi (\zu_\medial)
    +
    \ddee \overline{\zu}_\medial^{[-2]}
	\pddeebar\Fieldxi (\zu_\medial)
    \Big)
	\,+\,
	\Null
	\phantom{\Bigg\vert}
	\\
	\,=\, & \; \;
	\frac{1}{2}\;
	\bigg[
	\Big( \Fieldxi (1) + \Fieldxi (-1) \Big)
	-
	\Big( \Fieldxi (\ii) + \Fieldxi (-\ii) \Big)
	\bigg]
	\,+\,
	\Null\,,
	\phantom{\Bigg\vert}
\end{align*}
where we have used the values of the discrete residues of the first two negative-power discrete monomials from Proposition~\ref{prop: monomials}---these values are shown in Figure~\ref{fig: discrete residues}.

\begin{figure}[t!]
	\centering
	\begin{overpic}[scale=0.48, tics=10]{./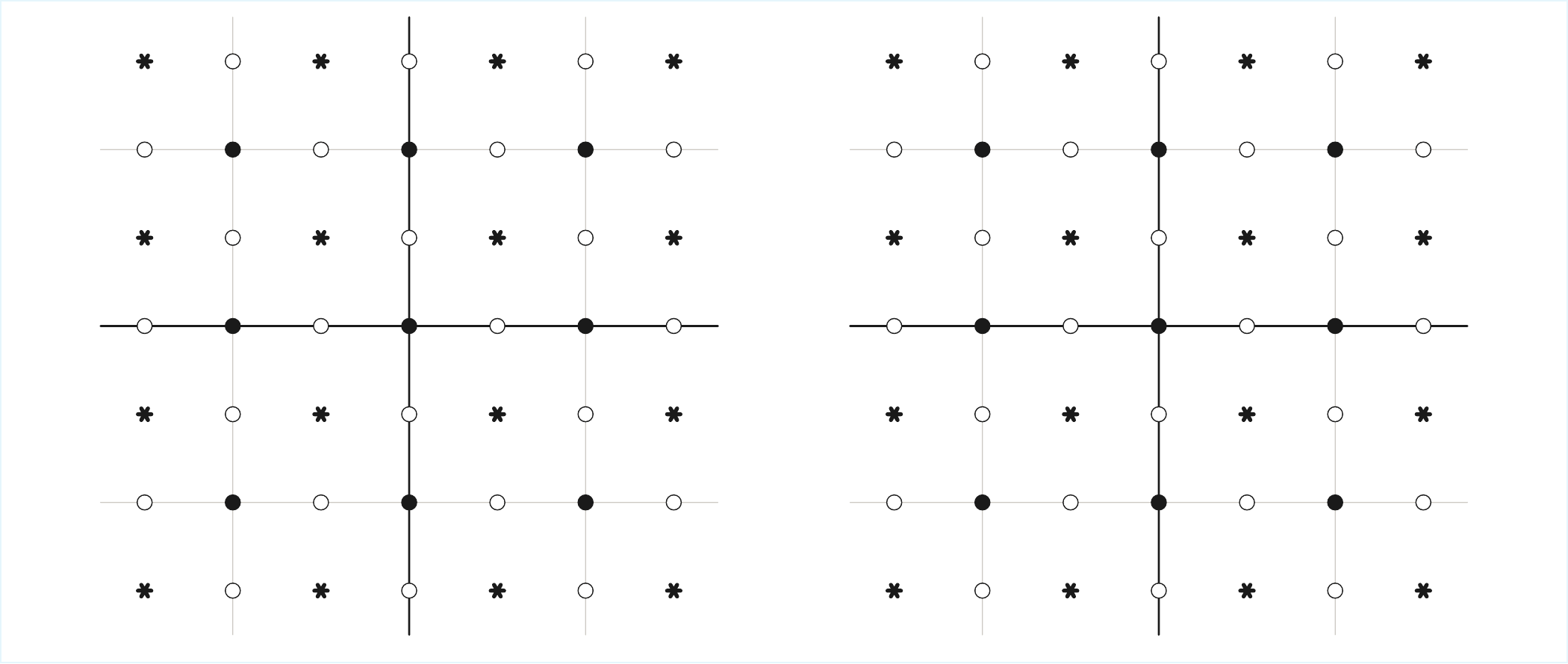}
		% square size = 11.25
		% hald square size = 5.625
		% left fig to right fig = 47.806
		\put(7.5,39.15){\niceBlue\scriptsize$0$}
		\put(7.5,33.525){\niceBlue\scriptsize$0$}
		\put(7.5,27.9){\niceBlue\scriptsize$0$}
		\put(7.5,22.275){\niceBlue\scriptsize$0$}
		\put(7.5,16.65){\niceBlue\scriptsize$0$}
		\put(7.5,11.025){\niceBlue\scriptsize$0$}
		\put(7.5,5.4){\niceBlue\scriptsize$0$}
		\put(13.125,39.15){\niceBlue\scriptsize$0$}
		\put(13.125,33.525){\niceBlue\scriptsize$0$}
		\put(13.125,27.9){\niceBlue\scriptsize$0$}
		\put(13.125,22.275){\niceBlue\scriptsize$0$}
		\put(13.125,16.65){\niceBlue\scriptsize$0$}
		\put(13.125,11.025){\niceBlue\scriptsize$0$}
		\put(13.125,5.4){\niceBlue\scriptsize$0$}
		\put(18.75,39.15){\niceBlue\scriptsize$0$}
		\put(18.75,33.525){\niceBlue\scriptsize$0$}
		\put(18.75,27.9){\niceRed\scriptsize$\frac{1}{8}$}
		\put(18.75,22.375){\niceRed\scriptsize$\frac{1}{4}$}
		\put(18.75,16.65){\niceRed\scriptsize$\frac{1}{8}$}
		\put(18.75,11.025){\niceBlue\scriptsize$0$}
		\put(18.75,5.4){\niceBlue\scriptsize$0$}
		\put(24.375,39.15){\niceBlue\scriptsize$0$}
		\put(24.375,33.525){\niceBlue\scriptsize$0$}
		\put(24.375,27.9){\niceRed\scriptsize$\frac{1}{4}$}
		\put(24.375,22.375){\niceRed\scriptsize$\frac{1}{2}$}
		\put(24.375,16.65){\niceRed\scriptsize$\frac{1}{4}$}
		\put(24.375,11.025){\niceBlue\scriptsize$0$}
		\put(24.375,5.4){\niceBlue\scriptsize$0$}
		\put(30,39.15){\niceBlue\scriptsize$0$}
		\put(30,33.525){\niceBlue\scriptsize$0$}
		\put(30,27.9){\niceRed\scriptsize$\frac{1}{8}$}
		\put(30,22.375){\niceRed\scriptsize$\frac{1}{4}$}
		\put(30,16.65){\niceRed\scriptsize$\frac{1}{8}$}
		\put(30,11.025){\niceBlue\scriptsize$0$}
		\put(30,5.4){\niceBlue\scriptsize$0$}
		\put(35.625,39.15){\niceBlue\scriptsize$0$}
		\put(35.625,33.525){\niceBlue\scriptsize$0$}
		\put(35.625,27.9){\niceBlue\scriptsize$0$}
		\put(35.625,22.275){\niceBlue\scriptsize$0$}
		\put(35.625,16.65){\niceBlue\scriptsize$0$}
		\put(35.625,11.025){\niceBlue\scriptsize$0$}
		\put(35.625,5.4){\niceBlue\scriptsize$0$}
		\put(41.25,39.15){\niceBlue\scriptsize$0$}
		\put(41.25,33.525){\niceBlue\scriptsize$0$}
		\put(41.25,27.9){\niceBlue\scriptsize$0$}
		\put(41.25,22.275){\niceBlue\scriptsize$0$}
		\put(41.25,16.65){\niceBlue\scriptsize$0$}
		\put(41.25,11.025){\niceBlue\scriptsize$0$}
		\put(41.25,5.4){\niceBlue\scriptsize$0$}
		\put(55.306,39.15){\niceBlue\scriptsize$0$}
		\put(55.306,33.525){\niceBlue\scriptsize$0$}
		\put(55.306,27.9){\niceBlue\scriptsize$0$}
		\put(55.306,22.275){\niceBlue\scriptsize$0$}
		\put(55.306,16.65){\niceBlue\scriptsize$0$}
		\put(55.306,11.025){\niceBlue\scriptsize$0$}
		\put(55.306,5.4){\niceBlue\scriptsize$0$}
		\put(60.931,39.15){\niceBlue\scriptsize$0$}
		\put(60.931,33.525){\niceBlue\scriptsize$0$}
		\put(58.931,27.9){\niceRed\scriptsize$-\frac{1}{16}$}
		\put(59.531,22.375){\niceRed\scriptsize$-\frac{1}{8}$}
		\put(58.931,16.65){\niceRed\scriptsize$-\frac{1}{16}$}
		\put(60.931,11.025){\niceBlue\scriptsize$0$}
		\put(60.931,5.4){\niceBlue\scriptsize$0$}
		\put(66.556,39.15){\niceBlue\scriptsize$0$}
		\put(64.556,33.625){\niceRed\scriptsize$-\frac{\ii}{16}$}
		\put(64.256,27.9){\niceRed\scriptsize$-\frac{1+\ii}{8}$}
		\put(65.156,22.375){\niceRed\scriptsize$-\frac{1}{4}$}
		\put(64.256,16.65){\niceRed\scriptsize$-\frac{1-\ii}{8}$}
		\put(65.956,11.125){\niceRed\scriptsize$\frac{\ii}{16}$}
		\put(66.556,5.4){\niceBlue\scriptsize$0$}
		\put(72.181,39.15){\niceBlue\scriptsize$0$}
		\put(70.781,33.625){\niceRed\scriptsize$-\frac{\ii}{8}$}
		\put(70.781,27.9){\niceRed\scriptsize$-\frac{\ii}{4}$}
		\put(72.181,22.275){\niceRed\scriptsize$0$}
		\put(72.181,16.65){\niceRed\scriptsize$\frac{\ii}{4}$}
		\put(72.181,11.125){\niceRed\scriptsize$\frac{\ii}{8}$}
		\put(72.181,5.4){\niceBlue\scriptsize$0$}
		\put(77.806,39.15){\niceBlue\scriptsize$0$}
		\put(75.806,33.625){\niceRed\scriptsize$-\frac{\ii}{16}$}
		\put(76.806,27.9){\niceRed\scriptsize$\frac{1-\ii}{8}$}
		\put(77.806,22.375){\niceRed\scriptsize$\frac{1}{4}$}
		\put(76.806,16.65){\niceRed\scriptsize$\frac{1+\ii}{8}$}
		\put(77.206,11.125){\niceRed\scriptsize$\frac{\ii}{16}$}
		\put(77.806,5.4){\niceBlue\scriptsize$0$}
		\put(83.431,39.15){\niceBlue\scriptsize$0$}
		\put(83.431,33.525){\niceBlue\scriptsize$0$}
		\put(82.831,27.9){\niceRed\scriptsize$\frac{1}{16}$}
		\put(83.431,22.375){\niceRed\scriptsize$\frac{1}{8}$}
		\put(82.831,16.65){\niceRed\scriptsize$\frac{1}{16}$}
		\put(83.431,11.025){\niceBlue\scriptsize$0$}
		\put(83.431,5.4){\niceBlue\scriptsize$0$}
		\put(89.056,39.15){\niceBlue\scriptsize$0$}
		\put(89.056,33.525){\niceBlue\scriptsize$0$}
		\put(89.056,27.9){\niceBlue\scriptsize$0$}
		\put(89.056,22.275){\niceBlue\scriptsize$0$}
		\put(89.056,16.65){\niceBlue\scriptsize$0$}
		\put(89.056,11.025){\niceBlue\scriptsize$0$}
		\put(89.056,5.4){\niceBlue\scriptsize$0$}
	\end{overpic}
	\centering
	% \vspace{-35pt}
	\caption{
		The values of the discrete residues of the first two negative-power discrete monomials.
		On the left, the values of $\frac{1}{2\pi}\ddeebar\zu^{[-1]}$;
		on the right, the values of $\frac{1}{2\pi}\ddeebar\zu^{[-2]}$.}
	\label{fig: discrete residues}
\end{figure}

Using the expression in \eqref{eq: dgrad repre}, it is clear that we have
\begin{align*}
	\dgrad\Fieldxi \big( \tfrac{1}{2} \big)
	\,+\, \Null
	\,=\,
	\big( \HolCurrModexi{-1} + \AntiHolCurrModexi{-1}\big)
	\fDGFFGround
	+
	\frac{1}{2}\,
	\big( \HolCurrModexi{-2} + \AntiHolCurrModexi{-2}\big)
	\fDGFFGround\,.
\end{align*}
An identical computation yields
\begin{align*}
	\dgrad\Fieldtheta \big( \tfrac{1}{2} \big)
	\,+\, \Null
	\,=\,
	\big( \HolCurrModetheta{-1} + \AntiHolCurrModetheta{-1}\big)
	\fDGFFGround
	+
	\frac{1}{2}\,
	\big( \HolCurrModetheta{-2} + \AntiHolCurrModetheta{-2}\big)
	\fDGFFGround\,.
\end{align*}
Now we just need to use Lemma~\ref{lemma: normal order of basis} to write
\begin{align}
\label{eq: computation1}
	&
	\noQuo{
	\Big(\dgrad\Fieldxi \big( \tfrac{1}{2} \big) + \Null \Big)
	\otimes
	\Big(\dgrad\Fieldtheta \big( \tfrac{1}{2} \big) + \Null \Big)
	} \, = \,
	\big( \HolCurrModexi{-1} + \AntiHolCurrModexi{-1}\big)
	\big( \HolCurrModetheta{-1} + \AntiHolCurrModetheta{-1}\big)
	\fDGFFGround
	\\ \nonumber
	& \mspace{190mu}
	\, + \,
	\frac{1}{2}\,
	\Big[
	\big( \HolCurrModexi{-1} + \AntiHolCurrModexi{-1}\big)
	\big( \HolCurrModetheta{-2} + \AntiHolCurrModetheta{-2}\big)
	+
	\big( \HolCurrModexi{-2} + \AntiHolCurrModexi{-2}\big)
	\big( \HolCurrModetheta{-1} + \AntiHolCurrModetheta{-1}\big)
	\Big]
	\fDGFFGround
	\phantom{\bigg\vert}
	\\ \nonumber
	& \mspace{190mu}
	\, + \,
	\frac{1}{4}\,
	\big( \HolCurrModexi{-2} + \AntiHolCurrModexi{-2}\big)
	\big( \HolCurrModetheta{-2} + \AntiHolCurrModetheta{-2}\big)
	\fDGFFGround\,.
\end{align}
On the other hand, by the definition of the normal ordering, we have
\begin{align}
\label{eq: computation2}
	\noQuo{
	\Big(\dgrad\Fieldxi \big( \tfrac{1}{2} \big) + \Null \Big)
	\otimes
	\Big(\dgrad\Fieldtheta \big( \tfrac{1}{2} \big) + \Null \Big)
	}
	\, = \, & \;
	\no{
	\dgrad\Fieldxi \big( \tfrac{1}{2} \big)
	\dgrad\Fieldtheta \big( \tfrac{1}{2} \big)
	}
	\, + \, \Null
	\phantom{\Big\vert}
	\\ \nonumber
	\, = \, & \;
	\dgrad\Fieldxi \big( \tfrac{1}{2} \big)
	\dgrad\Fieldtheta \big( \tfrac{1}{2} \big)
	-
	\wick{
	\dgrad \c\Fieldxi \big( \tfrac{1}{2} \big)
	\dgrad \c\Fieldtheta \big( \tfrac{1}{2} \big)
	}
	\, + \, \Null
	\phantom{\Big\vert}
	\\ \nonumber
	\, = \, & \;
	\dgrad\Fieldxi \big( \tfrac{1}{2} \big)
	\dgrad\Fieldtheta \big( \tfrac{1}{2} \big)
	-
	\frac{4\pi}{2}
	\, + \, \Null \,,
	\\ \nonumber
	\, = \, & \;
	4\pi \, \mathsf{P}_{\OEright}
\end{align}
where we have used the values of the full-plane discrete Green's function $\Green_{\Z^2}$ to compute the Wick's contractions.
Comparing~\eqref{eq: computation1} and \eqref{eq: computation2}, it becomes clear that we have
\begin{align*}
	&
	\mathsf{P}_{\OEright} \, = \,
	\frac{1}{4\pi}\,
	\big( \HolCurrModexi{-1} + \AntiHolCurrModexi{-1}\big)
	\big( \HolCurrModetheta{-1} + \AntiHolCurrModetheta{-1}\big)
	\fDGFFGround
	\\
	& \mspace{120mu}
	\, + \,
	\frac{1}{8\pi}\,
	\Big[
	\big( \HolCurrModexi{-1} + \AntiHolCurrModexi{-1}\big)
	\big( \HolCurrModetheta{-2} + \AntiHolCurrModetheta{-2}\big)
	+
	\big( \HolCurrModexi{-2} + \AntiHolCurrModexi{-2}\big)
	\big( \HolCurrModetheta{-1} + \AntiHolCurrModetheta{-1}\big)
	\Big]
	\fDGFFGround
	\phantom{\bigg\vert}
	\\
	& \mspace{120mu}
	\, + \,
	\frac{1}{16\pi}\,
	\big( \HolCurrModexi{-2} + \AntiHolCurrModexi{-2}\big)
	\big( \HolCurrModetheta{-2} + \AntiHolCurrModetheta{-2}\big)
	\fDGFFGround\,.
\end{align*}

\newpage
% \newgeometry{
%     tmargin=3.5cm,
% 	bmargin=3.5cm,
% 	lmargin=2.0cm,
% 	rmargin=2.0cm}
\titleformat{\section}{\normalfont\Large\bfseries}{\thesection}{0pt}{}

\end{document}